\documentclass[english,aps,prb,showpacs,superscriptaddress,floatfix,longbibliography, notitlepage,reprint,unicode=true,colorlinks=true,citecolor=Blue,linkcolor=RubineRed,urlcolor=Blue]{revtex4-2}
\usepackage[T1]{fontenc}
\usepackage{mathrsfs}
\usepackage{bm}
\usepackage{amsmath}
\usepackage{amsthm}
\usepackage{amssymb}
\usepackage{graphicx}
\usepackage{booktabs}
\usepackage[usenames,dvipsnames]{xcolor}
\usepackage[]{hyperref}
\makeatletter

\usepackage{braket}
\usepackage{txfonts} 

\theoremstyle{definition}
\newtheorem{theorem}{Theorem}
\newtheorem{prp}[theorem]{Principle}
\newtheorem{obs}[theorem]{Observation}
\date{\today}

\usepackage{babel}
\begin{document}
\title{Asymptotic Metrological Scaling and Concentration in Chaotic Floquet Dynamics}
\author{Astrid J.~M.~Bergman}
\affiliation{Department of Physics, KTH Royal Institute of Technology, Stockholm, Sweden.}
\author{Yunxiang Liao}
\email{liao2@kth.se}
\affiliation{Department of Physics, KTH Royal Institute of Technology, Stockholm, Sweden.}
\author{Jing Yang}
\email{jing.yang.quantum@zju.edu.cn}
\affiliation{Institute of Fundamental and Transdisciplinary Research, Institute of Quantum Sensing, and Institute for Advanced Study in Physics, Zhejiang University, Hangzhou 310027, China.}
\affiliation{Nordita, KTH Royal Institute of Technology and Stockholm University, Hannes Alfv\'ens v\"ag 12, 106 91 Stockholm, Sweden.}
\begin{abstract}
We study quantum sensing with Floquet chaotic dynamics generated by
Haar random unitary gates. The metrological resources consist of three
ingredients: A given initial state, a set number of Haar random
unitary gates and the sensing gates. There are two natural ways of
organizing the resources: the first one is the ``control'' protocol,
where the random unitary gates act as random controls and intertwine
with the deterministic sensing gates and the second one is the
``state-preparation'' protocol, where random unitary gates play
the role of preparing the metrological useful states. In each
protocol, we consider both global Haar random unitary gates and a set of local two-site Haar random unitary
gates that forms a Floquet random quantum circuit (RQC) respectively. We find linear, shot-noise scaling of the metrological precision, quantified
by the quantum Fisher information (QFI), in the asymptotic limit when the Hilbert space dimension
becomes large, and quantum advantages beyond linear scaling
in the non-asymptotic regimes. We also bound the fluctuation of the
QFI using concentration inequalities. Our analytical findings are corroborated
by numerical simulations. Finally, along the way of analyzing
the precision limit, we prove an empirical
conjecture of RQC: In the asymptotic limit of large local Hilbert
space dimension, the Floquet operator of a Floquet RQC essentially
behaves like a global unitary operator.
\end{abstract}
\maketitle

\section{Introduction}

Precision measurement plays the fundamental role in physics. The performance
of a measurement apparatus can be refined by cooling the temperature
down and hence reducing the technical noise. However, at the very
fundamental level, quantum mechanics leads to some intrinsic
noise in the precision measurement, due to the randomness of quantum
measurements. Quantum metrology and sensing is the subject on leveraging
quantum coherences and entanglement to reduce the quantum noise and
enhance the precision of measuring some physical quantities, such
as the magnetic field, frequency etc~\citep{pezz`e2018quantum,*degen2017quantum}. 

Recent progress in quantum metrology has witnessed a number of works
on quantum metrology with many-body quantum systems~\citep{montenegro2024reviewquantum},
where the signal is encoded into a many-body state through many-body dynamics or a thermal Gibbs state~\citep{yang2022,chu2023strongquantum,yang2022superheisenberg,shi2024universal,puig2025fromdynamical,abiuso2025fundamental}.
Depending on resources, including the initial entanglement and the availability
of many-body controls, the quantum Fisher information (QFI), characterizing
the precision of the signal, can display different scaling behaviors
with respect to the time and the number of sensors. In the absence of interactions between sensors, the QFI scales linearly in the number of sensors if the initial state is unentangled and quadratically if the initial state is a GHZ-state. These limits are known as the shot noise limit and the Heisenberg limit respectively~\citep{giovannetti2004quantumenhanced,*giovannetti2006quantum}.
Alternatively, one can also exploit the resources of many-body interactions
to engineer the initial state to an entangled state then encode the
signal via a non-interacting Hamiltonian~\citep{chu2023strongquantum}.
The circuit versions of both protocols are shown in Fig.~\ref{fig:metrology-protocols}(a)
and (b) respectively.
\begin{figure}
\centering
\includegraphics[scale=0.8]{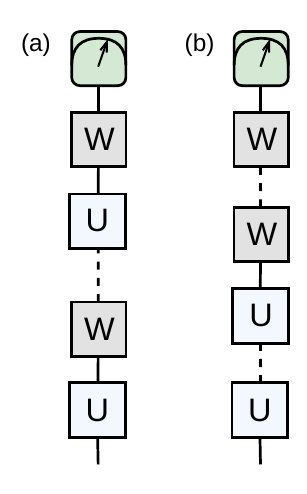}

\caption{\label{fig:metrology-protocols}Two protocols used in quantum metrology, where $W$ is the sensing gate depending on the signal while $U$ plays the role of the control gate in protocol (a) and the state-preparation gate in protocol (b).
(a) Control protocol where the evolution operator is $\mathcal{U}_{\mathrm{ctr}}(t)=(WU)^{t}$.
(b) State-preparation protocol where the evolution operator
is $\mathcal{U}_{\mathrm{sp}}(t)=W^{t}U^{t}$.}
\end{figure}
However, calculating the scaling of the QFI for exact many-body dynamics is notoriously difficult both analytically and numerically, except for a few integrable models. Instead, one can consider the average of the QFI taken over an ensemble of states or dynamics.  By averaging over the Haar-random unitary gates of the Circular Unitary Ensemble (CUE), we can bypass the microscopic complexities of specific Hamiltonians and instead derive universal scaling laws for the precision limits in chaotic or scrambling systems.
This transition to the ensemble averaged QFI allows us to leverage powerful tools from random matrix theory~\cite{haake2001quantum,DAlessio2016,Wigner_1951,*Wigner_1955,*Wigner_1957,*Wigner_1958,Dyson-I,*Dyson1962}. Furthermore, as we will see, this statistical approach is motivated by the fact that in the limit where the dimension of the Hilbert space is large, the QFI often exhibits concentration of measure. Therefore, the ensemble averaged QFI  becomes highly peaked around its mean value, rendering it a robust proxy for the performance of typical experimental realizations. More specifically, we consider Floquet dynamics in two cases; where
the Floquet operator is formed by global Haar random unitary gates, dubbed the case of Random Matrix Model (RMM); and
a set of local two-site Haar random unitary gates, dubbed the case of Random Quantum Circuits (RQC). Note that while local Hamiltonian dynamics can be approximated as a continuous limit of deep quantum circuits, the RQC case considers discrete layers of gates that are not necessarily close to the identity. This digitalized approach, is different from previous works on quantum sensing with chaotic Hamiltonian dynamics~\cite{li2023improving,fiderer2018quantum,wang2011chaosand,pavlov2024randommatrix}  and is particularly relevant for quantum sensing in noisy intermediate-scale quantum devices, where variational algorithms~\citep{zheng2022preparation,kaubruegger2021quantum} seek to optimize circuit parameters for metrological gain.

RQC (see review Ref~\cite{Fisher2023} and references therein) are built on the random matrix theory, which conjectures that the statistical properties of energy eigenvalues and eigenstates within a small enough window in  quantum chaotic systems can be modeled by a suitably chosen random matrix models. While numerous evidence show that RMM successfully captures the universal spectral statistics of chaotic many-body interacting systems~\cite{Bohigas1984,chan2018solution,Prosen-SFF1,Prosen-Boson,Prosen-Fermion,Chalker-2,Prosen-SFF2,Prosen-SFF3,Huse}, it neglects an important feature of physical many-body interacting systems - the spatial locality of interactions~\cite{DAlessio2016}. Random quantum circuits address this limitation by generalizing the ideas of RMM in a way that preserves locality. They are usually composed of sequences of random unitary gates acting on qubits or qudits on nearby sites, and sometimes  also intersected with local measurements. Random quantum circuits exhibit many universal properties observed across a wide of range of more realistic quantum many-body interacting systems, and serve as a powerful framework both in the area of condensed matter and quantum information. For example, they have been widely employed to investigate information scrambling~\cite{Nahum2018,*vonKeyserlingk2018,*Keyserlingk-2,Chalker-8,Huse-2018,Bertini}, entanglement dynamics~\cite{Nahum2017,Zhou2019}, thermalization~\cite{Chalker-ETH,Chalker-4,liao2022field,Chalker-9}, and localization~\cite{Chalker-3,Chalker-5,Chalker-6,Chalker-7,Huse-1,Huse-2} in many-body quantum interacting systems.

For the RMM case, in the asymptotic limit where the dimension of the
Hilbert space is large, the QFI for the control protocol and the state-preparation
protocol scale as $(\mathrm{Tr}H_{0}^{2})t/N$ and $(\mathrm{Tr}H_{0}^{2})t^{2}/N$,
respectively, where $t$ is the discrete time, $N$ is the dimension of the global
Hilbert space, and $H_{0}$ is the Hamiltonian that generates the
sensing gate whose trace is of the order of $N$. For the RQC case,
in the limit $q\to\infty$, the QFI for the control protocol and
the state-preparation protocol scales as $\mathrm{Tr}(h_0^{2})tL/q$
and $\mathrm{Tr}(h_0^{2})t^{2}L/q$, respectively, where again, 
$t$ is the discrete time, $L$ is the system size, $q$ is the local
Hilbert space, and $h_0$ is the local single-site Hamiltonian that generates the
local sensing gate whose trace is of the order of $q$. The results
are summarized in Table~\ref{tab:results-summary} and see Sec.~\ref{sec:QFI-scaling} for details. We further bound
the fluctuation of the QFI using concentration inequalities and our analytics
are confirmed by numerical simulations. 

\begin{table}
\centering
\begin{tabular}{ccc}
\toprule 
 & Control protocol & State-preparation protocol \\
 \midrule
RMT ($N\to\infty$) & $\mathrm{Tr}(H_{0}^{2})t/N$ & $\mathrm{Tr}(H_{0}^{2})t^{2}/N$ \\
RQC ($q\to\infty$) & $\mathrm{Tr}(h_0^{2})tL/q$ & $\mathrm{Tr}(h_0^{2})t^{2}L/q$ \\
\bottomrule
\end{tabular}

\caption{\label{tab:results-summary}The scaling of the average QFI for the control
protocol and the state-preparation protocol, shown in Fig.~\ref{fig:metrology-protocols} } 

\end{table}

On the technical side, the calculation of the ensemble average is
performed using the Weingarten formula and its diagrammatic representation.
The Weingarten calculus \cite{Weingarten,Samuel,Mello,collins2003,*collins2006,*collins2009,*collins2021} is a method for evaluating integrals of polynomials in matrix elements over compact groups with respect to the Haar measure. 
In particular, a diagrammatic approach was introduced to organize in a systematic way the combinatorial in the Weingarten calculus  \cite{brouwer1996diagrammatic}. This diagrammatic method was then further developed for random quantum circuits with Haar distributed unitary gates for various properties \cite{chan2018solution}. We employ this approach for the computation of the QFI for the Floquet RQCs depicted in Fig.~\ref{fig:RQC_notation}, and formulate a mathematical representation of the many-body diagrammatic technique for a wide range of random quantum circuits.
Here, we give a mathematical representation of the Weingarten
calculus for Floquet RQC. Recently, it was found that when the dimension
of local Hilbert space is large enough, locality in local Floquet
RQC disappears and the behavior of the characteristics of quantum
chaos, such as the spectral form factor and the eigenstate correlation function, can be captured by RMM~\citep{chan2018solution,liao2022effective, liao2022field}.
We prove this empirical belief by leveraging the many-body Weingarten
calculus and show that the Floquet operator of a Floquet RQC essentially
behaves like a global unitary gate drawn from the CUE in the asymptotic
limit of large dimension of the local Hilbert space. Furthermore,
we also show that the ensemble average of a trace product involving
the Floquet operator in the Floquet RQC factorizes into a single-site
trace product, where the latter bears a convenient diagrammatic representation
from RMM, under the constraint that the single-site diagrams must
be compatible with each other. These two observations play an essential role in obtaining the scaling of the QFI in the
RQC case, and holds not
only for brickwork Floquet RQC, but also generically for Floquet RQC with arbitary configurations of the nearest-neigbor two-site random gates.

This paper is organized as follows: In Sec.~\ref{sec:QFI-background}, we introduce the essential background for quantum sensing and metrology. In Sec.~\ref{sec:Weingarten-Pedagogical},  we  first review the Weingarten calculus and its diagrammatic representation. We then develop their many-body generalization for RQC, based on which we prove rigorously that local Haar random gates behave the same as a global CUE when the local Hilbert space dimension goes to infinity. In Sec.~\ref{sec:QFI-scaling}, we analyze the scaling the QFI in the RMM and RQC cases for both the control and state-preparation protocols. In Sec.~\ref{sec:QFI-fluctuations}, we show that the QFI becomes deterministic in the asymptotic limit when the Hilbert space dimension becomes infinite. In Sec.~\ref{sec:Dis-Con}, we discuss several aspects of the problem and conclude with future directions. 

\section{The quantum Cramer-Rao bound and the quantum Fisher information\label{sec:QFI-background}}

The idea of local quantum metrology is to encode the quantity or signal
to be measured, denoted as $\theta$, onto a quantum state, which
leads to the a parameter-dependent state $\rho_{\theta}$. Afterwards,
one performs multiple rounds of identical positive-operator-valued-measure
(POVM) measurements $\mathrm{E}\equiv\{E_{\omega}\}_{\omega=1}^{|\Omega|}$
to extract the information about the parameter. The precision of the
measurements is characterized by the so-called classical Cramer-Rao
bound, 
\begin{equation}
\check{\theta}(\omega_{1},\,\omega_{2},\cdots\omega_{\nu})\ge\frac{1}{\nu F^{C}(\rho_{\theta},\,\mathrm{E})},
\end{equation}
where $F^{C}(\rho_{\theta},\,\mathrm{E})=\sum_{\omega}[\partial_{\theta}p(\omega|\theta)]^{2}/p(\omega|\theta)$
is the classical Fisher information (CFI) and $p(\omega|\theta)=\mathrm{Tr}(\rho_{\theta}E_{\omega})$.
The classical Cramer-Rao bound is asymptotically saturable by the
maximum likelihood estimator in the limit $\nu\to\infty$. Further
optimizing over all possible POVM measurements leads to the so-called
quantum Fisher information (QFI) as an upper bound of the CFI, i.e.,
\begin{equation}
F^{C}(\rho_{\theta},\,\mathrm{E})\leq F^{Q}(\rho_{\theta})=\mathrm{Tr}\left(\rho_{\theta}L_{\theta}^{2}\right),\label{eq:QCRB}
\end{equation}
where $L_{\theta}$ is the symmetric logarithmic derivative defined
as $(\rho_{\theta}L_{\theta}+L_{\theta}\rho_{\theta})/2=\partial_{\theta}\rho_{\theta}$.
It should be noted that for the estimation of a single parameter,
the bound~(\ref{eq:QCRB}) is always achievable~\citep{braunstein1994statistical}. 

In particular, for pure states, the QFI simplifies to
\begin{equation}
F^{Q}(\ket{\psi_{\theta}})=4\left(\braket{\partial_{\theta}\psi_{\theta}|\partial_{\theta}\psi_{\theta}}-|\braket{\psi_{\theta}|\partial_{\theta}\psi_{\theta}}|^{2}\right).
\end{equation}
For unitary encoding with pure initial states, i.e., $\ket{\psi_{\theta}}=\mathcal{U}_{\theta}\ket{\psi_{0}}$,
we find 
\begin{equation}
F^{Q}(\mathcal{U}_{\theta},\,\ket{\psi_{0}})=4\mathrm{Var}[G]_{\ket{\psi_{0}}},\label{eq:QFI-def-in-G}
\end{equation}
where $G\equiv\mathrm{i}\partial_{\theta}\mathcal{U}_{\theta}^{\dagger}\mathcal{U}_{\theta}$
is known as the metrological generator~\citep{boixo2007generalized} and Var$[\bullet]$ denotes the variance of an operator.
Throughout this work, we shall consider the case of unitary encoding
where the encoding gates contain Haar random unitary gates. We investigate
the scaling behavior of the ensemble averaged QFI.
As we will show subsequently, for pure initial states $\langle F^{Q}(\mathcal{U}_{\theta},\,\ket{\psi_{0}})\rangle$
is independent of the initial state, which we denote as $\langle F^{Q}(\mathcal{U}_{\theta})\rangle$.

Then, for any initial mixed state $\rho_0$ with the the spectral decomposition $\rho_{0}=\sum_{n}p_{n0}\ket{\psi_{n0}}\bra{\psi_{n0}}$, thanks to the convexity property of the QFI~\citep{petz2011introduction}
\begin{equation}
F^{Q}(\mathcal{U}_{\theta},\,\rho_{0})\leq\sum_{n}p_{n0}F^{Q}(\mathcal{U}_{\theta},\,\ket{\psi_{n0}}),
\end{equation}
we find 
\begin{equation}
\langle F^{Q}(\mathcal{U}_{\theta},\,\rho_{0})\rangle\leq\sum_{n}p_{n0}\langle F^{Q}(\mathcal{U}_{\theta},\,\ket{\psi_{n0}})\rangle=\langle F^{Q}(\mathcal{U}_{\theta})\rangle.
\end{equation}
Thus, $\langle F^{Q}(\mathcal{U}_{\theta})\rangle$ also provides
the bounds of the ensemble averaged QFI for all initial states. 

\section{The Weingarten calculus and its many-body generalization for Floquet RQC\label{sec:Weingarten-Pedagogical}}
In this section, we first review the Weingarten calculus, which allows one to compute the ensemble average over Haar random matrices, and a diagrammatic method developed by Brouwer and Beenaker~\cite{brouwer1996diagrammatic} to bookeep the ensemble-averaged results. Then we propose the generalization of the Weingarten formula to the case of Floquet RQC, which allows one to compute the ensemble average of a set of independent two-site local Haar random unitaries. Based on the many-body Weingarten formula, we also give a generalization of the diagrammatic methods of Brouwer and Beenaker, which can be viewed as complimentary to the one proposed by Chan, De Luca and Chalker~\cite{chan2018solution}. Finally, we also prove rigorously the conjecture that in the limit of asymptotically large local Hilbert space dimension, local Haar random gates can be approximed by a global Haar random gate.

\subsection{The Weingarten formula and its diagrammatic representation in RMM\label{sec:Weingarten-RMT}}

Computing the ensemble average of the QFI requires the evaluation
of the average of polynomials of unitary matrix elements over the Haar measure.
This can be done via the celebrated Weingarten formula~\citep{weingarten1978asymptotic,samuel1980unintegrals,brouwer1996diagrammatic},
\begin{equation}
\langle U_{a_{1}b_{1}}\cdots U_{a_{t}b_{t}}U_{\alpha_{1}\beta_{1}}^{*}\cdots U_{\alpha_{t}\beta_{t}}^{*}\rangle=\sum_{P,\,P^{\prime}}V_{P^{-1}P^{\prime}}\prod_{\xi=1}^{t}\delta_{a_{\xi}\alpha_{P(\xi)}}\delta_{b_{\xi}\beta_{P^{\prime}(\xi)}},\label{eq:Wg}
\end{equation}
where $U$ is unitary matrix from the CUE, and the brackets represent ensemble averaging over the CUE. $P$ and $P^{\prime}$ are permutations of the numbers $\{1,2,\cdots t\}$.
$V_{P^{-1}P^{\prime}}$ is known as the Weingarten
function and only depends on the length of the cycles of the permutation
$P^{-1}P^{\prime}$. Note that the ensemble average over a quantity consisting of unequal number of $U$'s and $U^*$'s always vanishes due to the invariance of the Haar measure.

For practical applications, one may need to calculate a linear map
$f(UA_{1},\,\cdots,\,UA_{t},\,U^{\dagger}B_{1},\,\cdots,\,U^{\dagger}B_{t})$,
where $f:\mathcal{L}(\mathcal{H})\times\cdots\times\mathcal{L}(\mathcal{H})\to\mathbb{C}$
and $A_{j}$ and $B_{j}$ are referred to as operators. Here $f$
is linear in its input and hence a polynomial of $U$ of degree $t$.
Without loss of generality, the linear map $f(UA_{1},\,\cdots,\,UA_{t},\,U^{\dagger}B_{1},\,\cdots,\,U^{\dagger}B_{t})$
can be represented as a product of $g$ groups of traces, i.e.,
\begin{align}
 & f\left(UA_{1},\,\cdots,\,UA_{t},\,U^{\dagger}B_{1},\,\cdots,\,U^{\dagger}B_{t}\right)\nonumber \\
= & \prod_{j=1}^{g}\mathrm{Tr}\left(UA_{1}^{(j)}\cdots UA_{n_{j}}^{(j)}U^{\dagger}B_{1}^{(j)}\cdots U^{\dagger}B_{m_{j}}^{(j)}\right)\nonumber \\
= & \prod_{j=1}^{g}\left[U_{a_{1}^{(j)}b_{1}^{(j)}}\cdots U_{a_{n_{j}}^{(j)}b_{n_{j}}^{(j)}}U_{\alpha_{1}^{(j)}\beta_{1}^{(j)}}^{*}\cdots U_{\alpha_{m_{j}}^{(j)}\beta_{m_{j}}^{(j)}}^{*}\right.\nonumber \\
\times & [A_{1}^{(j)}]_{b_{1}^{(j)}a_{2}^{(j)}}[A_{2}^{(j)}]_{b_{2}^{(j)}a_{3}^{(j)}}\cdots[A_{n_{j}}^{(j)}]_{b_{n_{j}}^{(j)}\beta_{1}^{(j)}}\nonumber \\
\times & \left.[B_{1}^{(j)}]_{\alpha_{1}^{(j)}\beta_{2}^{(j)}}[B_{2}^{(j)}]_{\alpha_{2}^{(j)}\beta_{3}^{(j)}}\cdots[B_{m_{j}}^{(j)}]_{\alpha_{m_{j}}^{(j)}a_{1}^{(j)}}\right],\label{eq:f-rep}
\end{align}
where all indices appear twice, implying the Einstein summation convention, and
$\{n_{j}\}_{j=1}^{g}$ and $\{m_{j}\}_{j=1}^{g}$ satisfy
\begin{equation}
\sum_{j=1}^{g}n_{j}=\sum_{j=1}^{g}m_{j}=t.
\end{equation}

Direct application of the Weingarten formula~(\ref{eq:Wg}) to evaluate
the last line in Eq.~(\ref{eq:f-rep}) yields $(t!)^{2}$ terms,
which becomes intractable for large $t$ or large $g$. There is a
diagrammatic method developed by Brouwer and Beenakker~\citep{brouwer1996diagrammatic}
that can help track each term in the summation of Eq.~\eqref{eq:Wg}. With the diagrammatic methods, we
find 
\begin{equation}
\langle\prod_{j=1}^{g}\mathrm{Tr}\left(UA_{1}^{(j)}\cdots UA_{n_{j}}^{(j)}U^{\dagger}B_{1}^{(j)}\cdots U^{\dagger}B_{m_{j}}^{(j)}\right)\rangle=\sum_{PP^{\prime}}V_{P^{-1}P^{\prime}}T_{P,\,P^{\prime}},\label{eq:Wg-ProdTr}
\end{equation}
where $V_{P^{-1}P^{\prime}}$ is the Weingarten function mentioned previously,
and the derivation of and expression for $T_{P,\,P^{\prime}}$ can be found in Appendix~\ref{app:RMT-Diagrammatics}.

We illustrate the diagrammatic method by calculating the QFI for the RMM case at $t=1$. The control and state preparation protocols (Fig.~\ref{fig:metrology-protocols}) are identical when $t=1$, and the QFI can be calculated as follows:
\begin{align}
    F^Q(1)= 4\big( \underbrace{\text{Tr}[\rho_0 U^\dagger H_0^2 U]}_{(*)} - \underbrace{\text{Tr}[\rho_0 U^\dagger H_0 U]\text{Tr}[\rho_0 U^\dagger H_0 U]}_{(**)} \big),
    \label{eq:QFIt1}
\end{align}
where the generic expressions for the QFI for both protocols are given by Eq.~\eqref{eq:ctr-FQ} and Eq.~\eqref{eq:sp-FQ} respectively, $\rho_0$ is density operator for the initial pure state and $H_0$ is the sensing Hamiltonian. Here,  the term $(*)$ corresponds to $g=1$ and $(**)$ corresponds to $g=2$, both of which can be calculated separately using the diagrammatic method.

The diagrammatic method typically consists of the following three steps. A rigorous mathematical justification of the diagrammatic procedures below can be found in Appendix~\ref{app:RMT-Diagrammatics}.
\begin{itemize}
    \item First, express the linear functional $f$ in the canonical form $\prod_{j=1}^{g}\mathrm{Tr}(UA_{1}^{(j)}\cdots UA_{n_{j}}^{(j)}U^{\dagger}B_{1}^{(j)}\cdots U^{\dagger}B_{m_{j}}^{(j)})$, i.e., as in Eq.~\eqref{eq:QFIt1}.
    We can then draw the corresponding \textit{pre-contraction}
    diagrams, where $U$ and $U^{\dagger}$ are represented by dashed lines
    and star-dashed lines respectively and $A_{j}$ and $B_{j}$ are represented by
    the directed thick lines as shown in Fig.~\ref{fig:QFIt1}(a). 
    \item Second, one can pair an empty dot connected to a dashed line to another
    empty dot connected to a star-dashed lines. Similarly, a filled dot
    connected to a dashed line can be paired to another filled dot connected
    to a star-dashed lines. Note that empty or filled dots connected to
    the same type of dashed line (either dashed or star-dashed) are not allowed pairings. If all
    the empty and filled dots are paired, it is called a \textit{complete
    pairing}. A complete pairing corresponds to choosing one permutation
    $P$ and one permutation $P^{\prime}$ in Eq.~(\ref{eq:Wg-ProdTr}). 
    For a given complete pairing, one joins an undirected thin solid line, corresponding to the $\delta$-lines in Fig.~\ref{fig:QFIt1}(a), between the
    paired empty and filled dots to obtain a \textit{fully contracted} diagram, as seen in Fig.~\ref{fig:QFIt1}(b) and (c).
    \item Next, the Weingarten function $V_{P^{-1}P^{\prime}}$
    on the r.h.s.~of Eq.~(\ref{eq:Wg-ProdTr}) can be read off directly
    from a fully contracted diagram. Upon removing the thick directed
    lines in a fully contracted diagram, the rest of the diagram consists
    of a number of $U$-loops, where each $U$-loop is formed by alternating
    thin and dashed lines, e.g., $1\rightarrow2\rightarrow 3\to4$
    in Fig~\ref{fig:QFIt1}(b). One can then read off the length of disjoint
    cycles of $P^{-1}P^{\prime}$, denoted as $u_{1},u_{2},\,\cdots u_{s}$,
    which is equal to half the number of dashed lines and star-dashed lines in
    each $U$-loop. This knowledge of the length of all the cycles in the permutation $P^{-1}P^{\prime}$ completely determines the Weingarten function
    $V_{P^{-1}P^{\prime}}$. \\
    \item Finally, $T_{P,\,P^{\prime}}$ can be also be read off from a fully contracted diagram. Upon removing the dashed lines and star-dashed
    lines in a fully contracted diagram, the rest of the diagram consists
    of a number of $T$-loops, which is formed by alternating thick and
    thin lines, e.g., the loops $2\rightarrow5$ and $4\rightarrow6$
    in Fig~\ref{fig:QFIt1}(b). The contribution of each $T$-loop is the
    trace of the product of the operators along the $T$-loop. $T_{P,\,P^{\prime}}$
    is the product of the contributions from all  $T$-loops.
\end{itemize}

\begin{figure}
    \centering
    \includegraphics[scale=0.8]{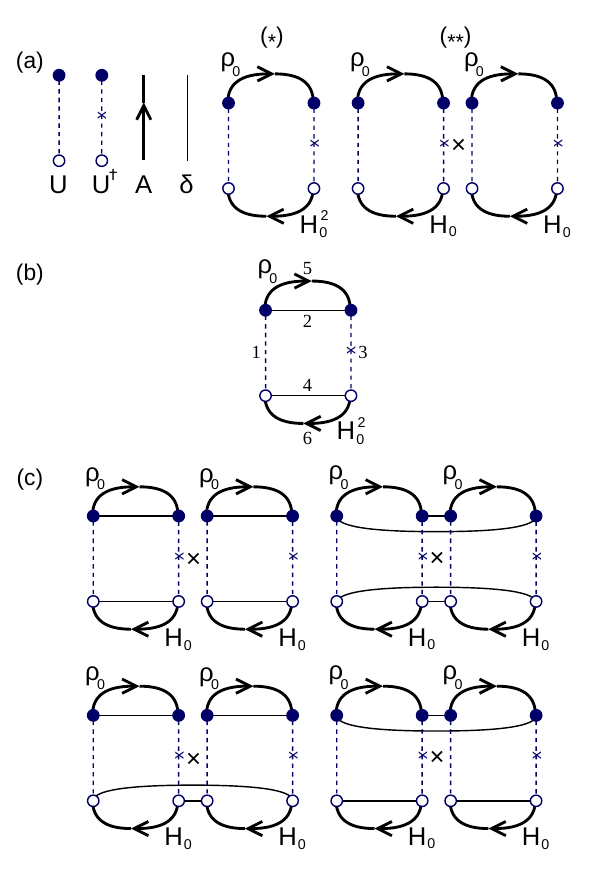}
    \caption{\label{fig:QFIt1}The contraction of the QFI at time $t=1$, as in Eq.~\eqref{eq:QFIt1}. (a) Diagrammatic notation used for the RMM case and pre-contracted diagrams for the terms $(*)$ and $(**)$, (b) only possible contracted diagram for $(*)$ and (c) the four possible contracted diagrams for $(**)$.}
\end{figure}
Following this recipe we find that the full expression of the QFI for the RMM case at $t=1$ is
\begin{align}
    \langle F^Q\rangle = 4\bigg( \frac{\text{Tr}[H_0^2]}{N} + \frac{\text{Tr}[H_0]^2 + \text{Tr}[H_0^2]}{N^2-1} - \frac{\text{Tr}[H_0]^2 + \text{Tr}[H_0^2]}{N(N^2-1)} \bigg),\label{eq:QFI-1time-exact}
\end{align}
where we utilized that $\rho_0$ is a pure state to set $\text{Tr}[\rho_0] = \text{Tr}[\rho_0^2] = 1$.

Note that above result, with $t=1$, has previously been calculated using the Schur-Weyl duality~\citep{oszmaniec2016randombosonic}. For the scaling of the QFI at late times, a pre-contracted diagram of degree
$t$ generates $(t!)^{2}$ fully contracted diagrams, which becomes intractable.
However, when $N$ is large, the number of diagrams contributing
to the leading-order in $N$ can be significantly reduced, which can be
addressed by counting the leading order in $V_{P,\,P^{\prime}}$ and
$T_{P,\,P^{\prime}}$ independently. 

For the Weingarten function, $V_{P^{-1}P^{\prime}}\sim N^{-2t+s}+\mathscr{O}(N^{-2t+s-2})$
if the length of disjoint cycles of $P^{-1}P^{\prime}$ is $u_{1},\,u_{2},\,\cdots,u_{s}$,
where $s\in[1,\,t]$. Clearly the leading-order contribution of the
Weingarten function is $V_{P^{-1}P^{\prime}}\sim N^{-t}$, which corresponds
to $u_{1}=u_{2}=\cdots=u_{s}=1$. Diagrammatically, such a choice
of $P$ and $P^{\prime}$ corresponds to a fully-contracted diagram
where an empty and a filled dot connected to the same dashed line
must pair with an empty and a filled dot connected to the same star-dashed
line. For example the U-loops in the top two contracted diagrams in Fig~\ref{fig:QFIt1}(c). 
Therefore, all the $U$-loops contain only one dashed line and one
star-dashed line, meaning the length of all the disjoint cycles is
$1$. Such diagrams are called \textit{Gaussian}.
Clearly, the number of Gaussian fully-contracted diagrams is $t!$
in a pre-contraction diagram of degree $t$ , which is less than the
total $(t!)^{2}$ diagrams, but still too many to investigate late-time
behaviors of physical quantities. 

For further reduction of the computational
efforts, we need to focus on the leading-order contribution of $T_{P,\,P^{\prime}}$. To
this end, we introduce the \textit{max-$T$ contraction}, which diagrammatically
corresponding to the maximum number of $T$-loops. We consider operators $\{A_j\}$ and $\{B_j\}$ that correspond to physical observables, whose the eignevalues or operator norms $\|A_i\|_{\infty},\|B_i\|_{\infty}$ scale at most as a constant when $N\to\infty$~\cite{reimann2008foundation,reimann2012equilibration}. Physically, RMM is an effective description of the spectral properties of an interacting many-body Hamiltonian. In realistic systems, physical observables are typically local, i.e., their support does not scale with $N$ as the system size goes to infinity, and clearly satisfy above assumptions. There can be multiple max-$T$ contractions and a max-$T$ contraction becomes the leading order of $T_{P,P^{\prime}}$ in the following two scenarios:
\begin{itemize}
\item When the trace of all the operators $A$'s, $B$'s scales linearly as $N$, the maximal possible scaling of each $T$-loop in the max-$T$ contraction is the linear scaling in $N$. Usually, this can be explicitly checked once the max-$T$ contraction is found. If this is the case, the max-$T$ contraction is the leading order contribution of $T_{P,\,P^{\prime}}$.
\item If the trace norm of some operators $C_i$ scale as a constant when $N\to\infty$, such as the density operator, then one can work with the unnormalized version $\widetilde{C_i}=NC_i$. Clearly, the $T$-loop involving only $\widetilde{C}_i$ scales as $N$, while the maximal scaling of any $T$-loop involving $\widetilde{C}_i$ and $A_i$'s or $B_i$'s is $N$, instead of $N^2$. This can be seen from the H\"older inequality
\begin{equation}
|\mathrm{Tr}(\widetilde{C_i}A_1\cdots A_t)|\leq \|\widetilde{C}_i\|_1\|A_1\|_{\infty}\cdots\|A_t\|_{\infty} \lesssim N.
\end{equation}
Then by working with the scaled operator $\widetilde{C}_i$, one can reduce this scenario to the first one.
\end{itemize}
The diagrammatic calculation of the QFI discussed in Sec.~\ref{sec:QFI-scaling}  falls into the second category, as the diagrams involve  the density matrix with $\text{Tr}[\rho_0]=1$. An example of a property that falls into the first category are spectral form factor like quantities (See Appendix~\ref{app:K-illustration} for detailed calculations).
In summary, to identify the leading contribution in the limit $N\to\infty$,
we can simply follow the principle below:

\begin{prp}\label{prp:RM}(\textit{max$-T$ and Gaussianity principle})
If a non-vanishing  max$-T$ contraction is Gaussian, then it must be of the leading
order because the Weingarten functions for the Gaussian diagrams has
the maximum scaling. In other words, to identify the leading-order
diagrams in the large $N$ limit, one should follow two guidelines
(a) Identify the recipe for max $-T$ contractions (\textit{max$-T$
principle}) (b) Choose the Gaussian contractions among all the max-$T$
contractions (\textit{Gaussianity principle}).
\end{prp}

Incidentally, if all the max-$T$, Gaussian contraction vanishes, then we can use the following observation to deduce the maximal possible scaling of $T_{P,P^{\prime}}$:
\begin{obs}\label{obs:vanishing-T}If all the max-$T$, Gaussian contractions vanish due to a $T$-loop consisting of a vanishing trace of one operator or a product of operators, the maximal possible scaling of $f(U)$ must be lower by at least an order $N$ compared to that of a non-vanishing max-$T$, Gaussian contraction.
\end{obs}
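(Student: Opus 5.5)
The plan is to compare, term by term, the scaling $N^{\#}$ of $V_{P^{-1}P'}T_{P,P'}$ for every fully contracted diagram with the reference scaling $N^{-t+\ell_{\max}}$ of a non-vanishing max-$T$, Gaussian contraction. Here $\ell_{\max}$ is the maximal number of $T$-loops, and each loop contributes at most $N$ once the density operator is rescaled as $\widetilde{C}_i=NC_i$. I would split the $(t!)^2$ contractions into two classes: diagrams whose $T$-loop count equals $\ell_{\max}$, and diagrams with at most $\ell_{\max}-1$ $T$-loops. In each class I would bound $|V_{P^{-1}P'}T_{P,P'}|$ separately.

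\emph{Second class.} By the Hölder bound quoted before Principle~\ref{prp:RM}, every $T$-loop is at most $\mathscr{O}(N)$, so $|T_{P,P'}|\lesssim N^{\ell_{\max}-1}$. Since $|V_{P^{-1}P'}|\lesssim N^{-t}$ for every permutation, these diagrams are at least one order of $N$ below the reference scaling. This step is routine.

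\emph{First class.} These are diagrams with the maximal number of $T$-loops, and there are two subcases.
\begin{itemize}
\item If the diagram is Gaussian, then by hypothesis it vanishes, which is exactly the situation in the statement: some $T$-loop is a vanishing trace.
\item If it is non-Gaussian, then $P^{-1}P'$ has $s<t$ cycles, so $V_{P^{-1}P'}\sim N^{-2t+s}\lesssim N^{-t-1}$. Together with $|T_{P,P'}|\lesssim N^{\ell_{\max}}$, this again gives a suppression by at least one power of $N$.
\end{itemize}
Some non-Gaussian max-$T$ diagrams may also vanish through the same traceless loop, but that only helps.

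Finally, the number of diagrams is $(t!)^2$, which is independent of $N$. Summing the bounds over all diagrams therefore gives $\langle f(U)\rangle=\mathscr{O}(N^{-t+\ell_{\max}-1})$, i.e.\ at least one order of $N$ below a non-vanishing max-$T$, Gaussian contraction.

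The main obstacle is making the per-loop bound $\mathscr{O}(N)$ rigorous in the second class. In that class the loop count drops but the operators are regrouped, so a single loop could in principle be large. I would handle this with the Hölder inequality together with the convention that each loop contains at most one rescaled $\widetilde{C}_i$, or only operators of bounded norm. Products of bounded operators have trace at most $N\prod\|A\|_\infty$, and a product containing several $\widetilde{C}$'s is bounded by $N^{k}$ with $k$ the number of $\widetilde{C}$'s. The latter case is compensated by a reduction in the total loop count, because the $\widetilde{C}$'s would otherwise sit in separate loops. I expect to check this compensation with the counting identity relating the total number of loops to the number of operators.
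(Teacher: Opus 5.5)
The paper states Observation~\ref{obs:vanishing-T} without proof, so your argument has to stand on its own. It breaks at the step you yourself flag as the main obstacle. If you follow your compensation mechanism through, a loop containing $k_c$ rescaled operators is bounded by $N^{k_c}$. Hence
$|T_{P,P'}|\lesssim N^{\ell+\sum_c(k_c-1)}\le N^{\ell_{\max}}$.

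That bound only ties the reference $N^{-t+\ell_{\max}}$; it does not gain a power of $N$. A Gaussian diagram whose loop deficit comes only from collecting several $\widetilde{C}$'s into common loops is therefore not shown to be subleading, so your ``second class'' is not controlled. Your first-class bound $|T|\lesssim N^{\ell_{\max}}$ implicitly relies on the same per-loop estimate.

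This is not a technicality: without an extra hypothesis the statement is false. Take $f=\mathrm{Tr}(UCU^\dagger\rho)\,\mathrm{Tr}(UC^\dagger U^\dagger\rho)$ with $C=|1\rangle\langle 2|$ and $\rho=|1\rangle\langle 1|$. Both have unit trace norm, and $C$ is the same kind of off-diagonal component the paper uses for entangled initial states.
\begin{itemize}
\item Then $f=|U_{11}|^2|U_{12}|^2$ and $\langle f\rangle=1/[N(N+1)]$.
\item The unique max-$T$ contraction, $P=P'=\mathrm{id}$, is Gaussian and equals $\mathrm{Tr}C\,\mathrm{Tr}C^\dagger(\mathrm{Tr}\rho)^2/(N^2-1)=0$. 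A non-vanishing version of it would be of order $N^{-2}$.
\item The leading term instead comes from the Gaussian contraction $P=P'=(12)$, namely $\mathrm{Tr}(CC^\dagger)\,\mathrm{Tr}(\rho^2)/(N^2-1)$.
\item That diagram has two loops instead of four, each containing two $\widetilde{C}$'s. This is exactly the saturating case above, and there is no suppression by $N$.
\end{itemize}

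To get a true statement you need two changes.

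First, assume the vanishing loop contains only bounded-norm operators, as for $\mathrm{Tr}H_0=0$.

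Second, count the right quantity: total loop number in rescaled units is the wrong figure of merit once $\widetilde{C}$'s can merge. Work in unrescaled units. By H\"older, any loop containing a trace-class operator is $O(1)$ and any other loop is $O(N)$. So each diagram is $\lesssim N^{-t-|P^{-1}P'|+c_0(P,P')}$, where $c_0$ is the number of loops free of trace-class operators and $|\pi|=t-\#\text{cycles}(\pi)$.

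You then have to prove that every diagram maximizing $c_0-|P^{-1}P'|$ still contains the vanishing loop. For the alternating structure $\prod_i\mathrm{Tr}(UA_iU^\dagger B_i)$ this is easy. The maximizers are exactly $P=P'$ fixing every index $i$ at which $A_i$ or $B_i$ is bounded, so every bounded operator stays alone in its own loop, as in the max-$T$ diagram. For general trace structures it needs a genuine combinatorial argument, and the counting identity between loops and operators that you invoke does not supply it.
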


We now apply  Principle \ref{prp:RM} to the contractions shown in Fig~\ref{fig:QFIt1} in the limit $N\to\infty$. Due to the gauge-invariance of the QFI, we set $\text{Tr}[H_0] = 0$ and thus $\text{Tr}[H_0^2] \sim N$. We then start by imposing the Gaussianity principle, pairing empty and filled dots from one dashed line with empty and filled dots from one star-dashed line. This leaves us with the diagram in Fig~\ref{fig:QFIt1}(b) and the two upper diagrams in Fig~\ref{fig:QFIt1}(c). Next, we want to find the max$-T$ contractions, from which we deduce that only Fig~\ref{fig:QFIt1}(b) and the upper left diagram in Fig~\ref{fig:QFIt1}(c) have the maximum number of $T$-loops. We then note that the first diagram in Fig~\ref{fig:QFIt1}(c) is vanishing, as it contains the T-loop $\text{Tr}[H_0]$. Then according to Observation~\ref{obs:vanishing-T}, the second diagram in Fig~\ref{fig:QFIt1}(c) is of order $\mathscr{O}(1/N)$, lower than the scaling of the diagram in Fig~\ref{fig:QFIt1}(b). Therefore we can conclude that the leading order scaling of the QFI in $N$ is
\begin{align}
    \langle F^Q(1) \rangle = \frac{4\text{Tr}[H_0^2]}{N} + \mathscr{O}(1/N).
\end{align}

\subsection{Many-body Weingarten formula: Generalization of Weingarten calculus to Floquet RQC \label{sec:Weingarten-many-body}}
We consider a Floquet  random quantum circuit consisting of a set of Haar-distributed 
random \textit{nearest-neighbor} two-site unitary gates acting on
a one-dimensional periodic lattice of qudits. 
To build some intuition, we first
consider the case $L=2$ and apply Weingarten formula~(\ref{eq:Wg})
to the two-site unitary. Once this situation is understood, we generalize to the many-body case with $L>2$. To this end, let us clarify some notation that is used subsequently.
As shown in Fig.~\ref{fig:Floquet-RQC}(a), the matrix element of the two-site unitary acting on sites $\mu$ and
$\nu$ is denoted as $\mathsf{U}_{\bm{a}^{[\mu\nu]}\bm{b}^{[\mu\nu]}}$
or $\mathsf{U}_{a_{\sim\nu}^{[\mu]}a_{\sim\mu}^{[\nu]},\,b_{\sim\nu}^{[\mu]}b_{\sim\mu}^{[\nu]}}$,
where the indices of the two-site unitary are denoted as $\bm{a}^{[\mu\nu]}=(a_{\sim\nu}^{[\mu]},\,a_{\sim\mu}^{[\nu]})$
and $\bm{b}^{[\mu\nu]}=(b_{\sim\nu}^{[\mu]},\,b_{\sim\mu}^{[\nu]})$. We shall call $\bm{a}^{[\mu\nu]}$ or its components the input index and $\bm{{b}}^{[\mu\nu]}$ or its components the output index.
The superscript $[\mu]$ in the indices $a_{\sim\nu}^{[\mu]}$ , $b_{\sim\nu}^{[\mu]}$
means that they live in the Hilbert space associated with site $\mu$
while the subscript $\sim\nu$ in the indices $a_{\sim\nu}^{[\mu]}$
, $b_{\sim\nu}^{[\mu]}$ means that another support of the two-site
unitary is at site $\nu$. Clearly one needs to specify both the superscript
$[\mu]$ and the subscript $\sim\nu$ to uniquely determine the indices of the two-site unitary gates $\mathsf{U}$ or $\mathsf{U^*}$.
\begin{figure}
\centering
\includegraphics[scale=0.7]{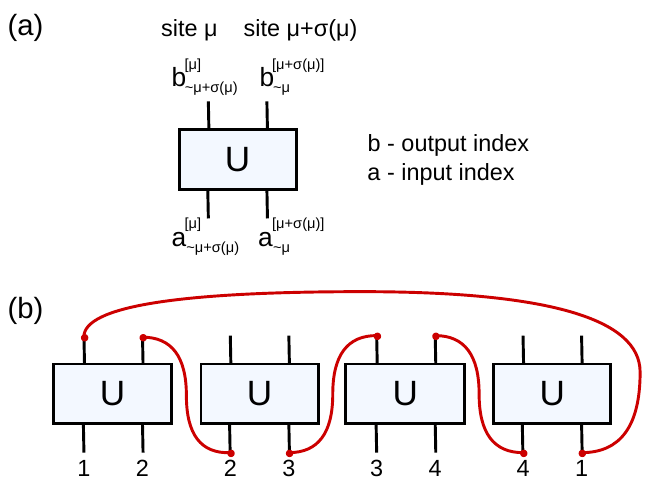}
\caption{\label{fig:Floquet-RQC}Illustration of a Floquet random quantum circuit
on a one-dimensional periodic lattice for $L=4$ with periodic boundary conditions. (a) The indices of a given two-site unitary gate at site $\mu$ and $\mu+\sigma(\mu)$, where $\sigma(\mu)=\pm1$. (b) The diagrammatic representation of the Floquet operator, i.e., the evolution operator of the Floquet RQC in one period, $U_{\{a_{\sim\mu+\sigma(\mu)}^{[\mu]}\},\,\{b_{\sim\mu-\sigma(\mu)}^{[\mu]}\}}$. For each site $\mu$, there is a contraction between the input index  $a_{\sim\nu-\sigma(\mu)}^{[\mu]}$ and the output index $b_{\sim\nu+\sigma(\mu)}^{[\mu]}$ described by the red lines.
In this particular case, $\sigma(1)=1,\,\sigma(2)=-1,\,\sigma(3)=-1,\,\sigma(4)=-1$, where we identify site "0" with site 4 and site "5" with site 1, due to the periodicity of the lattice.}
\end{figure}

For the two-site case, which is equivalent to the usual RMM except that a single index is represented by a two-component vector, the Weingarten formula becomes
\begin{align}
 & \langle\mathsf{U}_{\bm{a}_{1}^{[12]}\bm{b}_{1}^{[12]}}\cdots\mathsf{U}_{\bm{a}_{t}^{[12]}\bm{b}_{t}^{[12]}}\mathsf{U}_{\bm{\alpha}_{1}^{[12]}\bm{\beta}_{1}^{[12]}}^{*}\cdots\mathsf{U}_{\bm{\alpha}_{t}^{[12]}\bm{\beta}_{t}^{[12]}}^{*}\rangle\nonumber \\
= &\sum_{P^{[12]},\,P^{\prime[12]}}V_{P^{-1[12]}P^{\prime[12]}}\prod_{\xi=1}^{t}\delta_{\bm{a}^{[12]}_{\xi}\bm{\alpha}^{[12]}_{P^{[12]}(\xi)}}\delta_{\bm{b}^{[12]}_{\xi}\bm{\beta}^{[12]}_{P^{\prime[12]}(\xi)}},\label{eq:two-site-Wg}
\end{align}
where, again, the superscripts $[12]$ denote the support of the unitary
and the Weingarten functions $V_{P^{-1[12]}P^{\prime[12]}}$ are defined
the same as in the single-site case, i.e., they are permutations of the
$t$-tuple $\{1,2,\,\cdots,\,t\}$, except that the dimension of the
Hilbert space is $q^{2}$ instead of $q$ in the single-site case. 

For a generic case, with $L>2$, one can find the most general Floquet operator, i.e., the evolution operator in one period, as follows: First, one can draw the gates $\mathsf{U}_{\bm{a}^{[\mu,\mu+1]},\bm{b}^{[\mu,\mu+1]}}$ for $\mu=0,1,\cdots,L$,  as shown in Fig.~\ref{fig:Floquet-RQC}(b).
Then it is crucial to observe that to form a Floquet operator, for each site $\mu$, either output index $b_{\sim\mu-1}^{[\mu]}$ should be contracted with the input index $a_{\sim\mu+1}^{[\mu]}$ or the output index $b_{\sim\mu+1}^{[\mu]}$ should be contracted with the input index $a_{\sim\mu-1}^{[\mu]}$. As shown in Fig.~\ref{fig:Floquet-RQC}(b), the brickwork circuit with $L=4$ can be specified in this way, where contractions are denoted by the red lines. In total, there are $2^L$ different Floquet operators, each of which is specified uniquely by the binary function: $\sigma:\{1,\,2,\,\cdots,L\}\to\{1,\,-1\}$.

For site $\mu$, the contracted indices are $a_{\sim\mu-\sigma(\mu)}^{[\mu]}$ and $b_{\sim\mu+\sigma(\mu)}^{[\mu]}$ while the free indices are $a_{\sim\mu+\sigma(\mu)}^{[\mu]}$
and $b_{\sim\mu-\sigma(\mu)}^{[\mu]}$. 
As such the most generic Floquet operator can be written as
\begin{equation}
U_{\{a_{\sim\mu+\sigma(\mu)}^{[\mu]}\},\,\{b_{\sim\mu-\sigma(\mu)}^{[\mu]}\}}=\prod_{\mu=1}^{L}\left(\mathsf{U}_{\bm{a}^{[\mu,\,\mu+1]}\bm{b}^{[\mu,\,\mu+1]}}\delta_{a_{\sim\mu-\sigma(\mu)}^{[\mu]}b_{\sim\mu+\sigma(\mu)}^{[\mu]}}\right),\label{eq:U-ME}
\end{equation}
where we introduce the  compact notation
\begin{equation}
\prod_{\mu=1}^{L}\mathsf{U}_{\bm{a}^{[\mu,\,\mu+1]}\bm{b}^{[\mu,\,\mu+1]}}\equiv\mathsf{U}_{\bm{a}^{[L,\,L+1]}\bm{b}^{[L,\, L+1]}}\cdots\mathsf{U}_{\bm{a}^{[23]}\bm{b}^{[23]}}\mathsf{U}_{\bm{a}^{[12]}\bm{b}^{[12]}},
\end{equation}
with site $L+1$ being identified with site $1$, and where $\delta_{a_{\sim\mu+\sigma(\mu)}^{[\mu]}b_{\sim\mu-\sigma(\mu)}^{[\mu]}}$ implements the contractions between the input and output indices $a_{\sim\mu+\sigma(\mu)}^{[\mu]}$ and $b_{\sim\mu-\sigma(\mu)}^{[\mu]}$ at site $\mu$. 

The many-body analogy of Eq.~(\ref{eq:Wg}) in the most generic case
is
\begin{widetext}
\begin{align}
 & \langle U_{\{a_{1,\,\sim\mu+\sigma(\mu)}^{[\mu]},\,b_{1,\,\sim\mu-\sigma(\mu)}^{[\mu]}\}}\cdots U_{\{a_{t,\,\sim\mu+\sigma(\mu)}^{[\mu]},\,b_{t,\,\sim\mu-\sigma(\mu)}^{[\mu]}\}}U_{\{\alpha_{1,\,\sim\mu+\sigma(\mu)}^{[\mu]},\,\beta_{1,\,\sim\mu-\sigma(\mu)}^{[\mu]}\}}^{*}\cdots U_{\{\alpha_{t,\,\sim\mu+\sigma(\mu)}^{[\mu]},\,\beta_{t,\,\sim\mu-\sigma(\mu)}^{[\mu]}\}}^{*}\rangle\nonumber \\
= & \sum_{P^{[12]},\,P^{\prime[12]}}\sum_{P^{[23]},\,P^{\prime[23]}}\cdots\sum_{P^{[L1]},\,P^{\prime[L1]}}\left[\left(\prod_{\mu=1}^{L}V_{P^{-1[\mu,\,\mu+1]}P^{\prime[\mu,\,\mu+1]}}q^{\bar{s}^{[\mu,\,\mu\pm\sigma(\mu)]}}\right)\prod_{\xi=1}^{t}\prod_{\mu=1}^{L}\left(\delta_{a_{\xi,\,\sim\mu+\sigma(\mu)}^{[\mu]}\alpha_{P^{[\mu,\,\mu+\sigma(\mu)]}(\xi),\,\sim\mu+\sigma(\mu)}^{[\mu]}}\delta_{b_{\xi,\,\sim\mu-\sigma(\mu)}^{[\mu]}\beta_{P^{\prime[\mu,\,\mu-\sigma(\mu)]}(\xi),\,\sim\mu-\sigma(\mu)}^{[\mu]}}\right)\right],\label{eq:Wg-Manybody}
\end{align}
\end{widetext}
where $s^{[\mu,\,\mu+\sigma(\mu)]}$ denotes the number of disjoint cycles in the permutation ${P^{-1[\mu,\,\mu+\sigma(\mu)]}P^{\prime[\mu,\,\mu+\sigma(\mu)]}}$ that determines the scaling of the Weingarten function in the limit $q\to\infty$ and $\bar{s}^{[\mu,\,\mu\pm\sigma(\mu)]}$ denotes the number disjoint cycles in the permutation ${P^{-1[\mu,\,\mu-\sigma(\mu)]}P^{\prime[\mu,\,\mu+\sigma(\mu)]}}$. A few comments are in order: Eq.~\eqref{eq:Wg-Manybody} holds for arbitary $q$. Comparing to the Weingarten formula~\eqref{eq:Wg} for the RMM case, an additional term $q^{\bar{s}^{[\mu,\,\mu\pm\sigma(\mu)]}}$ appears due to the contribution from the contracted dummy indices between the two-site unitary gates $\mathsf{U}_{\bm{a}^{[\mu,\,\mu\pm1]}\bm{b}^{[\mu,\,\mu\pm1]}}$. The detailed proof is presented in Appendix~\ref{app:RQC-Weingarten}. The basic idea of the proof is the following: First by substituting Eq.~\eqref{eq:U-ME} into the l.h.s.~of Eq.~\eqref{eq:Wg-Manybody}, one can apply the Weingarten formula~\eqref{eq:Wg} to the ensemble average of each individual two-site unitary gate. This process can be carried out even without contracting the indices described by the red lines shown in Fig.~\ref{fig:Floquet-RQC}(b). After performing the index contraction, which is characterized by $\prod_{\mu=1}^L\delta_{a_{\sim\mu-\sigma(\mu)}^{[\mu]}b_{\sim\mu+\sigma(\mu)}^{[\mu]}}$, one can manipulate the dummy indices and show that they give rise to the contribution $\prod_{\mu=1}^L q^{\bar{s}^{[\mu,\,\mu\pm\sigma(\mu)]}}$.

\subsection{Proving the global CUE empirical conjecture \label{sec:globalCUE}}
Eq.~(\ref{eq:Wg-Manybody}) immediately allows us to prove an empirical
conjecture, that the ensemble average of the Floquet operators in a Floquet
RQC in the large $q$ limit behaves like a global CUE~\citep{chan2018solution}. Recall that in the limit $q\to\infty$, $V_{P^{-1[\mu,\,\mu+1]}P^{\prime[\mu,\,\mu+1]}}\propto (q^2)^{-2t+s^{[\mu,\,\mu+\sigma(\mu)]}}$,where $s^{[\mu,\,\mu+\sigma(\mu)]},\,\bar{s}^{[\mu,\,\mu\pm\sigma(\mu)]} \in [1,t]$. Thus, the leading order scaling of the prefactor in Eq.~(\ref{eq:Wg-Manybody}) is
\begin{equation}
V_{P^{-1[\mu,\,\mu+1]}P^{\prime[\mu,\,\mu+1]}}q^{\bar{s}^{[\mu,\,\mu\pm\sigma(\mu)]}}\propto q^{-t},\, \forall \mu\in [1,L],
\end{equation}
which corresponds to $s^{[\mu,\,\mu+\sigma(\mu)]}=\bar{s}^{[\mu,\,\mu\pm\sigma(\mu)]}=t$, i.e., the Gaussian contractions
\begin{equation}
P^{[\mu,\,\mu+1]}=P^{\prime[\mu,\,\mu+1]},\,P^{[\mu,\,\mu-\sigma(\mu)]}=P^{\prime[\mu,\,\mu+\sigma(\mu)]},\,\forall \mu\in [1,L].
\end{equation}
This essentially implies that the permutations for all two-site Haar random unitary gates collapse to a single-permutation $P^{[\mu,\,\mu-\sigma(\mu)]}=P^{\prime[\mu,\,\mu-\sigma(\mu)]}=P^{\prime[\mu,\,\mu+\sigma(\mu)]}=P$, which leads to
\begin{widetext}
\begin{equation}
\langle U_{\{a_{1,\,\sim\mu+\sigma(\mu)}^{[\mu]},\,b_{1,\,\sim\mu-\sigma(\mu)}^{[\mu]}\}}\cdots U_{\{a_{t,\,\sim\mu+\sigma(\mu)}^{[\mu]},\,b_{t,\,\sim\mu-\sigma(\mu)}^{[\mu]}\}}U_{\{\alpha_{1,\,\sim\mu+\sigma(\mu)}^{[\mu]},\,\beta_{1,\,\sim\mu-\sigma(\mu)}^{[\mu]}\}}^{*}\cdots U_{\{\alpha_{t,\,\sim\mu+\sigma(\mu)}^{[\mu]},\,\beta_{t,\,\sim\mu-\sigma(\mu)}^{[\mu]}\}}^{*}\rangle=q^{-t}\sum_P\prod_{\xi=1}^{t}\prod_{\mu=1}^{L}\left(\delta_{a_{\xi,\,\sim\mu+\sigma(\mu)}^{[\mu]}\alpha_{P(\xi),\,\sim\mu+\sigma(\mu)}^{[\mu]}}\delta_{b_{\xi,\,\sim\mu-\sigma(\mu)}^{[\mu]}\beta_{P(\xi),\,\sim\mu-\sigma(\mu)}^{[\mu]}}\right).\label{eq:global-CUE}
\end{equation}
On the other hand, let us first note that if
$\mathsf{U}_{\{a_{\xi,\,\sim\mu+\sigma(\mu)}^{[\mu]},\,b_{\xi,\,\sim\mu-\sigma(\mu)}^{[\mu]}\}}$ are the matrix elements of a global unitary gate that belongs to the
CUE of size $q^{L}$, we would obtain 
\begin{align}
 & \langle U_{\{a_{1,\,\sim\mu+\sigma(\mu)}^{[\mu]},\,b_{1,\,\sim\mu-\sigma(\mu)}^{[\mu]}\}}\cdots U_{\{a_{t,\,\sim\mu+\sigma(\mu)}^{[\mu]},\,b_{t,\,\sim\mu-\sigma(\mu)}^{[\mu]}\}}U_{\{\alpha_{1,\,\sim\mu+\sigma(\mu)}^{[\mu]},\,\beta_{1,\,\sim\mu-\sigma(\mu)}^{[\mu]}\}}^{*}\cdots U_{\{\alpha_{t,\,\sim\mu+\sigma(\mu)}^{[\mu]},\,\beta_{t,\,\sim\mu-\sigma(\mu)}^{[\mu]}\}}^{*}\rangle\nonumber \\
= & \sum_{P,\,P^{\prime}}V_{P^{-1}P^{\prime}}\prod_{\xi=1}^{t}\prod_{\mu=1}^{L}\left(\delta_{a_{\xi,\,\sim\mu+\sigma(\mu)}^{[\mu]}\alpha_{P(\xi),\,\sim\mu+\sigma(\mu)}^{[\mu]}}\delta_{b_{\xi,\,\sim\mu-\sigma(\mu)}^{[\mu]}\beta_{P^{\prime}(\xi),\,\sim\mu-\sigma(\mu)}^{[\mu]}}\right).\label{eq:Global-CUE-Ansatz}
\end{align}
\end{widetext}
Clearly, in the limit $q\to\infty$, the leading order of the Weingarten function agrees with Eq.~\eqref{eq:global-CUE}.

It was shown in Ref.~\cite{harrow2009randomquantum,brandao2016localrandom} that RQC becomes an approximate unitary $k$-design for time $t$ of the polynomial order of $k$ and $L$. Our results are orthogonal and complementary to their results in that for asymptotically large local Hilbert space dimension, Floquet RQC asymptotically becomes a global random Haar gate (i.e., unitary design with arbitrary $k$) for any time $t$.

\subsection{Diagrammatic representation of the many-body Weingarten formula \label{sec:Weingarten-Diagram-FloquetRQC}}
Having generalized the Weingarten formula to the case of Floquet circuits,
it is natural to ask how to use the diagrammatic methods to compute
the ensemble average over the Floquet operators in a Floquet RQC and what the corresponding many-body generalizations of Eq.~(\ref{eq:Wg-ProdTr}) is.

Here, we consider the case where the operators {$\{A_j\}$} and {$\{B_j\}$,} that were previously discussed in the context of RMM, are separable, i.e., they are only supported on single site and take the form $A=\prod_{\mu=1}^L A^{[\mu]}$ and $B=\prod_{\mu=1}^L B^{[\mu]}$. For the sake of simplicity, we consider a linear map of the form $f=\mathrm{Tr}(UA_{1}\cdots UA_{t}U^{\dagger}B_{1}\cdots U^{\dagger}B_{t})$. Upon substituting this into Eq.~\eqref{eq:U-ME} and exploiting the fact that the operators {$\{A_i\}$} and {$\{B_i\}$} are only supported on single sites, the ensemble average of $f$ factorizes into a product of single-site diagrams that is almost the same as the  RMM case, except that the contractions at each site are constrained due to the so-called "bond constraint" between sites~\cite{chan2018solution}. More quantitatively, 
in Appendix~\ref{app:RQC-Weingarten}, we show that in the limit
$q\to\infty$ the many-body generalization of Eq.~(\ref{eq:Wg-ProdTr})
becomes
\begin{widetext}
\begin{equation}
\langle\prod_{j=1}^{g}\mathrm{Tr}\left(UA_{1}^{(j)}\cdots UA_{n_{j}}^{(j)}U^{\dagger}B_{1}^{(j)}\cdots U^{\dagger}B_{m_{j}}^{(j)}\right)\rangle=\sum_{P_{1}^{[1]},\,P^{\prime[1]}}\sum_{Q^{[1]},\,Q^{\prime[1]}}\cdots\sum_{P_{1}^{[L]},\,P^{\prime[L]}}\sum_{Q^{[L]},\,Q^{\prime[L]}}\mathcal{B}[\{P^{[\mu]},\,Q^{[\mu]}\}]\prod_{\mu=1}^{L}V_{P^{[\mu]^{-1}}P^{\prime[\mu]}}V_{Q^{[\mu]-1}Q^{\prime[\mu]}}T_{P^{[\mu]}P^{\prime[\mu]}Q^{[\mu]}Q^{\prime[\mu]}},\label{eq:Wg-ProdTr-Manybody}
\end{equation}
\end{widetext}
where 
\begin{equation}
\mathcal{B}[\{P^{[\mu]},\,Q^{[\mu]}\}]\equiv\prod_{\mu=1}^{L}\delta_{P^{[\mu]},\,Q^{[\mu+1]}}\delta_{P^{\prime[\mu]},\,Q^{\prime[\mu+1]}},
\end{equation}
is the precise mathematical characterization of the bond constraints discussed in Ref~\cite{chan2018solution} and $T_{P^{[\mu]}P^{\prime[\mu]}Q^{[\mu]}Q^{\prime[\mu]}}$ denotes the $T$-loop for site $\mu$ (see Eq.~(\ref{eq:T-PP-Many-body}) for an explicit expression). Note that unlike Eq.~\eqref{eq:Wg-Manybody}, the contracted indices between the two-site unitary gates characterized by $\prod_{\mu=1}^L\delta_{a_{\sim\mu-\sigma(\mu)}^{[\mu]}b_{\sim\mu+\sigma(\mu)}^{[\mu]}}$ are contained in $T_{P^{[\mu]}P^{\prime[\mu]}Q^{[\mu]}Q^{\prime[\mu]}}$, rather than explicitly evaluated.   

Similarly to the RMM case $T_{P^{[\mu]}P^{\prime[\mu]}Q^{[\mu]}Q^{\prime[\mu]}}$
is a product of traces of the multiplication of operators, since all the indices for the matrices appear
only twice and the Einstein summation convention is used. Eq.~(\ref{eq:T-PP-Many-body})
suggest the following diagrammatic representation in the limit $q\to\infty$:
\begin{itemize}
\item First, one can draw the many-body pre-contraction diagrams for
$f\left(UA_{1},\,\cdots,\,UA_{t},\,U^{\dagger}B_{1},\,\cdots,\,U^{\dagger}B_{t}\right)$,
as shown in Fig.~\ref{fig:many-body-diagram}. Thanks to the fact
that the operators are separable, such a many-body pre-contraction
diagram factorizes into a product of single-site pre-contraction diagrams.
\item Then, in principle, one can start the contraction for each single-site
pre-contraction diagram, where $T_{P^{[\mu]}P^{\prime[\mu]}Q^{[\mu]}Q^{\prime[\mu]}}$
is represented by the $T$-loops at each site, as in the RMM case. Note that for a given site $\mu$, there are two Weingarten functions $V_{P^{[\mu]^{-1}}P^{\prime[\mu]}}V_{Q^{[\mu]-1}Q^{\prime[\mu]}}$, corresponding the two independent unitary gates acting on site $\mu$, $\mu+1$ and $\mu-1$, $\mu$. Similarly to
the RMM case, they are still represented by the $U$-loops in the
single-site diagrams.
\item Finally, if the contractions at all sites satisfy the bond constraints the global many-contraction is legitimate with $\mathcal{B}[\{P^{[\mu]},\,Q^{[\mu]}\}]=1$,
otherwise $\mathcal{B}[\{P^{[\mu]},\,Q^{[\mu]}\}]=0$.
\end{itemize}
The above procedure can be summarized in the following principle:

\begin{prp}\label{prp:many-body}(Factorization Principle) Assuming
all the operators are single site operators, then, in the large
$q$ limit, the contribution of a fully-contracted many-body diagram
is the multiplication of the contributions of single-site diagrams,
provided the contractions at different sites meet the bond constraint.

\end{prp}

In principle, if one were to use Eq.~(\ref{eq:Wg-ProdTr-Manybody}),
the number of permutations for all the sites that needs to be
taken into account is $(t!)^{2L}$, which is impractical. {Nevertheless, if we consider the limit $q\to \infty$ in combination with the RMM case max$-T$ and Gaussianity principle, it is possible to identify the dominant, leading order contributions.} Note that the ensemble average involving generic, non-separable operators can also be evaluated in this way by expressing them in the basis of separable operators, see Sec.~\ref{sec:ent-RQC} for an example.

We conclude this section with a few remarks. First, Eqs.~\eqref{eq:Wg-Manybody} and (\ref{eq:Wg-ProdTr-Manybody}), the equivalence to global CUE
in the large $q$ limit, and the factorization principle do not
only apply to the brickwork random quantum circuits discussed in~\citep{chan2018solution},
but for all the $2^{L}$ different configuration of Floquet RQC with random nearest neighbor interactions discussed previously.
Furthermore, Eq.~(\ref{eq:Wg-Manybody})
can also be used to identify non-Gaussian corrections beyond the prediction by the global CUE equivalence. Secondly, it is also possible to further generalize the above results
to two-site random unitary gates acting on distant sites, beyond the
nearest-neighbor setup. 

\section{Asymptotoic scaling of the QFI in chaotic Floquet dynamics modeled by RMM and RQC\label{sec:QFI-scaling}}
We now address the scaling of the QFI in the control and state-preparation protocols for the RMM and RQC cases, separately. From previous discussions in Sec.~\ref{sec:globalCUE}, we expect that the QFI for the two protocols applied to RQC can be captured by modeling the two layers of Haar-distributed local unitaries acting on even and odd bonds sequently by a global CUE in the RMM. Indeed, as will see later, {the QFI for the RQC case coincide with the one for the RMM to the leading order in large $q$.}

\subsection{Asymptotic scaling of QFI in RMM\label{sec:scaling-RMT}}
 
We consider two fundamental metrological protocols, as shown in Fig.~\ref{fig:metrology-protocols}.
For the control protocol, at time $t$, the unitary operator is given
by $\mathcal{U}_{\mathrm{ctr}}(t)=(WU)^{t}$, where the gate used
to sense the signals is given by $W=\mathrm{e}^{-\mathrm{i}H_{0}\theta}$.
{We model $U$ by a global CUE matrix acting on the Hilbert space of the entire system. The gate used
to sense the signals $W=\bigotimes W_0^{[\mu]}$, where $W_0^{[\mu]}$ is a single-qudit sensing gate acting on qudit $\mu$ only. We can express $W$ as $W=\mathrm{e}^{-\mathrm{i}H_{0}\theta}$, where
\begin{equation}
    H_0=\sum_{\mu} \mathbb{I}\otimes \cdots \otimes h_0^{[\mu]}\otimes \cdots \mathbb{I},\label{eq:H0-h0-connection}
\end{equation}
is given by the summation of local Hamiltonian $h_0^{[\mu]}$ acting on qudit $\mu$.
We consider an initial pure state denoted as $\rho_{0}$. Without loss of generality, we can take
$\mathrm{Tr}[H_{0}]=0$ without changing the scaling of the QFI. Furthermore, we note that all of the eigenvalues of $H_0^2$ are non-negative and the majority of them are of the order $1$, because $H_0$ correspond to some local observable. Therefore, $\mathrm{Tr}[H_{0}^{2}]\sim N$.
According to Eq.~\eqref{eq:QFI-def-in-G}, the QFI is~\citep{boixo2007generalized}
\begin{equation}
F_{\mathrm{ctr}}^{Q}(t)=4\mathrm{Var}[G_{\mathrm{ctr}}(t)]_{\rho_{0}}\label{eq:ctr-FQ},
\end{equation}
where $G_{\mathrm{ctr}}(t)\equiv\mathrm{i}\partial_{\theta}\mathcal{U}_{\mathrm{ctr}}^{\dagger}(t)\mathcal{U}_{\mathrm{ctr}}(t)=\sum_{s=1}^{t}H_{\mathrm{ctr}}(s)$
is the metrological generator and $H_{\mathrm{ctr}}(s)\equiv(U^{\dagger}W^{\dagger})^{s}H_{0}(WU)^{s}$
is the representation of $H_{0}$ in the Heisenberg picture at time
$s$. Then it can be readily obtained that 
\begin{equation}
F_{\mathrm{ctr}}^{Q}(t)=4\sum_{s,\,\tau=1}^{t}\mathrm{Cov}[H_{\mathrm{ctr}}(s),\,H_{\mathrm{ctr}}(\tau)]_{\rho_{0}},\label{eq:FQ-ctr-corr}
\end{equation}
where $\mathrm{Cov}[AB]_{\rho_{0}}\equiv\frac{1}{2}\mathrm{Tr}[\rho_{0}\{A,B\}]-\mathrm{Tr}[\rho_{0}A]\mathrm{Tr}[\rho_{0}B]$.

For the state-preparation protocol, we know that $\mathcal{U}_{\mathrm{sp}}(t)=W^{t}U^{t}$. It is straightforward to
calculate that 
\begin{equation}
F_{\mathrm{sp}}^{Q}(t)=4\mathrm{Var}[G_{\mathrm{sp}}(t)]_{\rho_{0}},\label{eq:sp-FQ}
\end{equation}
where $G_{\mathrm{sp}}(t)\equiv t U^{\dagger t} H_{0}U^{t}=t H_{\mathrm{sp}}(t)$ and $H_{\mathrm{sp}}(t)\equiv U^{\dagger t}H_{0}U^{t}$. Alternatively, 
\begin{equation}
    F_{\mathrm{sp}}^{Q}(t)=4t^{2}\mathrm{Var}[H_{\mathrm{sp}}(t)]_{\rho_{0}}.\label{eq:FQ-sp-corr}
\end{equation} 

In the local estimation approach, one
is usually concerned with the small parameter limit where $\theta$
is very close to zero. If $s\theta\Delta_{H_{0}}\ll1$, where $\Delta_{H_{0}}$
is the spectral width of $H_{0}$, i.e., the difference between the
maximum and minimum eigenvalues of $H_{0}$, we also have $H_{\mathrm{sp}}(s)=H_{\mathrm{ctr}}(s)$ . In general, $H_{\mathrm{sp}}(t)\neq H_{\mathrm{ctr}}(t)$ unless $t=0$ or $t=1$. Here, we make the useful observation that upon setting $\theta=0$ and $W=\mathbb{I}$,  $H_{\mathrm{sp}}(t)=H_{\mathrm{ctr}}(t)$ for arbitrary $t$. 
Therefore, the scaling for the state-preparation protocol can be obtained as a special case of the control protocol, as we will see later. 

Now we are in a position to employ the diagrammatic method discussed in Sec.~\ref{sec:Weingarten-RMT} to calculate the average value of the QFI in the limit of large Hilbert space dimension $N$. Let us first state the results of the scaling of the QFI for the control protocol and the state preparation protocol, respectively,
\begin{align}
    \langle F^Q_{\text{ctr}}(t)\rangle =& \frac{4\text{Tr}[H_0^2]}{N} t + \mathscr{O}(1/N),
    \label{eq:QFI_rmt_ctr} \\
    \langle F^Q_{\text{sp}}(t)\rangle =& \frac{4\text{Tr}[H_0^2]}{N} t^2 + \mathscr{O}(1/N). 
    \label{eq:QFI_rmt_sp}
\end{align}
i.e., the control protocol on average scales linearly with time whereas the state preparation protocol will scale quadratically.

To show the above results, let us first focus on the control protocol. We separate the sum in Eq.~\eqref{eq:FQ-ctr-corr} into two parts
\begin{widetext}
\begin{align}
    \sum_{s,\tau=1}^t \text{Cov}[H_{\text{ctr}}(s), H_{\text{ctr}}(\tau)]_{\rho_0} = & \sum_{s,\tau=1}^t \quad\text{Tr}[\rho_0H_{\text{ctr}}(s) H_{\text{ctr}}(\tau)] + \sum_{s,\tau=1}^t \text{Tr}[\rho_0 H_{\text{ctr}}(s)]\text{Tr}[\rho_0 H_{\text{ctr}}(\tau)] \nonumber \\
    =&  \sum_{s,\tau=1}^t \underbrace{\text{Tr}[\rho_0 (U^\dagger W^\dagger)^{s-1} U^\dagger H_0 (W^\dagger U^\dagger)^{\tau-s} H_0 U (WU)^{\tau-1}]}_{\text{(i)}} \nonumber \\
    &+ \sum_{s,\tau=1}^t \underbrace{\text{Tr}[\rho_0 (U^\dagger W^\dagger)^{s-1} U^\dagger H_0 U (WU)^{s-1}]  \text{Tr}[\rho_0 (U^\dagger W^\dagger)^{\tau-1} U^\dagger H_0 U (WU)^{\tau-1}]}_{\text{(ii)}}, \label{eq:rmt_ctr}
\end{align}
\end{widetext}
where we have inserted $H_{\text{ctr}}(s) \equiv (U^\dagger W^\dagger)^sH_0 (WU)^s$. We can now employ the diagrammatic notation to evaluate the terms (i) and (ii) separately.

\begin{figure}[h]
    \centering    \includegraphics[width=0.49\textwidth]{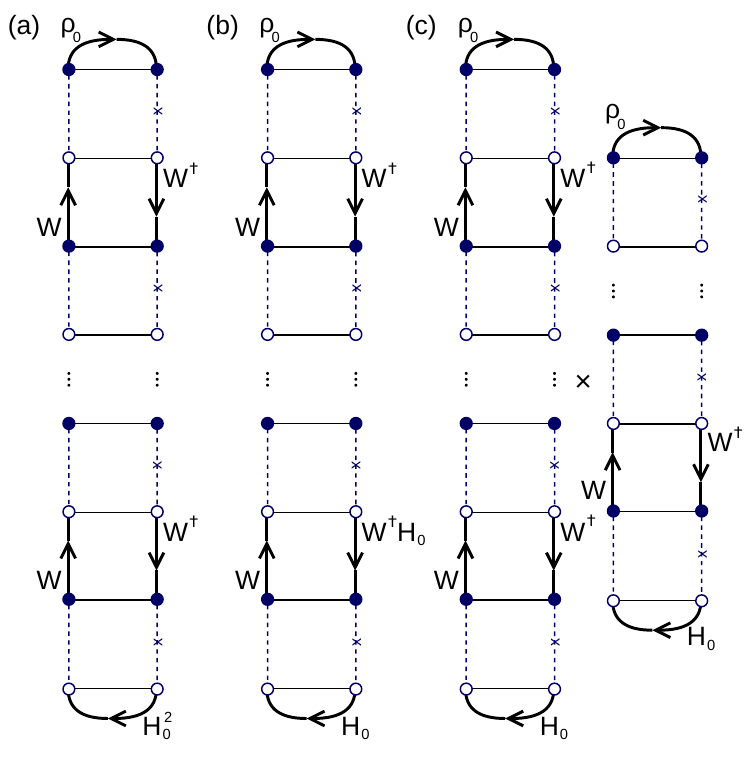}
    \caption{Leading order diagrams in $N$ of the QFI for the control protocol (Eq.~\eqref{eq:FQ-ctr-corr}) for arbitrary time $t$, (a) term (i) when $\tau=s$, (b) term (i) when $\tau\neq s$ and c) term (ii) respectively. Diagrams (b) and (c) are always vanishing due to the T-loop $\text{Tr}[H_0]=0$; diagram (b) have a non-vanishing contribution.}
    \label{fig:RMT_ctr_diagram}
\end{figure}
We note that $\mathrm{Tr}[\rho_0]=1$, which falls into the second scenario discussed in Sec.~\ref{sec:Weingarten-RMT}.
The leading order contracted diagrams for term (i) at $\tau=s$ and $\tau\neq s$ can be seen in Fig.~\ref{fig:RMT_ctr_diagram}(a) and (b) respectively, and term (ii) can be seen in Fig.~\ref{fig:RMT_ctr_diagram}(c), all for an arbitrary time $t$.
These diagrams are the only ones that both fulfill the Gaussianity principle and maximize the number of T-loops, and in accordance with Sec.~\ref{sec:Weingarten-RMT}, we therefore conclude that if they are non-vanishing, they are of the leading order. 
Note that with this contraction, the diagrams in Fig.~\ref{fig:RMT_ctr_diagram}(b) and (c) will always contain the T-loop $\text{Tr}[H_0]=0$, which means that these terms are vanishing for all $\tau \neq s$ for term (i) and all $\tau,\, s$ for term (ii). Further, they follow the scaling in Observation~\ref{obs:vanishing-T}, which is of subleading order compared to Fig.~\ref{fig:RMT_ctr_diagram}(a). 
For Fig.~\ref{fig:RMT_ctr_diagram}(a), we find a non-vanishing diagrammatic contribution of $N^{-1}\text{Tr}[H_0^2]$, which is the only leading order contributions in $N$. We therefore note that the scaling of the two terms are
\begin{align}
\langle\mathrm{Tr}[\rho_{0}H_{\mathrm{ctr}}(s)H_{\mathrm{ctr}}(\tau)]\rangle & \sim\frac{\mathrm{Tr}(H_{0}^{2})}{N}\delta_{t,\,s}+\mathscr{O}(1/N),\label{eq:RMT-Moment}\\
\langle\mathrm{Tr}[\rho_{0}H_{\mathrm{ctr}}(s)]\mathrm{Tr}[\rho_{0}H_{\mathrm{ctr}}(\tau)]\rangle & \sim\mathscr{O}(1/N),\,\forall\tau,\,s,\label{eq:RMT-Average}
\end{align}
which in the end gives the result in Eq.~\eqref{eq:QFI_rmt_ctr}.

Let us now discuss the state preparation protocol. Eq.~\eqref{eq:FQ-sp-corr} has two contributing terms
\begin{align}
    \text{Var}[H_{\text{sp}}(t)]_{\rho_0}
    =&\underbrace{\text{Tr}[\rho_0 U^{\dagger t} H_0^2 U^t]}_{\text{(iii)}} + \underbrace{\text{Tr}[\rho_0 U^{\dagger t} H_0 U^t]^2}_{\text{(iv)}},
\end{align}
where we have used that $H_{\text{sp}}(t) = U^{\dagger t}H_0 U^t$. We note that the correlation function for the state preparation protocol can be obtained as a special case from the one in the control protocol by setting $\tau=s=t$ and $W=\mathbb{I}$, therefore the diagrams for terms (iii) and (iv) are identical to those in Fig.~\ref{fig:RMT_ctr_diagram}(a) and (c) except that $W=\mathbb{I}$. Therefore, similarly to the control protocol, the term (iv) will vanish as it contains the T-loop $\text{Tr}[H_0]=0$. The term (iii) contributes a factor of $N^{-1}\text{Tr}[H_0^2]$, giving a final contribution from each term of
\begin{align}
\langle\mathrm{Tr}[\rho_{0}H_{\mathrm{sp}}(t)^2]\rangle & \sim\frac{\mathrm{Tr}(H_{0}^{2})}{N}+\mathscr{O}(1/N),\label{eq:RMT-Moment}\\
\langle\mathrm{Tr}[\rho_{0}H_{\mathrm{sp}}(t)]^2\rangle & \sim\mathscr{O}(1/N),\,\forall\tau,\,s.\label{eq:RMT-Average}
\end{align}
By including the scaling with time, $t^2$, we find the final expression as in Eq.~\eqref{eq:QFI_rmt_sp}.

Numerical simulations shown in Fig~\ref{fig:RMT_numerical} are in perfect agreement with the analytical results, where the sensing Hamiltonian is, without loss of generality,
\begin{align}
    H_0 = 
    \begin{pmatrix}
        \mathbb{I}_{N/2} & 0 \\
        0 & -\mathbb{I}_{N/2}
    \end{pmatrix}.
\end{align}
As seen in the graph, the scaling in time is linear for the control protocol and quadratic for the state preparation protocol, and the simulated scaling is in great agreement with the fits.
\begin{figure}[h]
    \centering    \includegraphics[width=0.48\textwidth]{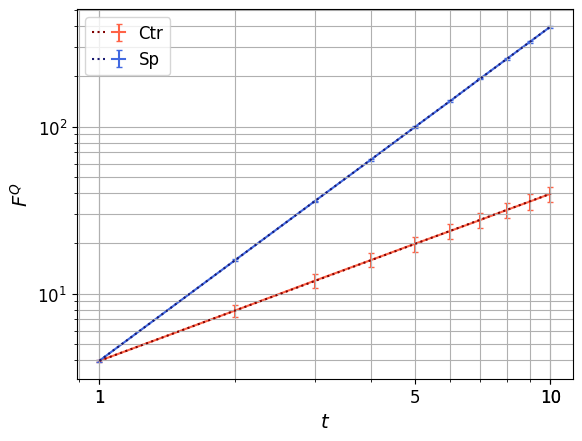}
    \caption{Numerical results for the RMM case. The QFI as a function of time for the control protocol (Ctr) and state preparation protocol (Sp), including linear fit for control protocol and quadratic fit for the state preparation protocol (dotted lines). As predicted by the analytics in the main text, the average QFI scaling with time $t$ is linear for the control protocol and quadratic for the state preparation protocol. Parameters: Hilbert space dimension $N=100$, encoding parameter $\theta=1$ and maximum time step $t=10$.
    The average was done over $100$ simulation runs.}
    \label{fig:RMT_numerical}
\end{figure}

We conclude by making a few remarks. First, it can be shown that (see Appendix~\ref{sec:bounds-QFI}) 
\begin{equation}
F_{\mathrm{ctr}}^{Q}(t)\!\leq \!4(\sum_{s=1}^{t}\!\sqrt{\mathrm{Var}[H_{\mathrm{ctr}}(s)]_{\rho_{0}}})^{2}\!\leq \!4t\sum_{s=1}^{t}\mathrm{Var}[H_{\mathrm{ctr}}(s)]_{\rho_{0}}\!\leq\!t^{2}\Delta_{H_{0}}^{2},\label{eq:FQ-bound}
\end{equation}
and 
\begin{equation}
F_{\mathrm{sp}}^{Q}(t)\leq t^2 \Delta_{H_{0}}^{2}.
\end{equation}
The bound $F_{\mathrm{ctr}}^{Q}(t),\,F_{\mathrm{sp}}^{Q}(t)\leq t^{2}\Delta_{H_{0}}^{2}$
in the case of Hamiltonian dynamics was first found by Boixo et al~\citep{boixo2007generalized}.
Its saturability condition was later discussed in Ref.~\citep{yuan2015optimal,pang2017optimal,yang2017quantum}.
It can be achieved if $[U,\,W]=0$ and $\rho_{0}$ is prepared in
the equal superposition of the eigenstates of $H_{0}$ corresponding
to maximum and minimum eigenvalues. 

Next, in general there is
no universal ordering between $F_{\mathrm{ctr}}^{Q}(t)$ and $F_{\mathrm{sp}}^{Q}(t)$.
For example, consider the limit $1\leq t\ll\lfloor1/\theta\Delta_{H_{0}}\rfloor$, where $H_{\mathrm{sp}}(t)=H_{\mathrm{ctr}}(t)$.
One then can choose $\rho_{0}$ and $U$ such that $U^{t}\rho_{0}U^{\dagger t}$ is either the superposition of the eigenstates of $H_{0}$ or simply one of the eigenstates of $H_{0}$.
In the former case, $\mathrm{Var}[H_{\mathrm{sp}}(t)]_{\rho_{0}}=\Delta_{H_{0}}^{2}/4$ and $F_{\mathrm{ctr}}^{Q}(t)\leq F_{\mathrm{sp}}^{Q}(t)=t^{2}\Delta_{H_{0}}^{2}$
while in the latter case $F_{\mathrm{sp}}^{Q}(t)=0\leq F_{\mathrm{ctr}}^{Q}(t)$. 
However, since the quantum coherence is destroyed by the randomness of the Haar gates in the control protocol, their ensemble average in the
limit of $N\to\infty$ displays a universal  ordering, i.e., $\langle F_{\mathrm{ctr}}^{Q}(t)\rangle < \langle F_{\mathrm{sp}}^{Q}(t)\rangle$. 

\subsection{Asymptotic scaling of QFI in RQC\label{sec:ent-RQC}}
In Ref.~\citep{shi2024universal,chu2023strongquantum}, it was shown
that when the initial state is separable or has low entanglement, the QFI in locally interacting systems scales linearly with the system size $L$ at short times, due to the Lieb-Robinson bound~\cite{lieb1972thefinite,anthonychen2023speedlimits}. While the scaling with respect to time is analytically hard to calculate for many-body Hamiltonian dynamics, it would be natural to consider the average scaling of the QFI for chaotic dynamics, generated from the random quantum circuit (RQC) model shown in Figs.~\ref{fig:RQC_notation}(a) and (b),
where tools from RQC theory can be leveraged. To mimic local interactions, we consider a model with two layers of two-site unitary gates.

We can extend the definitions of the QFI in Eqs.~(\ref{eq:ctr-FQ}) and (\ref{eq:sp-FQ}) to apply to the RQC case by making the following replacements:
$G_{\mathrm{ctr}}(t)=\sum_{s,\mu}h_{\mathrm{ctr}}^{[\mu]}(s)$
and $G_{\mathrm{sp}}(t)=t \sum_{\mu}h_{\mathrm{sp}}^{[\mu]}(t)$, where $h_{\mathrm{ctr}}^{[\mu]}(s)=(U^{\dagger}W^{\dagger})^{s}h_{0}^{[\mu]}(WU)^{s}$
and $h_{\mathrm{sp}}^{[\mu]}(t)\equiv U^{\dagger t}h_{0}^{[\mu]}U^{t}$. We further define $U$ as two layers of $q^2\times q^2$ nearest-neighbor unitary gates $U=\prod_{k=1}^{L/2}\mathsf{U}^{[2k,\,2k+1]}\prod_{\ell=1}^{L/2}\mathsf{U}^{[2\ell-1,\,2\ell]}$ and the encoding $W$ acts on each individual site as
$W=\prod_{\mu}\mathsf{W}_0^{[\mu]}$. We then find 
\begin{align}
F_{\mathrm{ctr}}^{Q}(t) & =4\sum_{s,\,\tau}\sum_{\mu,\,\nu}\mathrm{Cov}[h_{\mathrm{ctr}}^{[\mu]}(s),\,h_{\mathrm{ctr}}^{[\nu]}(\tau)]_{\rho_{0}}, \label{eq:Fq_rqc_ctr}\\
F_{\mathrm{sp}}^{Q}(t) & =4t^{2}\sum_{\mu,\,\nu}\mathrm{Cov}[h_{\mathrm{sp}}^{[\mu]}(t),\,h_{\mathrm{sp}}^{[\nu]}(t)]_{\rho_{0}}.\label{eq:Fq_rqc_sp}
\end{align}
\begin{figure}[h]
    \centering    \includegraphics[width=0.47\textwidth]{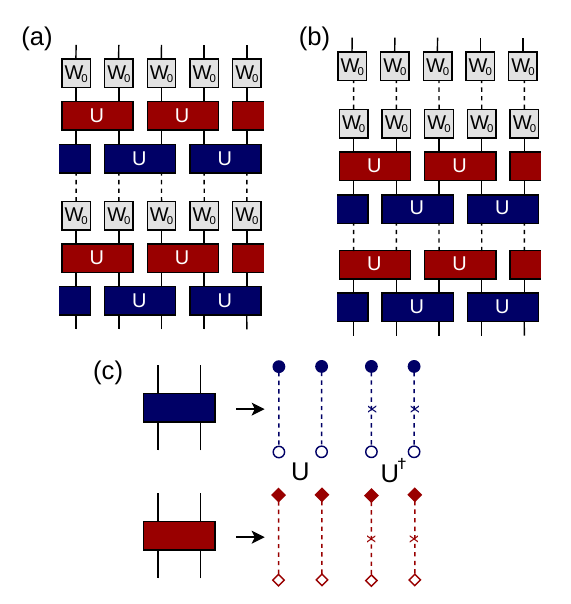}
    \caption{The sensing protocol with RQC and updated notations for two-site CUE random gates. Set of gates applied at each time step with color coded layers of unitary gates for (a) the control protocol and (b) the state preparation protocol. (c) Diagrammatic notation introduced to distinguish between unitary gates acting on neighboring sites in different directions. The first layer of unitary gates are represented by blue dashed lines with circles for indices and the second layer of unitary gates are represented by red dashed lines with rhombi for indices.}
    \label{fig:RQC_notation}
\end{figure}
To understand the behavior of the QFI in a random quantum circuit we need to modify the diagrammatic notation from the RMM case to the RQC case. We want to consider the contributions from each site individually and then employ Principle \ref{prp:many-body} to obtain the final QFI. Therefore, we need to extend our previous diagrammatic description, Fig.~\ref{fig:QFIt1}(a), to distinguish between random unitary gates acting on different neighboring sites. This leads to the updated notation for the unitary gates as seen in Fig.~\ref{fig:RQC_notation}, where the layers of unitary gates are separated by different colors (blue and red) and different index shapes (circles and rhombi). Indices from a blue(red) dotted line can then only be connected to indices from a blue(red) star-dotted line. We also need to ensure the bond constraints are fulfilled, i.e., that two neighboring sites have the same index contractions for their shared unitary gates. However, for the same reason as in the RMM case, leading order contracted diagrams in $q$ have a ``ladder'' structure (see Fig.~\ref{fig:RQC_diagrams}) and this structure needs to be imposed on all sites, ensuring the fulfillment of the bond constraints.

\subsubsection{Initial product state}
As seen in Fig.~\ref{fig:RQC_diagrams}, with the ``ladder'' structure of the diagram the local density operator $\rho^{[\mu]}$ at site $\mu$ appears alone in a T-loop as $\text{Tr}[\rho^{[\mu]}]$ . If the initial state is a product state we will always have $\text{Tr}[\rho^{[\mu]}]=1$, whereas for an initially entangled state it is possible that $\text{Tr}[\rho^{[\mu]}]=0$ for one or more sites. We will therefore start by considering an initial product state, $\rho_0=\otimes_{\mu=1}^L\rho^{[\mu]}$, which ensures that the ``ladder'' structure does not vanish due to the initial state. We then get the following scaling for the two protocols
\begin{align}
    \langle F^Q_{\text{ctr}}(t)\rangle =& \frac{4\text{Tr}[h_0^2]}{q} Lt + \mathscr{O}(1/q), 
    \label{eq:QFI_rqc_ctr} \\
    \langle F^Q_{\text{sp}}(t)\rangle =& \frac{4\text{Tr}[h_0^2]}{q} Lt^2 + \mathscr{O}(1/q),
    \label{eq:QFI_rqc_sp}
\end{align}
where $h_0=h^{[\mu]}_0$. We find a linear scaling with time for the control protocol and a quadratic scaling in time for the state preparation protocol. The scaling with lattice size $L$ is linear for both protocols. 

To show the above equations, we start by considering the control protocol and the QFI as defined in Eq.~\eqref{eq:Fq_rqc_ctr}, and rewrite the expression as 
\begin{widetext}
\begin{align}
    F_{\mathrm{ctr}}^{Q}(t)=\sum_{\mu,\,\nu=1}^L\bigg( \sum_{s,\,\tau=1}^t \underbrace{\text{Tr}[\rho_0 h_{\mathrm{ctr}}^{[\mu]}(s)\,h_{\mathrm{ctr}}^{[\nu]}(\tau)]}_{\text{(I)}} + \sum_{s,\,\tau=1}^t \underbrace{\text{Tr}[\rho_0h_{\mathrm{ctr}}^{[\mu]}(s)]\text{Tr}[\rho_0h_{\mathrm{ctr}}^{[\nu]}(\tau)]}_{\text{(II)}} \bigg).
    \label{eq:QFI_rqc}
\end{align}
We can note the similarity between the expression in the parenthesis and that in Eq.~\eqref{eq:rmt_ctr}, for the RMM case. Utilizing our knowledge from the RMM case, we therefore conclude that if we impose the leading order diagram structure, the term (I) is vanishing if $\tau\neq s$ and the term (II) is vanishing for all $\tau, \, s$. 
Next, we further rewrite (I) for $\tau=s$ to separately consider the two cases $\mu= \nu$ and $\mu\neq \nu$ 
\begin{align}
    \sum_{\mu,\, \nu = 1}^L \text{Tr}[\rho_0h_{\mathrm{ctr}}^{[\mu]}(\tau)^2]
    =& \sum_{\mu=1}^L \underbrace{ \text{Tr}[\rho_0(U^{\dagger}W^{\dagger})^{\tau} (h_{0}^{[\nu]})^2 (WU)^{\tau}]}_{\text{(III)}} + \sum_{\mu,\,\nu=1;\, \mu\neq \nu}^L \underbrace{\text{Tr}[\rho_0 (U^{\dagger}W^{\dagger})^{\tau} h_{0}^{[\mu]}  h_{0}^{[\nu]} (WU)^{\tau}]}_{\text{(IV)}} . \nonumber \\
    \label{eq:QFI_rqc2}
\end{align}
\end{widetext}

\begin{figure}[h]
    \centering    \includegraphics[width=0.48\textwidth]{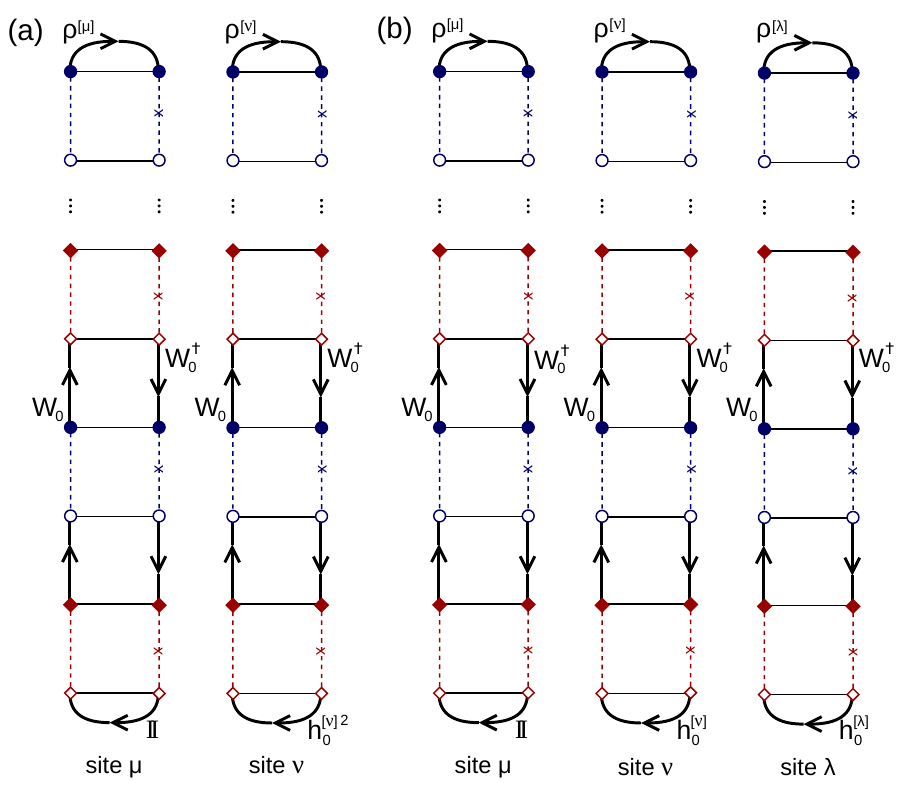}
    \caption{Leading order contracted diagrams in $q$ for the control protocol, as defined in Eq.~\eqref{eq:QFI_rqc2}, for each site. (a) For term (III) $\mu=1, \dots, L; \, \mu\neq \nu$ and (b) for term (IV) $\mu=1, \dots, L; \, \mu\neq \nu,\lambda$. Only the diagrams in (a) are non-vanishing at all sites.}
    \label{fig:RQC_diagrams}
\end{figure}
The leading $N$-order contracted diagrams for terms (III) and (IV) can be seen in Fig.~\ref{fig:RQC_diagrams}(a) and (b) respectively. Consider first the term (IV), for which $\mu\neq \nu$. As seen in the diagram, the single site diagrams at sites $\nu$ and $\lambda$ contain a T-loop of $\text{Tr}[h_0]=0$, which means that their single site contribution is vanishing. Since the full diagrammatic contribution from all sites is a product of the single site contributions, this means that all of these terms are vanishing to the leading order in $q$.
The term (III), on the other hand, is not vanishing; at site $\nu$ the diagrammatic contribution is $q^{-1}\text{Tr}[h_0^2]$ and for $\mu\neq \nu$ the diagrams contribute $1$.
The terms (I) and (IV) do have non-vanishing diagrams with either a lower number of U- or T-loops, which means that these diagrams, where either $\tau\neq s$ or $\mu \neq \nu$ or both, scale at best as $\mathscr{O}(1/q)$. We therefore conclude the following scaling for the two terms in Eq.~\eqref{eq:QFI_rqc}
\begin{align}
\langle\mathrm{Tr}[\rho_{0}h_{\mathrm{ctr}}^{[\mu]}(s)h_{\mathrm{ctr}}^{[\nu]}(\tau)]\rangle & =\frac{\mathrm{Tr}\left[(h_0^{[\mu]})^{2}\right]}{q}\delta_{\tau,\,s}\delta_{\mu,\,\nu}+\mathscr{O}(1/q),\label{eq:RQC-Moment}\\
\langle\mathrm{Tr}[\rho_{0}h_{\mathrm{ctr}}^{[\mu]}(s)]\mathrm{Tr}[\rho_{0}h_{\mathrm{ctr}}^{[\nu]}(\tau)]\rangle & =\mathscr{O}(1/q).\label{eq:RQC-Average}
\end{align}
By including the sums over the time $\tau=1,...,t$ and the sites $\mu=1,...,L$, we finally obtain the expression in Eq.~\eqref{eq:QFI_rqc_ctr}.



For the state preparation protocol, we can rewrite the sum defined in Eq.~\eqref{eq:Fq_rqc_sp} as
\begin{widetext}
\begin{align}
    \sum_{\mu,\,\nu=1}^L\mathrm{Cov}[h^{[\mu]}_{\mathrm{sp}}(t),\,h^{[\nu]}_{\mathrm{sp}}(t)]_{\rho_{0}}
    = \sum_{\mu,\,\nu=1}^L\ \text{Tr}[\rho_0 h^{[\mu]}_{\mathrm{sp}}(t)\,h^{[\nu]}_{\mathrm{sp}}(t)] + \sum_{\mu=1}^L \underbrace{\text{Tr}[\rho_0h^{[\mu]}_{\mathrm{sp}}(t)^2]}_{\text{(V)}} +\sum_{\mu,\,\nu=1}^t \text{Tr}[\rho_0h^{[\mu]}_{\mathrm{sp}}(t)]\text{Tr}[\rho_0h^{[\nu]}_{\mathrm{sp}}(t)].
\end{align}
\end{widetext}
Drawing on our knowledge from the above control protocol and state preparation protocol for the RMM case, we can conclude that the terms (V) are the only ones that are non-vanishing to the leading order in $q$. We therefore conclude the following scaling for the two terms in the covariance
\begin{align}
\langle\mathrm{Tr}[\rho_{0}h^{[\mu]}_{\mathrm{sp}}(t)h^{[\nu]}_{\mathrm{sp}}(t)]\rangle & =\frac{\mathrm{Tr}(h_0^{2})}{q}\delta_{\mu,\,\nu}+\mathscr{O}(1/q),\label{eq:RQC-Moment}\\
\langle\mathrm{Tr}[\rho_{0}h^{[\mu]}_{\mathrm{sp}}(t)]\mathrm{Tr}[\rho_{0}h^{[\nu]}_{\mathrm{sp}}(t)]\rangle & =\mathscr{O}(1/q),\label{eq:RQC-Average}
\end{align}
from which we can readily obtain the final scaling in Eq.~\eqref{eq:QFI_rqc_sp}.

\subsubsection{Initial entangled state}
{We now extend the above results derived for an initial product state to that of an arbitrary entangled initial state. Let {\{$| e_{i_{\mu}}\rangle$\}} be a complete set of orthogonal eigenvectors of $h_0^{[\mu]}$ at site $\mu$ ($i = 1,\dots, q$) and rewrite the density matrix of the whole system in terms of the local basis vectors as $\rho_0 = \sum_{\{i_\mu,\,j_\mu\}}^q p_{i_1\dots i_L j_1\dots j_L} \bigotimes_{\mu}^L |e_{i_\mu}\rangle\langle e_{j_\mu}|$. Consider the diagrammatic method for finding the QFI; for one or more sites $\mu$ the local density operator may now be $\rho^{[\mu]}=|e_{i_\mu}\rangle \langle e_{j_\mu}|$ with $i_\mu\neq j_\mu$, where the true contribution can be weighted by the factor $p_{i_1\dots i_L j_1\dots j_L}$ in the end. For this site the ``ladder'' diagram structure is vanishing, as it contains the T-loop $\text{Tr}[\rho^{[\mu]}] = \text{Tr}[|e_{i_\mu}\rangle \langle e_{j_\mu}|]=0$. 
\begin{figure}[h]
    \centering    \includegraphics[width=0.47\textwidth]{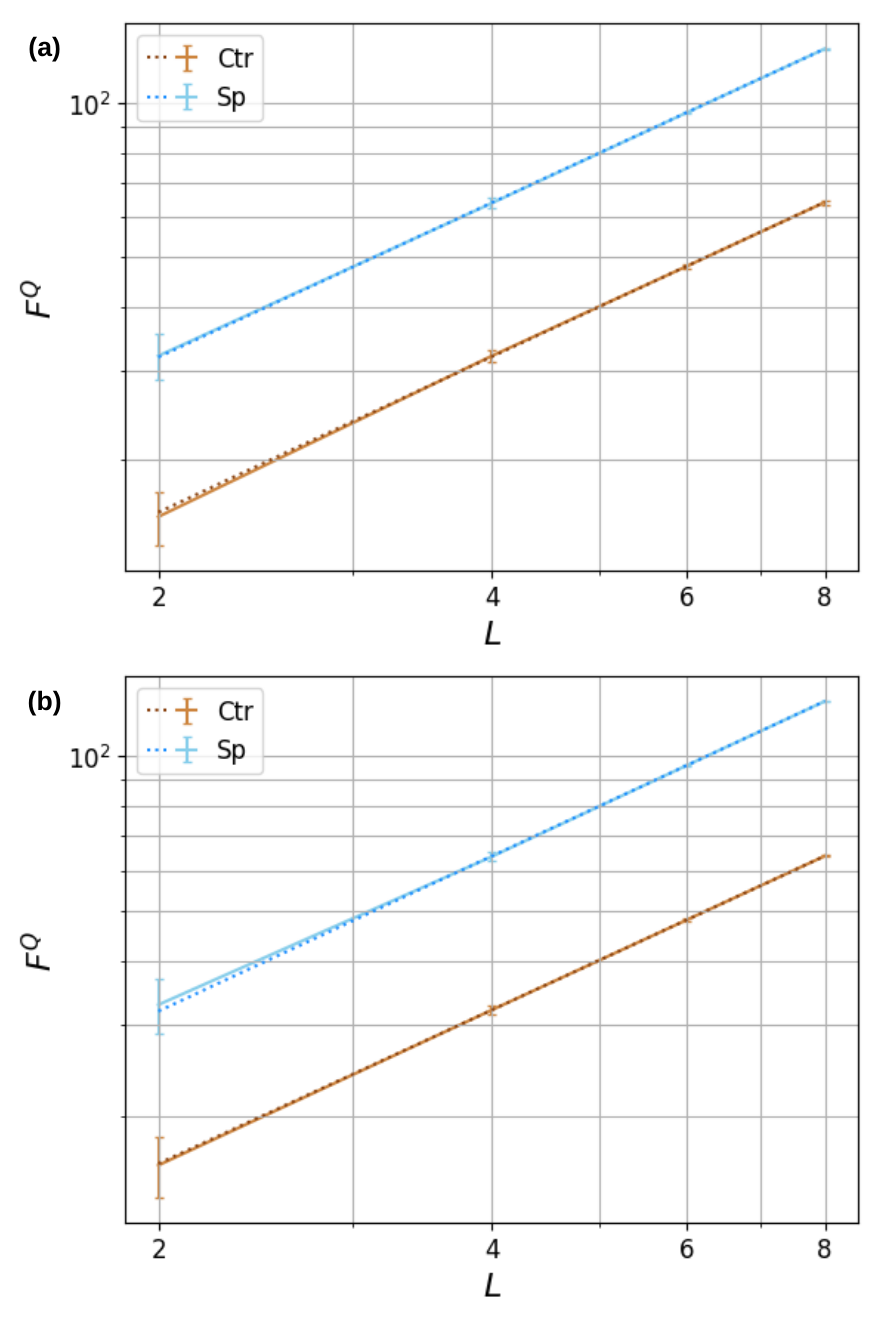}
    \caption{Numerical results for the QFI as a function of the system size $L$ for (a) and initial product state and (b) and initial entangled state. Solid lines are simulations and dotted lines are results from linear fit to the data for both the control (Ctr) and state preparation (Sp) protocol. The scaling behavior is the same regardless of choice of initial state and protocol. Parameters: encoding parameter $\theta=1$, local dimension $q=8$ and time $t=2$. The average was done over $100$ simulation runs.}
    \label{fig:RQC_graph_L}
\end{figure}
However, recall that $[h_0^{[\mu]}, W_0^{[\mu]}]=0$, $\mathsf{W}_0^{[\mu]\dagger} \mathsf{W}^{[\mu]}_0 = \mathbb{I}$ and that for every operator $W_0$ in a T-loop there will be equally many operators $\mathsf{W}_0^{[\mu]\dagger}$, due to the symmetry of their appearance in the diagrams. Then, as we have expresses the density operator in the basis of the local Hamiltonian, we can conclude that as long as $i_\mu\neq j_\mu$, a T-loop involving $\rho^{[\mu]} = |e_{i_\mu}\rangle \langle e_{j_\mu}|$ must vanish because $\rho^{[\mu]}$ commutes with $h_0^{[\mu]}$ and $\mathsf{W}_0^{\mathrm{[\mu]}}$. As such, the off-diagonal contributions from an initially entangled state do not contribute to the final scaling of the QFI. The final scaling is therefore the same as for the initial product state for both the control and state preparation protocols, Eqs.~\eqref{eq:QFI_rqc_ctr} and \eqref{eq:QFI_rqc_sp} respectively. }

\subsubsection{Numerical results}
Numerical simulations of the QFI were done with local Hamiltonian
\begin{align}
    h_0 = 
    \begin{pmatrix}
        \mathbb{I}_{q/2} & 0 \\
        0 & -\mathbb{I}_{q/2}
    \end{pmatrix},
\end{align}
where $q$ is the local dimension of a site. 
\begin{figure}[h]
    \centering    \includegraphics[width=0.47\textwidth]{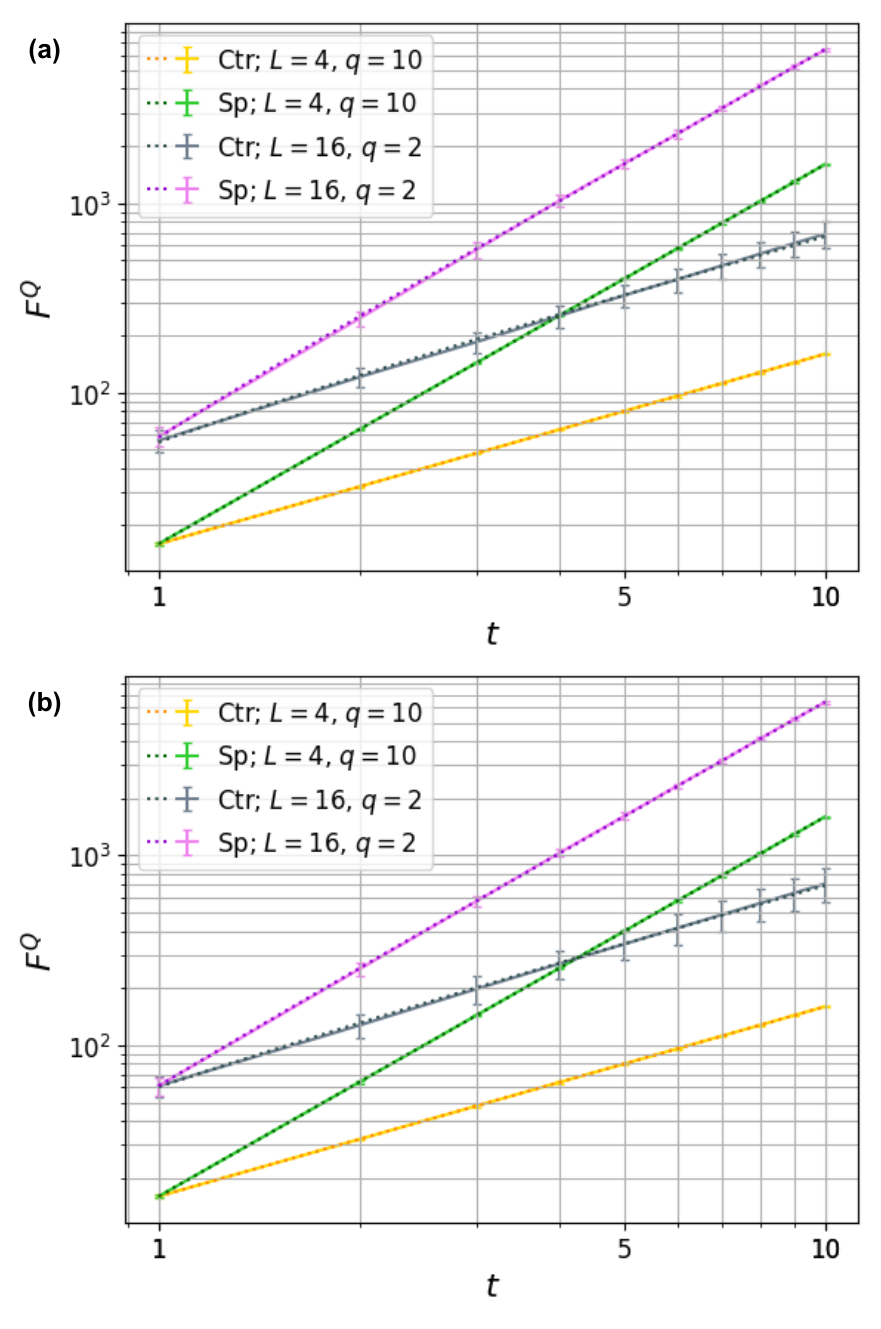}
    \caption{Numerical results for the QFI as a function of time for (a) an initial product state and (b) an initial entangled GHZ-state.
    Solid lines are simulations and dotted lines are results from fitting a linear and quadratic curve to the control protocol (Ctr) and state preparation (Sp) protocol respectively. The graphs show that the scaling behavior is the same regardless of the initial state, but differ between the two protocols. The scaling is the same as long as the total system size $N=q^L$ is large. Parameters: encoding parameter $\theta=1$, the average was done over $100$ simulation runs.
    }
    \label{fig:RQC_graph_t}
\end{figure}
Once again, the numerical results are in perfect agreement with the analytical results. as seen in Figs.~\ref{fig:RQC_graph_L} and ~\ref{fig:RQC_graph_t}: The scaling of the QFI is linearly with respect to. system size for both protocols while it scales linearly with respect to. time for the control protocol and quadratic with respect to time for the state-preparation protocol, regardless of initial state. 

We conclude by discussing the connection between the RMM and RQC cases in the asymptotic regimes where $N\to\infty$ and $q\to\infty$. Let us focus on the control protocol for both cases. As we argue in Sec.~\ref{sec:globalCUE},  in the limit $q\to\infty$, global RMM unitary gate can predict the behavior of the local interactions to the leading order. As such, we can intuitively treat the effect of local CUE gates in the RQC case as a global CUE gate,  the encoding Hamiltonian is given Eq.~\ref{eq:H0-h0-connection} and apply Eq.~\eqref{eq:QFI_rmt_ctr}. Note that $N=q^L$,  $\text{Tr}[h_0^{[\mu]}]=0$, and $\text{Tr}[H_0^2]=Lq^{L-1}\text{Tr}[h_0^2]$, where $h_0\equiv h_0^{[\mu]}$. With these observations, it can be readily observed that
\begin{equation}
    \langle F^Q_\text{ctr}(t)\rangle_\text{RMM} = 4t\frac{\text{Tr}[H_0^2]}{N} = 4tL\frac{\text{Tr}[h_0^2]}{q} = \langle F^Q_\text{ctr}(t)\rangle_\text{RQC},\label{eq:RMM-RQC-Equiv}
\end{equation}
which confirms the RMM description of RQC in the limit $q\to\infty$. The same calculation can be done for the state preparation protocol, with the same conclusion.

\section{Bounding the fluctuation of the QFI using concentration inequalities\label{sec:QFI-fluctuations}}
Having discussed the scaling of the ensemble average of the QFI, one might wonder what the fluctuation of the QFI in a typical realization of the control and state-preparation protocols is. In this section, we employ the concentration inequality for the special unitary group to bound the fluctuation of the QFI. As we shall see in what follows, in the limit $N\to\infty$ for the RMM case and $q\to\infty$ for the RQC case, the fluctuation of the QFI is exponentially suppressed. The QFI for any realization drawn from the random ensemble converges to the average one.

We note that $\mathrm{U}(N)\cong \mathrm{U}(1)\times \mathrm{SU}(N)/\mathbb{Z}_N$, where $\mathbb{Z}_N={0,1,\cdots, N}$ is the Abelian group with the usual addition as the group multiplication. Therefore, the probability measure $d\mu(U)=d\mu(V) d\phi/2\pi$, where $U\in \mathrm{U}(N)$, $V\in \mathrm{SU}(N)$ and $\phi\in [0,2\pi]$. Clearly one can see that as long as $f(U)$ is gauge-invariant, the concentration inequality for the average over the $\mathrm{U}(N)$ and $\mathrm{SU}(N)$ groups are the same. Therefore, we shall focus on the $\mathrm{SU}(N)$ group in what follows. 

Before we discuss the concentration inequality, let us review the Riemann geometry of the $\mathrm{SU}(N)$ manifold. Consider
a point $U\in \mathrm{SU}(N)$, then the tangent vector at $U$ is defined as follows
\begin{equation}
\mathcal{T}_{U}\mathrm{SU}(N)\equiv\left\{ \left.\frac{dx(\epsilon)}{d\epsilon}\right|_{\epsilon=0}, \begin{matrix}x(\epsilon)\text{is a smooth curve }\\
\text{on }\mathrm{SU}(N),\;x(0)=U
\end{matrix}\right\} .
\end{equation}
Note that $x(\epsilon)$ can be parameterized as follows: $x(\epsilon)=e^{-\mathrm{i}X\epsilon}U$
with $X=X^{\dagger}$ and $\mathrm{Tr}X=0$. Any vector in $\mathcal{T}_{U}\mathrm{SU}(N)$
can be identified with $\mathcal{X}_{U}=-\mathrm{i}XU$. Given two
tangent vectors $\mathcal{X}_{U}$ and $\mathcal{Y}_{U}$ in $\mathcal{T}_{U}\mathrm{SU}(N)$,
upon viewing $\mathrm{SU}(N)$ as a the submanifold of $\mathbb{R}^{2N^{2}}$
the natural inner product on $\mathcal{T}_{U}\mathrm{SU}(N)$ inherited from
the Euclidean inner product is 
\begin{equation}
\langle\mathcal{X}_{U},\,\mathcal{Y}_{U}\rangle\equiv\mathrm{Re}\mathrm{Tr}(\mathcal{X}_{U}^{\dagger}\mathcal{Y}_{U})=\mathrm{Tr}(XY).\label{eq:IP-def}
\end{equation}
 The gradient of any real-valued smooth function $f[U]:\mathrm{SU}(N)\rightarrow\mathbb{R}$
is defined as 
\begin{equation}
\langle\nabla_{U}f,\,\mathcal{X}_{U}\rangle\equiv\frac{df[e^{-\mathrm{i}X\epsilon}U]}{ d\epsilon}\bigg|_{\epsilon=0},\,\forall\mathcal{X}_{U}=-\mathrm{i}XU.\label{eq:nabla-f-def-RMT}
\end{equation}
Clearly the above definition is reminiscent of the gradient in multi-variable
calculus in the following sense 
\begin{equation}
\nabla_{\mathcal{Y}}f\cdot\mathcal{X}=\frac{df[\mathcal{Y}+\epsilon \mathcal{X}]}{d\epsilon}\Bigg|_{\epsilon=0}.
\end{equation}
Operationally, when $f[U]$ takes the form of a product of the traces
of matrices that are polynomial of $U$, one can easily identify $\nabla_{U}f$
from the r.h.s.~of Eq.~\eqref{eq:nabla-f-def-RMT}. For example,
for $f[U]=\mathrm{Tr}(UAU^{\dagger}B)$ where $A$ and $B$ are Hermitian
matrices, using Eq.~\eqref{eq:nabla-f-def-RMT}, we obtain 
\begin{align}
\langle\nabla_{U}f,\,\mathcal{X}_{U}\rangle & =\frac{d\mathrm{Tr}(e^{-\mathrm{i}X\epsilon}UAU^{\dagger}e^{\mathrm{i}X\epsilon}B)}{d\epsilon}\bigg|_{t=0}\nonumber \\
 & =\mathrm{Tr}\left([UAU^{\dagger},B](-\mathrm{i}X)\right).\label{eq:nabla-f-Example}
\end{align}
Then immediately one can recognize that 
\begin{equation}
\nabla_{U}f=[B,UAU^{\dagger}]U.
\end{equation}

On the product manifold $\underbrace{\mathrm{SU}(q)\times \mathrm{SU}(q)\times\cdots\times \mathrm{SU}(q)}_{\text{number of }\mathrm{SU}(q)=L}$, 
we note that the corresponding tangent space has the structure 
\begin{equation}
\mathcal{T}_{\{\mathsf{U}_{[\alpha]}\}}\mathrm{SU}(q)\times \mathrm{SU}(q)\times\cdots\times \mathrm{SU}(q)=\oplus_{\alpha}\mathcal{T}_{U^{[\alpha]}}\mathrm{SU}(q).
\end{equation}
This implies that any tangent vector $\mathcal{X}_{\{\mathsf{U}^{[\alpha]}\}}$
on $\mathcal{T}_{\{\mathsf{U}^{[\alpha]}\}}\mathrm{SU}(q)\times \mathrm{SU}(q)\times\cdots\times \mathrm{SU}(q)$
can be identified with $\{\mathcal{X}_{\mathsf{U}^{[\alpha]}}=-\mathrm{i}X_{\alpha}\mathsf{U}^{[\alpha]}\}=\{-\mathrm{i}X_{1}\mathsf{U}^{[1]},\,\cdots,\,-\mathrm{i}X_{1}\mathsf{U}^{[L]}\}$.
Therefore, given two tangent vectors $\mathcal{X}_{\{\mathsf{U}^{[\alpha]}\}}$
and $\mathcal{Y}_{\{\mathsf{U}^{[\alpha]}\}}$, there is a natural
inner product inherited from $\mathrm{SU}(q)$ defined as follows:
\begin{equation}
\langle\mathcal{X}_{\{\mathsf{U}^{[\alpha]}\}},\,\mathcal{Y}_{\{\mathsf{U}^{[\alpha]}\}}\rangle\equiv\sum_{\alpha=1}^{L}\langle\mathcal{X}_{\mathsf{U}^{[\alpha]}},\,\mathcal{Y}_{\mathsf{U}^{[\alpha]}}\rangle=\sum_{\alpha=1}^{L}\mathrm{Tr}(X_{\alpha}Y_{\alpha}).
\end{equation}
Similarly, the gradient of any real-valued smooth function on $\mathrm{SU}(q)\times \mathrm{SU}(q)\times\cdots\times \mathrm{SU}(q)$
is defined as 
\begin{equation}
\langle\nabla_{\{\mathsf{U}^{[\alpha]}\}}f,\,\mathcal{X}_{\{\mathsf{U}^{[\alpha]}\}}\rangle=\frac{df[\{e^{-\mathrm{i}X_{\alpha}\epsilon}\mathsf{U}^{[\alpha]}\}]}{d\epsilon}\bigg|_{\epsilon=0}.
\end{equation}
Furthermore, when $f[\{\mathsf{U}^{[\alpha]}\}]$ takes the form of
a product of traces of matrices that are a polynomials of $\{\mathsf{U}^{[\alpha]}\}$,
one can identify $\nabla_{\{\mathsf{U}^{[\alpha]}\}}f$ operationally
from its definition, as shown in Eq.~\eqref{eq:nabla-f-Example}.

Now we are in a position to discuss the concentration equality for
smooth real-valued functions on the manifold $\mathrm{SU}(N)$ and $\underbrace{\mathrm{SU}(q)\times \mathrm{SU}(q)\times\cdots\times \mathrm{SU}(q)}_{\text{number of }\mathrm{SU}(q)=L}$
associated with the Haar measure and the product Haar measure. Consider
two point $U$ and $V$ on $\mathrm{SU}(N)$ and the geodesic $\gamma(s)$
that connects them satisfying $\gamma(0)=U$ and $\gamma(1)=V$. From
the fundamental theorem of calculus, we know that
\begin{equation}
f[U]-f[V]=\int_{0}^{1}ds\frac{df\left(\gamma(s)\right)}{ds}=\int_{0}^{1}ds\langle\nabla_{\gamma(s)}f,\,\Gamma_{\gamma(s)}\rangle,
\end{equation}
where $\Gamma_{\gamma(s)}$ is the unit tangent vector at $\gamma(s)$
along the geodesic and we have used the definition in Eq.~\eqref{eq:nabla-f-def-RMT}.
Therefore, we obtain
\begin{align}
|f[U]-f[V]| & \leq\int_{0}^{1}ds\big|\langle\nabla_{\gamma(s)}f,\,\Gamma_{\gamma(s)}\rangle\big|\nonumber \\
 & \leq\int_{0}^{1}ds\|\nabla_{\gamma(s)}f\|_{\mathrm{HS}}\|\Gamma_{\gamma(s)}\|_{\mathrm{HS}}\nonumber \\
 & \leq\mathcal{L}d[U,V],
\end{align}
where we have used the Cauchy-Schwarz inequality to obtain the second
last inequality, $\|\bullet\|_{\mathrm{HS}}$ is the Hilbert-Schmidt
norm (also known as the Frobenius norm or Schatten $2-$norm) and
\begin{equation}
\mathcal{L}\equiv\max_{U}\|\nabla_{U}f\|_{\mathrm{HS}},\label{eq:L-RMT}
\end{equation}
is the called the Lipschitz constant and $d[U,V]=\int_{0}^{1}ds\|\Gamma_{\gamma(s)}\|_{\mathrm{HS}}$
is the geodesic distance between $U$ and $V$. Clearly, since the
manifold $\mathrm{SU}(N)$ is compact for any smooth function, the maximization
in Eq.~\eqref{eq:L-RMT} always exists. In general, functions
defined on a manifold are Lipschitz if $|f[U]-f[V]|\leq\mathcal{L}d[U,V]$. 

On the other hand the Haar measure on $\mathrm{SU}(N)$ satisfies the logarithmic
Sobolev inequality, which leads to the concentration inequality for
a Lipschitz function $f[U]$~\cite{anderson2010anintroduction,vershynin2018highdimensional}:
\begin{equation}
\mathrm{\mu}\left(\big|f[U]-\mathbb{E}f[U]\big|\ge\delta\right)\leq2\exp\left(-\frac{N\delta^{2}}{4\mathcal{L}^{2}}\right),\label{eq:Con-Ineq-RMT}
\end{equation}
where the Lipschitz constant is defined as in Eq.~\eqref{eq:L-RMT}.
Generally, a smooth function $f[\{\mathsf{U}^{[\alpha]}\}_{\alpha=1}^{L}]$
on the manifold $\mathrm{SU}(q)\times \mathrm{SU}(q)\times\cdots\times \mathrm{SU}(q)$ also
satisfies the following concentration inequality~\cite{anderson2010anintroduction}:
\begin{equation}
\mathrm{\mu}\left(\big|f[\{\mathsf{U}^{[\alpha]}\}]-\mathbb{E}f[\{\mathsf{U}^{[\alpha]}\}]\big|\ge\delta\right)\leq2\exp\left(-\frac{q\delta^{2}}{4\mathcal{L}^{2}}\right),\label{eq:Con-Ineq-RQC}
\end{equation}
where 
\begin{equation}
\mathcal{L}\equiv\max_{\{\mathsf{U}^{[\alpha]}\}}\|\nabla_{\{\mathsf{U}^{[\alpha]}\}}f\|_{\mathrm{HS}},
\end{equation}
and we have used the mathematical fact that if a measure on a manifold
$M$ satisfies the logarithmic Sobolev inequality, the product measure
on the product manifold $M\times M\times\cdots\times M$ also satisfies
the logarithmic Sobolev inequality~\cite{anderson2010anintroduction}.

Before we apply the concentration inequality to the RMM and RQC
cases, let us note an important observation:

\begin{obs}\label{obs:bi-invariance}Given a smooth, real-valued function
$f[U]$ on $\mathrm{SU}(N)$ and a unitary matrix $W$ in $\mathrm{SU}(N)$, the gradient
of $\tilde{f}[U]=f[UW]$ is related to the gradient of $f[U]$ via
\begin{equation}
\nabla_{U}\tilde{f}=(\nabla_{UW}f)W^{\dagger}\label{eq:grad-relation},
\end{equation}
 and hence their Lipschitz constant are the same. 

\end{obs}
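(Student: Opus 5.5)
The plan is to work directly from the definition of the gradient in Eq.~\eqref{eq:nabla-f-def-RMT}, applied to $\tilde f$, and then use that right multiplication by a unitary is an isometry for the Hilbert--Schmidt norm. First I fix an arbitrary tangent vector $\mathcal{X}_U=-\mathrm{i}XU$ at $U$, with $X=X^\dagger$ and $\mathrm{Tr}X=0$. By definition,
\begin{equation}
\langle\nabla_U\tilde f,\,\mathcal{X}_U\rangle=\frac{d\tilde f[e^{-\mathrm{i}X\epsilon}U]}{d\epsilon}\bigg|_{\epsilon=0}=\frac{d f[e^{-\mathrm{i}X\epsilon}(UW)]}{d\epsilon}\bigg|_{\epsilon=0}=\langle\nabla_{UW}f,\,\mathcal{X}_{UW}\rangle,
\end{equation}
where $\mathcal{X}_{UW}=-\mathrm{i}XUW=\mathcal{X}_UW$ is a legitimate tangent vector at $UW$, because the curve $e^{-\mathrm{i}X\epsilon}UW$ lies in $\mathrm{SU}(N)$ and passes through $UW$. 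The key point is that left-multiplication curves are preserved under right translation by $W$, and the generator $X$ is unchanged.

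Next I transfer this inner product back to $U$. By Eq.~\eqref{eq:IP-def}, the inner product is $\mathrm{Re}\,\mathrm{Tr}(A^\dagger B)$. Using cyclicity of the trace together with $WW^\dagger=\mathbb{I}$,
\begin{equation}
\langle\nabla_{UW}f,\,\mathcal{X}_UW\rangle=\mathrm{Re}\,\mathrm{Tr}\big((\nabla_{UW}f)^\dagger\mathcal{X}_UW\big)=\mathrm{Re}\,\mathrm{Tr}\big(W(\nabla_{UW}f)^\dagger\mathcal{X}_U\big)=\langle(\nabla_{UW}f)W^\dagger,\,\mathcal{X}_U\rangle .
\end{equation}
I also need $(\nabla_{UW}f)W^\dagger$ to be a tangent vector at $U$. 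Writing $\nabla_{UW}f=-\mathrm{i}Y\,UW$ with $Y$ Hermitian and traceless gives $(\nabla_{UW}f)W^\dagger=-\mathrm{i}YU\in\mathcal{T}_U\mathrm{SU}(N)$. Since the identity holds for every $\mathcal{X}_U$ and the inner product is nondegenerate on the tangent space, the gradient is uniquely determined, which yields Eq.~\eqref{eq:grad-relation}.

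For the Lipschitz constants I compute $\|\nabla_U\tilde f\|_{\mathrm{HS}}=\|(\nabla_{UW}f)W^\dagger\|_{\mathrm{HS}}=\|\nabla_{UW}f\|_{\mathrm{HS}}$ by unitary invariance of the Hilbert--Schmidt norm. Because $U\mapsto UW$ is a bijection of $\mathrm{SU}(N)$, maximizing over $U$ is the same as maximizing over $UW$, so the two Lipschitz constants in Eq.~\eqref{eq:L-RMT} coincide. The only subtle step is checking that the transported vectors stay in the tangent space, so that uniqueness of the gradient applies. The same argument carries over factor by factor to the product manifold. It also applies when $W\in\mathrm{U}(N)$, which is the case relevant for the sensing gates, because right multiplication then still maps $\mathrm{SU}(N)$ into a coset on which the same computation holds, or alternatively one can absorb the phase by gauge invariance.
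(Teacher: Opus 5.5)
Your proposal is correct and follows essentially the same route as the paper. Both apply the gradient definition to $\tilde f$, identify the result as $\langle\nabla_{UW}f,\,\mathcal{X}_U W\rangle$, move $W$ across by cyclicity of the trace to read off $(\nabla_{UW}f)W^\dagger$, and then equate the Lipschitz constants via unitary invariance of the Hilbert--Schmidt norm and bijectivity of $U\mapsto UW$; your additional checks (using $\mathrm{Re}\,\mathrm{Tr}$, verifying that $(\nabla_{UW}f)W^\dagger\in\mathcal{T}_U\mathrm{SU}(N)$, and invoking nondegeneracy for uniqueness) simply make explicit what the paper leaves implicit.
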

\begin{proof}
We prove the above by exploiting the inner product structure~\eqref{eq:IP-def}.
By definition 
\begin{equation}
\langle\nabla_{U}\tilde{f},\,\mathcal{X}_{U}\rangle\equiv\frac{df[e^{-\mathrm{i}X\epsilon}UW]}{d\epsilon}\bigg|_{\epsilon=0}=\langle\nabla_{UW}f,\,\mathcal{X}_{UW}\rangle,
\end{equation}
where $\mathcal{X}_{U}=-\mathrm{i}XU$ and $\mathcal{X}_{UW}=-\mathrm{i}XUW$.
Therefore, we conclude that 
\begin{align}
\langle\nabla_{UW}f,\,\mathcal{X}_{UW}\rangle & =\mathrm{Tr}\left((\nabla_{U}f)^{\dagger}(-\mathrm{i}XUW)\right)\nonumber \\
 & =\mathrm{Tr}\left((\nabla_{U}fW^{\dagger})^{\dagger}(-\mathrm{i}XU)\right),
\end{align}
which confirms Eq.~\eqref{eq:grad-relation}. It then follows
naturally that 
\begin{equation}
\max_{U}\|\nabla_{U}\tilde{f}\|_{\mathrm{HS}}=\max_{U}\|\nabla_{UW}f\|_{\mathrm{HS}}=\max_{U}\|\nabla_{U}f\|_{\mathrm{HS}},
\end{equation}
which implies that their Lipschitz constants are the same.
\end{proof}
A few comments are in order. First, the same observations hold for the
real-valued smooth functions $\tilde{f}(\{U^{[\alpha]}\})\equiv f(\{U^{[\alpha]}W^{[\alpha]}\})$
and $f(\{U^{[\alpha]}\})$, where $W^{[\alpha]}$ is a given unitary
matrix in $\mathrm{SU}(N)$. Thanks to these observations, one can simplify the
analysis of the Lipschitz constants, as we will see shortly. Finally,
from Eqs.~\eqref{eq:Con-Ineq-RMT} and~\eqref{eq:Con-Ineq-RQC},
it is clear that a large Lipschitz constant corresponds to large fluctuations.
As such, in practice, in order to bound the fluctuations of $f(U)$,
it is sufficient and technically feasible to compute the upper bound
of the Lipschitz constant. To this end, let us note that
\begin{align}
\|\nabla_{U}f\|_{\mathrm{HS}} & \leq\max_{\|\mathcal{X}_{U}\|_{\mathrm{HS}}=1}|\langle\nabla_{U}f,\,\mathcal{X}_{U}\rangle|\nonumber \\
 & =\max_{\|\mathcal{X}_{U}\|_{\mathrm{HS}}=1}\bigg|\frac{df(e^{-\mathrm{i}X\epsilon}U)}{d\epsilon}\big|_{\epsilon=0}\bigg|.
\end{align}
If we find 
\begin{equation}
\bigg|\frac{df(e^{-\mathrm{i}X\epsilon}U)}{d\epsilon}\big|_{\epsilon=0}\bigg|\leq\mathcal{M}\|X\|_{\mathrm{HS}},
\end{equation}
then we immediately know that $\mathcal{L}\leq\mathcal{M}$. A similar
trick also applies to the multi-variable case. 

In Appendix~\ref{sec:upper-bounds}, we employ this
method to show that for the control and state-preparation
protocols in the RMM and RQC cases, 
\begin{equation}
\mathcal{L}_{\mathrm{ctr-RMM}}\leq4t^{2}(t+1)\Delta_{H_{0}},\,\mathcal{L}_{\mathrm{sp-RMM}}\leq8t\Delta_{H_{0}},\label{eq:Lip-Bound-RMT}
\end{equation}
and 
\begin{equation}
\mathcal{L}_{\mathrm{ctr-RMM}}\leq4t^{2}(t+1)L\Delta_{h_{0}},\,\mathcal{L}_{\mathrm{sp-RMM}}\leq8tL\Delta_{h_{0}},\label{Lip-Bound-RQC}
\end{equation}
where $h_0\equiv h_{0}^{[\mu]}$ and the corresponding concentration inequalities are given by Eq.~\eqref{eq:Con-Ineq-RMT} and Eq.~\eqref{eq:Con-Ineq-RQC}, respectively.

It is clear that for any physical sensing Hamiltonian $H_0$ and $h_0$, their spectral widths $\Delta_{H_{0}}$ and $\Delta_{h_0}$
scale as constants as $N\to\infty$ and $q\to\infty$, respectively. Therefore the upper bounds of the corresponding Lipschitz constant remains finite in these limits.
This implies that in the limit $N\to\infty$ and $q\to\infty$ there is no fluctuation of the QFI.

\section{Discussion and Conclusion\label{sec:Dis-Con}}
We have seen from both the numerical and analytical results, the equivalence between the RMM and RQC cases in the asymptotic regime. Let us now discuss the non-asymptotic regime of small $q$ in the RQC case, which is beyond the reach of the asymptotic large $q$ analysis here. Note that for the state-preparation, the scaling of the QFI with respect to time $t$ is always the optimal quadratic scaling, regardless of the values of $q$ and $L$. Therefore, in what follows, we shall focus on the non-asymptotic $t$ scaling for the control protocol and the non-asymptotic $L$ scaling for both protocols.

First, we consider the small $q$ and large $L$ regime. It is natural to extend the asymptotic equivalence between the RMM and RQC cases discussed in Sec.~\ref{sec:globalCUE} and the end of Sec.~\ref{sec:QFI-scaling} to the regime where $q$ is small and $L$ is large such that $N=q^L\gg t$, though we do not have rigorous analytical proof in this regime.  If such an extended equivalence holds, one can extrapolate Eq.~\eqref{eq:RMM-RQC-Equiv} from the regime of large $q$  to the regime of large $q^L$. We then conclude that for the control protocol the $t$-scaling of the QFI is linear both in time $t$ and system size $L$. Indeed, for $q=2$, the numerical calculations in Fig.~\ref{fig:RQC_graph_t} confirm that the linear $t$ scaling for the control protocol while Fig.~\ref{fig:RQC_num_smalltq}(a) confirms the linear $L$ scaling for both protocols.  Note that such results are consistent with the observation that for short times the QFI   scales at most linearly with $L$ due to the Lieb-Robinson bound~\cite{chu2023strongquantum,shi2024universal}.  

\begin{figure}[h]
    \centering    \includegraphics[width=0.48\textwidth]{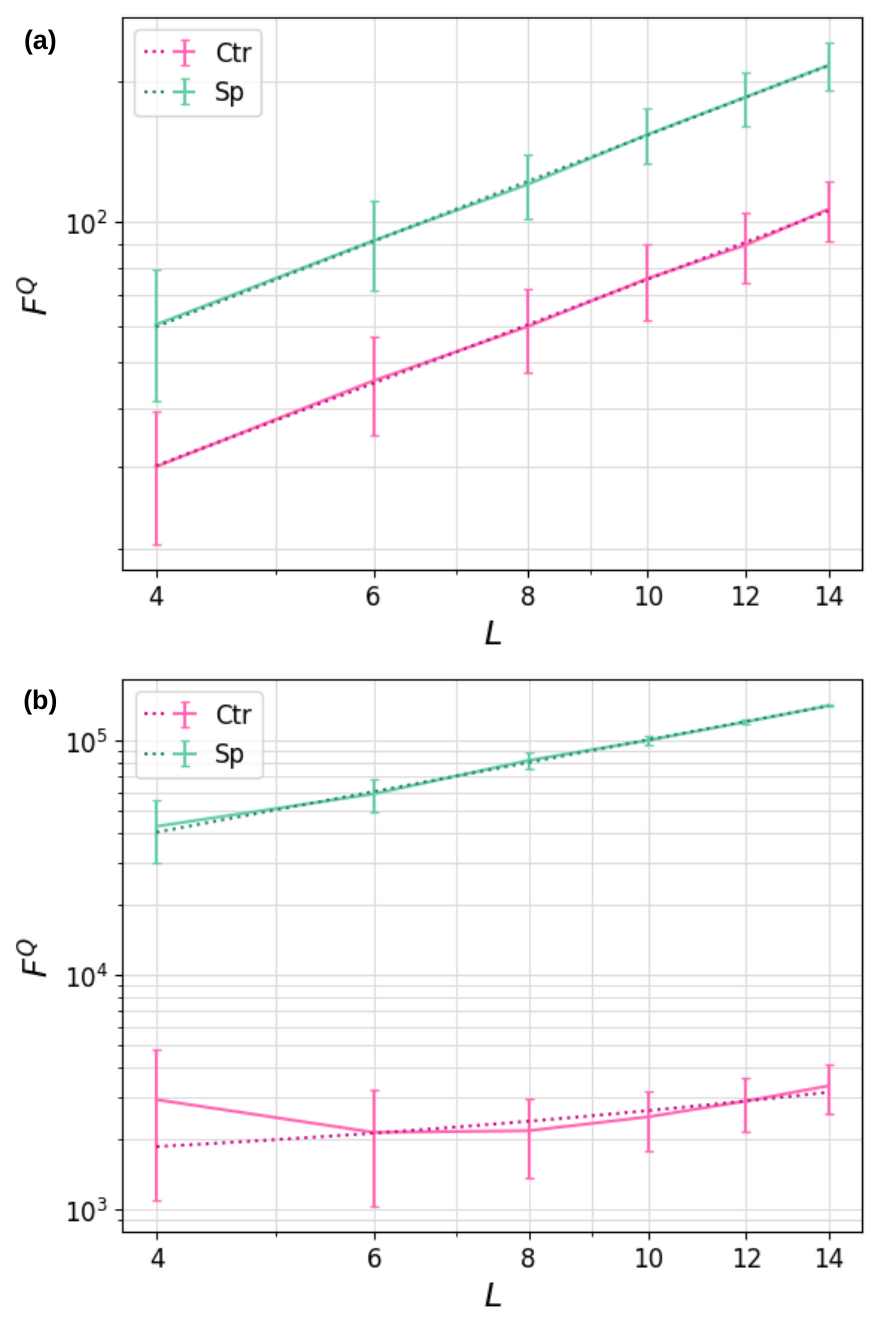}
    \caption{Numerical results for the QFI as a function of system size $L$ for the control (Ctr) and state preparation (Sp) protocols for small $q=2$ at time (a) $t=2$ and (b) $t=50$. The QFI scales linearly for both protocols at small times. At large times the scaling for the state preparation protocol is still linear, whereas the scaling for the control protocol appears to deviate from linearity. Parameters: encoding parameter $\theta=1$. The average was done over $200$ simulation runs for all plots except the control protocol at $t=50$, which was done for $1000$ simulation runs.
    \label{fig:RQC_num_smalltq}}
\end{figure}

Next, we consider the regime small $q$ and small $L$ with {$t\gtrsim q^L$}. We note that numerically fully reaching this regime to investigate the $L$-scaling is very expensive. Nevertheless, in Fig.~\ref{fig:RQC_num_smalltq}(b), We note that this regime is achieved for $t=50$ with $q=2$ and $L=4,6,8$. Due to limited data points, one cannot tell the $L$-scaling conclusively. However, analytical findings~\cite{harrow2009randomquantum,brandao2016localrandom} indicate that when $t\sim L k^{c}$ with constant $c$ that does not scales with $L$, RQC becomes a unitary $k$-design. Therefore, we expect linear scaling for the state-preparation protocol. On the other hand, as shown in Fig.~\ref{fig:RQC_num_smallLq}, in the regime where $t\sim q^L$, we observe nonlinear $t$ scaling for the control protocol, indicating the quantum advantage. The underlying intuition is that non-Gaussian and non-max $T$ contractions contribute to the QFI significantly, and the final scaling is in between the classical standard quantum limit and Heisenberg scaling, as seen from the dotted black lines that envelope the scaling in Fig.~\ref{fig:RQC_num_smallLq}. Quantitative characterization of scaling is left for future study.
\begin{figure}[h]
    \centering    \includegraphics[width=0.5\textwidth]{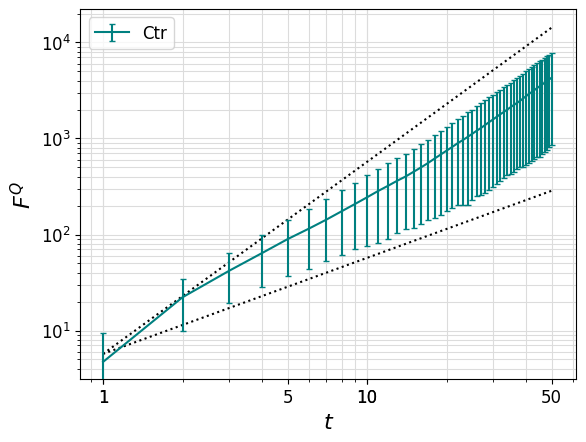}
    \caption{Numerical result for the QFI as a function of time for the control protocol with small $L=4$ and $q=2$. In this regime the QFI scales better than the standard quantum limit. Dotted lines are linear and quadratic scaling respectively. Parameters: encoding parameter $\theta=1$ and the average was done over $100$ simulation runs.}
    \label{fig:RQC_num_smallLq}
\end{figure}

Let us now discuss the fluctuations of the QFI in these non-asymptotic regimes when $q$ is finite. As $L$ or $t$ increases, the upper bounds of the Lipschitz constants Eq.~\eqref{eq:Lip-Bound-RMT} and Eq.~\eqref{Lip-Bound-RQC} becomes loose and cannot bound the fluctuations tightly. Nevertheless, as shown from Fig.~\ref{fig:RQC_num_smalltq}(a) and (b), for a given number of realizations, the error bars of the numerical calculation of the QFI decreases as $L$ increases, which hints that large system size tend to suppress the fluctuations. Meanwhile, as shown in Fig.~\ref{fig:RQC_num_smallLq}, the errors bars tends to increase as $t$ increases, which hints that the long-time sensing tends to increase the fluctuations.

Furthermore, we would like to point out that for the RQC protocol, the restriction of the initial states to symmetric subspaces still lead to the linear $L$ scaling in the asymptotic regime $q\to\infty$. Ref.~\cite{oszmaniec2016randombosonic} studied some "mixed" RMM-RQC case with $t=1$, $L$ qudits and the Haar random gate implemented by a global CUE. Clearly, in this case,  the control and state-preparation protocol coincide. It is found that Haar random gates on the full Hilbert space lead to an almost linear scaling of the average QFI while Haar random gates on the symmetric subspace lead to an average QFI that scales quadratically with the system size $L$, i.e., reaches the Heisenberg limit. The former scaling can be readily seen from Eq.~\eqref{eq:QFI-1time-exact} while the latter scaling is due to the following relation:
\begin{equation}
\mathrm{tr}_{\mathcal{S}_L}(H_0^2) = \frac{L(L+q)\mathrm{tr}(h_0^2)}{q(q+1)}|\mathcal{S}_L|,\label{eq:TrH0-sym}
\end{equation}
where $H_0$ is given by Eq.~\eqref{eq:H0-h0-connection}, $h_0\equiv h_0^{[\mu]}$, $\mathcal{S}_L$ denotes the symmetric subspace and $|\mathcal{S}_L|=\binom{L+q-1}{L}$ denotes its dimension. 

In the asymptotic regime $q\to\infty$, we know $|\mathcal{S}_L|\sim q^L/L!$. For the RQC case in this regime, if we restrict the initial states to the symmetric subspace, the scaling of the QFI is
\begin{align}
    \langle F^Q_{\text{ctr}}(t)\rangle =& \frac{4\text{Tr}_{\mathcal{S}_L}[H_0^2]}{|{\mathcal{S}_L}|} t + \mathscr{O}(1/|{\mathcal{S}_L}|),
    \label{eq:QFI_rqc_sym_ctr} \\
    \langle F^Q_{\text{sp}}(t)\rangle =& \frac{4\text{Tr}_{\mathcal{S}_L}[H_0^2]}{|{\mathcal{S}_L}|} t^2 + \mathscr{O}(1/|{\mathcal{S}_L}|). 
    \label{eq:QFI_rqc_sym_sp}
\end{align}
Upon substituting Eq.~\eqref{eq:TrH0-sym} into the above expression and noting that $q$ is the largest scale, we immediately see the resurgence of the linear scaling since Eq.~\eqref{eq:QFI_rqc_sym_ctr} and Eq.~\eqref{eq:QFI_rqc_sym_sp} reduce to Eq.~\eqref{eq:QFI_rqc_ctr} and Eq.~\eqref{eq:QFI_rqc_sp} in the limit $q\to \infty$, respectively. 

Many open questions and future directions remain unexplored. For example, whether the symmetry and conserved quantities~\cite{li2025sudsymmetric,marvian2022restrictions} lead to different scaling of the QFI; the scaling of QFI when RQC is under continuous or probabilistic measurements~\cite{rossi2020noisyquantum,yang2023efficient,yang2024quantum}; how to design a randomized metrological protocol that escapes the linear scaling, etc.~\cite{xu2025informationally,imai2026referenceframeindependent}.

\section*{Acknowledgement}
All authors contributed equally. Codes used for the current study are available from the authors upon request. We thank Satoya Imai and Yijian Zou for helpful comments.  J.Y. is supported by Zhejiang University Start-up Grants,
Zhejiang Key Laboratory of R\&D and Application of Cutting-edge Scientific
Instruments, and Wallenberg Initiative on Networks and Quantum Information
(WINQ).

\appendix

\section{Bounds on the QFI\label{sec:bounds-QFI}}

The QFI can be expressed as 
\begin{align}
F_{\mathrm{ctr}}^{Q}(t) & =4\mathrm{Tr}[\rho_{0}\sum_{s}\delta H_\mathrm{ctr}(s)]^{2}=4\sum_{s,\tau}\braket{\psi_{0}|\delta H_\mathrm{ctr}(s)\delta H_\mathrm{ctr}(\tau)|\psi_{0}}\nonumber \\
 & \leq4\sum_{s,\tau}\big|\braket{\psi_{0}|\delta H_\mathrm{ctr}(s)\delta H_\mathrm{ctr}(\tau)|\psi_{0}}\big|\nonumber \\
 & \leq4\sum_{s,\tau}\sqrt{\braket{\psi_{0}|[\delta H_\mathrm{ctr}(s)]^{2}|\psi_{0}}}\sqrt{\braket{\psi_{0}|[\delta H_\mathrm{ctr}(\tau)]^{2}|\psi_{0}}}\nonumber \\
 & =4\left(\sum_{s}\sqrt{\mathrm{Var}[H_\mathrm{ctr}(s)]_{\rho_{0}}}\right)^{2},
\end{align}
where $\delta H_\mathrm{ctr}(s)\equiv H_\mathrm{ctr}(s)-\mathrm{Tr}[\rho_{0}H_\mathrm{ctr}(s)]$ and we have
used the Cauchy-Schwarz inequality in the second inequality. Using
the Cauchy-Schwarz inequality $(\sum_{s=1}^t x_{s})^{2}\leq tx_{s}^{2}$
once again we obtain 
\begin{equation}
\left(\sum_{s=1}^t\sqrt{\mathrm{Var}[H_\mathrm{ctr}(s)]_{\rho_{0}}}\right)^{2}\le t\sum_{s=1}^t \mathrm{Var}[H_\mathrm{ctr}(s)]_{\rho_{0}}.
\end{equation}
Finally, we note that 
\begin{equation}
\mathrm{Var}[H_\mathrm{ctr}(s)]_{\rho_{0}}\leq\braket{\psi_{0}|H_\mathrm{ctr}^{2}(s)|\psi_{0}}\leq\|H_\mathrm{ctr}(s)\|_{\infty}^{2}=\|H_{0}\|_{\infty}^{2},
\end{equation}
where $\|\cdot\|$ is the operator norm or the Schatten-$\infty$
norm. In fact, if we shift $H_{0}$ by a constant, we can still obtain
similar bound, i.e., 
\begin{equation}
\mathrm{Var}[H_\mathrm{ctr}(s)]_{\rho_{0}}\leq\|H_{0}-c\|_{\infty}^{2},
\end{equation}
where $c\in\mathbb{R}$. Thus we find 
\begin{equation}
\mathrm{Var}[H_\mathrm{ctr}(s)]_{\rho_{0}}\leq\inf_{c\in\mathbb{R}}\|H_{0}-c\|_{\infty}^{2}=\frac{1}{4}\Delta_{H_{0}}^{2},
\end{equation}
which concludes the proof of Eq.~(\ref{eq:FQ-bound}). The bound
on $F_{\mathrm{sp}}^{Q}(t)$ can be derived in a similar way.

\section{The diagrammatic methods for RMM\label{app:RMT-Diagrammatics}}

\subsection{\label{subsec:diagrammatic-methods-review}Pedagogical review of
the diagrammatic methods for RMM}

Recall that in Sec.~\ref{sec:Weingarten-Pedagogical}, the operators $A_{k}^{(j)}$ and $B_{k}^{(j)}$ implies they are in the $k$-th operator in $j$-th group of trace products. Here,  we introduce the $\xi$, $\eta$-labels. They are functions on the tuple $(jk)$ defined
as
\begin{equation}
\xi(jk)\equiv\sum_{i=0}^{j-1}n_{i}+k,\quad\eta(jk)\equiv\sum_{i=0}^{j-1}m_{i}+k,
\end{equation}
with $k\in[1,\,n_{j}],\,k\in \mathbb{Z}$ in the definition of $\xi(jk)$, $k\in[1,\,m_{j}],\,k\in \mathbb{Z}$ in the definition of $\eta(jk)$, and
$n_{0}=m_{0}=0$. Clearly $\xi(jk),\,\eta(jk)\in[1,t]$ and 
\begin{equation}
A_{k}^{(j)}=A_{\xi(jk)},\,B_{k}^{(j)}=B_{\eta(jk)}.
\end{equation}
We now count the operators $A$ and $B$ on l.h.s.~of Eq.~(\ref{eq:f-rep}), 
in term of $\xi$, $\eta$-labels in the main text and  ones on the r.h.s.~in terms of the $(jk)$ labels. Similarly, one can track
the indices $a$'s, $b$'s, $\alpha$'s and $\beta$'s from the $(jk)$-counting
system to the $\xi$, $\eta$-counting system using the following
map: 
\begin{equation}
a_{k}^{(j)}\to a_{\xi(jk)},\,\alpha_{k}^{(j)}\to\alpha_{\eta(jk)},\;b_{k}^{(j)}\to b_{\xi(jk)},\,\beta_{k}^{(j)}\to\beta_{\eta(jk)}.\label{eq:mapping}
\end{equation}

As manifested by Eq.~(\ref{eq:Wg}) in the main text, in the $\xi$, $\eta$-counting
system it is more straightforward to express the permutations in the
Weingarten formulas, i.e.,
\begin{widetext}
\begin{equation}
\langle\prod_{j=1}^{g}U_{a_{1}^{(j)}b_{1}^{(j)}}\cdots U_{a_{n_{j}}^{(j)}b_{n_{j}}^{(j)}}U_{\alpha_{1}^{(j)}\beta_{1}^{(j)}}^{*}\cdots U_{\alpha_{m_{j}}^{(j)}\beta_{m_{j}}^{(j)}}^{*}\rangle=\sum_{P,\,P^{\prime}}\prod_{j=1}^{g}\left(\prod_{k=1}^{n_{j}}\delta_{a_{\xi(jk)}\alpha_{P\left(\xi(jk)\right)}}\right)\left(\prod_{k=1}^{m_{j}}\delta_{b_{\xi(jk)}\beta_{P^{\prime}\left(\xi(jk)\right)}}\right).\label{eq:Wg-mix}
\end{equation}
To convert the r.h.s.~back to the $(jk)$ counting system, we use
the inverse of the mapping~(\ref{eq:mapping}) and introduce
\begin{equation}
\left(J_{P}(jk)K_{P}(jk)\right)=\eta^{-1}\circ P\circ\xi(jk)\equiv\eta^{-1}\left(P(\xi(jk))\right).
\end{equation}
Then the Weingarten formula becomes
\begin{equation}
\langle\prod_{j=1}^{g}U_{a_{1}^{(j)}b_{1}^{(j)}}\cdots U_{a_{n_{j}}^{(j)}b_{n_{j}}^{(j)}}U_{\alpha_{1}^{(j)}\beta_{1}^{(j)}}^{*}\cdots U_{\alpha_{m_{j}}^{(j)}\beta_{m_{j}}^{(j)}}^{*}\rangle=\sum_{P,\,P^{\prime}}\prod_{j=1}^{g}\left(\prod_{k=1}^{n_{j}}\delta_{a_{k}^{(j)}\alpha_{_{K_{P}(jk)}}^{(J_{P}(jk))}}\right)\left(\prod_{k=1}^{m_{j}}\delta_{b_{k}^{(j)}\beta_{_{K_{P^{\prime}}(jk)}}^{(J_{P^{\prime}}(jk))}}\right),\label{eq:Wg-jk}
\end{equation}
which looks less straightforward than Eq.~(\ref{eq:Wg-mix}).

However, the $(jk)$-counting system has its own advantage in representing
the ensemble average of $\langle f\left(UA_{1},\,\cdots,\,UA_{t},\,U^{\dagger}B_{1},\,\cdots,\,U^{\dagger}B_{t}\right)\rangle$
as we shall see. To this end, let us note that Eq.~(\ref{eq:f-rep})
can be represented by the so-called pre-contraction diagram shown
in Fig.~\ref{fig:Diagram-Review}(a): The row indices of $U$ and
the column indices of $U^{\dagger}$ are represented by empty dots
while the column indices of $U$ and the row indices of $U^{\dagger}$
are represented by filled dots; $U$ is represented by the dashed line
while $U^{\dagger}$ is represented by the dashed line with a star (star-dashed
line); the thick directed solid lines represent the operators
$A_{k}^{(j)}$ and $B_{k}^{(j)}$ with the arrows always pointing
from its row index to its column index. Our next step is to apply
the Weingarten formula~(\ref{eq:Wg-jk}) to evaluate the ensemble
average, which immediately leads to Eq.~(\ref{eq:Wg-ProdTr}) in
the main text, where
\begin{equation}
T_{P,\,P^{\prime}}\equiv\prod_{j=1}^{g}\left[\left(\prod_{k=1}^{n_{j}}\delta_{a_{k}^{(j)}\alpha_{_{K_{P}}(jk)}^{(J_{P}(jk))}}\right)\left(\prod_{k=1}^{m_{j}}\delta_{b_{k}^{(j)}\beta_{_{K_{P^{\prime}}}(jk)}^{(J_{P^{\prime}}(jk))}}\right)[A_{1}^{(j)}]_{b_{1}^{(j)}a_{2}^{(j)}}[A_{2}^{(j)}]_{b_{2}^{(j)}a_{3}^{(j)}}\cdots[A_{n_{j}}^{(j)}]_{b_{n_{j}}^{(j)}\beta_{1}^{(j)}}[B_{1}^{(j)}]_{\alpha_{1}^{(j)}\beta_{2}^{(j)}}[A_{2}^{(j)}]_{\alpha_{2}^{(j)}\beta_{3}^{(j)}}\cdots[B_{m_{j}}^{(j)}]_{\alpha_{m_{j}}^{(j)}a_{1}^{(j)}}\right].\label{eq:T-ProdTr}
\end{equation}
\end{widetext}
Eq.~(\ref{eq:Wg-ProdTr}) is the results of applying the Weingarten
formula to a product of traces, which we shall refer to as the \textit{Weingarten
trace formula}. The symbolic representation of $T_{P,\,P^{\prime}}$,
Eq.~(\ref{eq:T-ProdTr}) looks very abstract and cumbersome and the
labels are counted in the $(jk)$-system. Nevertheless, it bears a
very simple, elegant diagrammatic representation. This is in analogy
with the case where symbolically representing all the $t!$ permutations
of a $t$-tuple is very abstract and cumbersome but there is an intuitive
way to represent them diagrammatically. 

To find the diagrammatic representation of $T_{P,\,P^{\prime}}$ ,
let us come back to the pre-contraction diagram. For a given $P$
and $P^{\prime}$, we note that $\delta_{a_{k}^{(j)}\alpha_{_{K_{P}(jk)}}^{(J_{P}(jk))}}$
is equivalent to an identity matrix $\mathbb{I}$ with indices $a_{k}^{(j)}$
and $\alpha_{\alpha_{_{K_{P}}(jk)}^{(J_{P}(jk))}}$, which is represented
by the empty dots in the pre-contraction diagram. As
such, it can be represented by drawing thin solid lines that connect
the two empty dots, as shown in Fig.~\ref{fig:Diagram-Review}(b).
Similarly, $\delta_{b_{k}^{(j)}\beta_{_{K_{P^{\prime}}}(jk)}^{(J_{P^{\prime}}(jk))}}$
is equivalent to an identity matrix $\mathbb{I}$ with indices $b_{k}^{(j)}$
and $\beta_{_{K_{P^{\prime}}}(jk)}^{(J_{P^{\prime}}(jk))}$, which
is represented by the filled dots in the pre-contraction diagram. Thus
$\delta_{b_{k}^{(j)}\beta_{_{K_{P^{\prime}}}(jk)}^{(J_{P^{\prime}}(jk))}}$
can be represented by a thin solid line that connects the two filled
dots. In fact, $T_{P,\,P^{\prime}}$ has the following structure: 

\begin{obs} $T_{P,\,P^{\prime}}$ must be expressed in terms of a product of
traces of the product of the operators $A$, $B$ and the identity matrix $\mathbb{I}$.
Each trace is represented by a $T$-loop, consisting of alternating
thick and thin lines. Within the same $T$-loop, the number of thick
lines must be equal to the number of thin lines. 

\end{obs}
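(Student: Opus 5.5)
The plan is to read the structure of $T_{P,P^{\prime}}$ in Eq.~(\ref{eq:T-ProdTr}) as a graph on the set of index slots and to show that this graph is a disjoint union of cycles. The vertices are the $4t$ index labels: the empty dots $a_{k}^{(j)},\alpha_{k}^{(j)}$ and the filled dots $b_{k}^{(j)},\beta_{k}^{(j)}$. Two kinds of edges join them.
\begin{itemize}
\item A \emph{thick} edge for each matrix element $[A_{k}^{(j)}]_{xy}$ or $[B_{k}^{(j)}]_{xy}$, with $x,y$ the two indices it carries.
\item A \emph{thin} edge for each Kronecker delta $\delta_{a\alpha}$ or $\delta_{b\beta}$ fixed by the chosen $P,P^{\prime}$.
\end{itemize}

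The key counting step is that every vertex has exactly one thick edge and exactly one thin edge.

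For the thick edges, inspect Eq.~(\ref{eq:f-rep}). Each index $b_{k}^{(j)}$ is the row index of exactly one $A$, namely $A_{k}^{(j)}$. Each index $a_{k}^{(j)}$ with $k\ge2$ is the column index of $A_{k-1}^{(j)}$, and $a_{1}^{(j)}$ is the column index of $B_{m_j}^{(j)}$. The $\alpha$'s and $\beta$'s follow the same pattern with the roles of $A$ and $B$ swapped. So each of the $4t$ indices is incident to exactly one thick line.

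For the thin edges, $P$ and $P^{\prime}$ are bijections of $\{1,\dots,t\}$. Hence the deltas $\delta_{a_{\xi}\alpha_{P(\xi)}}$ pair every $a$ with exactly one $\alpha$, and the deltas $\delta_{b_{\xi}\beta_{P^{\prime}(\xi)}}$ pair every $b$ with exactly one $\beta$. Therefore each vertex is incident to exactly one thin line.

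Because every vertex has degree $2$, with one edge of each type, the graph decomposes into disjoint cycles in which thick and thin edges alternate. Each cycle therefore has equally many thick and thin lines, which is the counting claim.

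It remains to evaluate the sum over indices. Summing over the two indices of a thin edge applies $\delta_{xy}=\mathbb{I}_{xy}$, which glues the adjacent thick factors together as a matrix product. Going around a cycle therefore gives $\sum \prod M_{x_1x_2}\mathbb{I}_{x_2x_3}M^{\prime}_{x_3x_4}\cdots$, which is a trace of a product of the operators $A$, $B$ (or their transposes) and $\mathbb{I}$. Different cycles share no indices, so the Einstein sum factorizes into a product of such traces, one per $T$-loop.

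The one step that needs care is orientation. A thin line may join two row indices or two column indices, because it joins $a$ (the row index of $U$) with $\alpha$ (the column index of $U^{*}$, i.e.\ of $U^{\dagger}$ after transposition). I would handle this by following the arrows of the thick lines: whenever a loop traverses a thick line against its arrow, that factor is replaced by its transpose. Since $\mathrm{Tr}(X^{T}\cdots)$ is still a trace of a product of operators, the statement holds; for the Hermitian operators used later this is harmless. Once this is checked, the rest is bookkeeping.
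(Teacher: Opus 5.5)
The core of your argument is right, and it is the paper's own reasoning made precise. The paper only observes that every index occurs exactly twice and that no two of the $A$'s and $B$'s share an index, so each trace must alternate an operator with a $\delta$ (an identity). Your observation that the thick lines and the thin lines each form a perfect matching of the $4t$ index slots, so their union is a disjoint union of alternating cycles, gives the same conclusion more rigorously. It also yields the factorization over loops.

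Your orientation paragraph, however, is wrong, and the patch would not prove the statement as written. Orientation has to be judged relative to the thick lines (the $A$, $B$ factors), not relative to $U$ and $U^{*}$. By your own bookkeeping from Eq.~(\ref{eq:f-rep}), the indices $b_k^{(j)}$ and $\alpha_k^{(j)}$ appear only as row indices of the $A$'s and $B$'s. The indices $a_k^{(j)}$ and $\beta_k^{(j)}$ appear only as column indices.

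The thin lines pair $a$ with $\alpha$ and $b$ with $\beta$, so every thin line joins the column index of one operator to the row index of another. Every $T$-loop is therefore consistently directed: row $\to$ column along a thick line, column $\to$ row along a thin line. It evaluates to $\mathrm{Tr}(M_1\mathbb{I}M_2\mathbb{I}\cdots)$, with each $M_i$ one of the $A$'s or $B$'s and no transposes at all, which is exactly what the statement claims.

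Had mixed orientations actually occurred, your fallback would fail. A loop such as $\mathrm{Tr}(A^{T}B)$ is not a trace of a product of $A$ and $B$. Hermiticity does not help either, because then $A^{T}=\bar{A}$: for $A=B=\sigma_y$ one has $\mathrm{Tr}(\sigma_y^{T}\sigma_y)=-2$ but $\mathrm{Tr}(\sigma_y^{2})=2$. Replace that paragraph with the row/column check above and the proof is complete.
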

\begin{proof}
Since all the indices appear only twice and they are dummy indices, $T_{P,\,P^{\prime}}$
must be expressed in terms of a product of trace. We observe that since the indices
of $A$'s and $B$'s do not overlap with each other, the only possible
way for them to form a trace is of the type $\mathrm{Tr}(A\mathbb{I}A\cdots)$,
$\mathrm{Tr}(A\mathbb{I}B\cdots)$, $\mathrm{Tr}(B\mathbb{I}A\cdots)$,
$\mathrm{Tr}(B\mathbb{I}B\cdots)$. This means each $T$-loop must
have alternating thick and thin lines. For exactly the same reason,
the number of identity matrices $\mathbb{I}$ and the number of operators
$A$ or $B$ in a single $T$-loop must be equal. Otherwise, the operators
must have indices in common.
\end{proof}
From the above discussion, it is clear that $T_{P,\,P^{\prime}}$ can
be diagrammatically represented as the product of the $T$-loops.
The diagram representation of $T_{P,\,P^{\prime}}$, as shown in
Fig.~\ref{fig:Diagram-Review}(b) is called the ``fully contracted''
diagram in that all the indices of $U$ and $U^{\dagger}$ are contracted.
We shall distinguish it from the pre-contraction diagrams, as shown
in Fig.~\ref{fig:Diagram-Review}(a). 

On the other hand, the Weingarten function $V_{P^{-1}P^{\prime}}$
is also encoded in the fully contracted diagrams and is fully determined by the length of the disjoint cycles in the permutation $P^{-1}P^{\prime}$. More precisely,
the cycles of $P^{-1}P^{\prime}$ have the following structure:

\begin{obs} The cycles in the permutation $P^{-1}P^{\prime}$ is
given by the $U$-loops, which are defined as the loops consisting of
alternating dashed or star-dashed lines and thin lines. Within
a $U$-loop, the total number of dashed lines and star-dashed
lines must be equal to the number of thin lines; the number of dashed lines must be equal to the number of
star-dashed lines.

\end{obs}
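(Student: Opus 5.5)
The plan is to read the $U$-loops directly off the structure of the Kronecker deltas in Eq.~(\ref{eq:Wg-mix}), working in the $\xi$, $\eta$-counting system where the permutations act most transparently. After removing all thick lines, the fully contracted diagram is a graph with two kinds of edges. The first kind are the dashed lines (one per factor $U_{a_\xi b_\xi}$, joining the empty dot $a_\xi$ to the filled dot $b_\xi$) and the star-dashed lines (one per factor $U^*_{\alpha_\eta\beta_\eta}$, joining the empty dot $\alpha_\eta$ to the filled dot $\beta_\eta$). The second kind are the thin lines: $\delta_{a_\xi\alpha_{P(\xi)}}$ joins $a_\xi$ to $\alpha_{P(\xi)}$, and $\delta_{b_\xi\beta_{P'(\xi)}}$ joins $b_\xi$ to $\beta_{P'(\xi)}$. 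Every dot carries exactly one dashed-type edge and exactly one thin edge, since $P$ and $P'$ are bijections. So every vertex has degree two with one edge of each type, and the graph decomposes into disjoint cycles that alternate between dashed-type and thin lines. These cycles are the $U$-loops. This alternation immediately gives the first counting claim: within a loop, the number of thin lines equals the total number of dashed plus star-dashed lines.

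Next I trace a loop explicitly to identify it with a cycle of $P^{-1}P'$. Start at the dashed line $\xi$, from $a_\xi$ to $b_\xi$. From $b_\xi$ the thin line goes to $\beta_{P'(\xi)}$. The star-dashed line $P'(\xi)$ then takes us to $\alpha_{P'(\xi)}$. The thin line from $\alpha_{P'(\xi)}$ returns to $a_{\xi'}$, where $\xi'$ satisfies $P(\xi')=P'(\xi)$, i.e. $\xi'=P^{-1}P'(\xi)$. Thus one traversal of the pattern dashed--thin--star-dashed--thin maps the dashed label $\xi$ to $P^{-1}P'(\xi)$. The sequence of dashed lines visited by a $U$-loop is therefore exactly the orbit of $\xi$ under $P^{-1}P'$, and the loop closes precisely when that orbit closes.

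The remaining counts follow from this traversal. Dashed and star-dashed lines strictly alternate along a loop, because after a dashed line the thin line always lands on a filled $\beta$-dot and hence on a star-dashed line, and vice versa. Hence their numbers are equal, and a cycle of length $u$ contains $u$ dashed lines, $u$ star-dashed lines and $2u$ thin lines. This justifies reading $u$ as half the number of dashed-type lines, as used in Sec.~\ref{sec:Weingarten-RMT}.

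The only delicate point is bookkeeping. The diagrams are drawn in the $(jk)$ labels, while the permutations act on the $\xi$, $\eta$ labels. I would handle this by noting that the maps $\xi(jk)$ and $\eta(jk)$ from Eq.~(\ref{eq:mapping}) are mere relabelings of the dots. Such relabelings do not change the graph, so the cycle structure is unaffected. In particular, $J_P$, $K_P$ never enter the argument, since the loops only involve the Weingarten deltas and never the operator indices.
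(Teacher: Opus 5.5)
Your proof is correct and follows essentially the same route as the paper. Like the paper, you work in the $\xi$, $\eta$ labels, trace $b_\xi\to\beta_{P'(\xi)}\to\alpha_{P'(\xi)}\to a_{P^{-1}P'(\xi)}$ to identify each $U$-loop with an orbit of $P^{-1}P'$, and read the counts off from alternation. Your degree-two decomposition and your explicit count of $u$ dashed, $u$ star-dashed and $2u$ thin lines per cycle of length $u$ slightly sharpen the paper's argument, which only records $(P^{-1}P')^u(\xi)=\xi$ and is loose about whether $u$ counts thin lines or half of them.
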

\begin{proof}
It is more straightforward to see the cycle structures in the permutation
of $P^{-1}P^{\prime}$ from the fully contracted diagram if the $\xi$,$\eta$-counting
system is used. We note that in the $\xi$,$\eta$-counting systems
the row index $a$ and column index $b$ belonging to the same $U$
has the label $\xi$. That is, the matrix element of $U$ is always
$U_{a_{\xi}b_{\xi}}$. Similar observations also hold for $U^{\dagger}$.
This means that in both the pre-contraction and in the fully-contracted diagrams, an
empty dot and a filled dot connected through the dashed line or
star-dashed line must have the same label, as shown Fig.~\ref{fig:Diagram-Review}(c). 

Let us now break down each segment in a $U$-loop. (i) Start with an index $b_{\xi}$, the column index of $U$, as shown
in Fig.~\ref{fig:Diagram-Review}(c). (ii) The thin solid line that
connects $b_{\xi}$ to the filled dot on the r.h.s.~(the row index
of $U^{\dagger}$) must be $P^{\prime}(\xi)$. (iii) Passing through
the star-dashed line, the empty dot must have the same label $P^{\prime}(\xi)$
because it is connected to the previous filled dot through the
same $U^{\dagger}$. (iv) Passing again through the thin solid line, the
label of the empty dot on the l.h.s.~is $P^{-1}\circ P^{\prime}(\xi)$.
(v) The $U$-loop continues until it returns to the empty dot with
index $a_{\xi}$ and is then closed by passing through the dashed line connecting
$a_{\xi}$ to $b_{\xi}$. Through above process, we immediately
observe that

\[
P^{-1}\circ\cdots P^{\prime}\circ P^{-1}\circ P^{\prime}(\xi)=\xi,
\]
which says that $(P^{-1}P^{\prime})^{u}(\xi)=\xi$, where $u$ is equal
to the number of thin solid lines or dashed and star-dashed lines in the $U$-loop.
In steps (i)-(v), a $U$-loop starts with a thin solid line but ends
with a dashed line. Since the thin solid lines and the dashed/star-dashed
lines are alternating, their numbers must be equal. 
\end{proof}
$V_{P,\,P^{\prime}}$ is fully determined by the length of the cycles
$u_{1},\,u_{2},\,\cdots,\,u_{s}$ in the permutation, i.e., $V_{P,\,P^{\prime}}=V_{u_{1},\,u_{2},\,\cdots,\,u_{s}}$,
which can be read off from the number of thin solid lines or the number
of dash/star-dashed lines, in the $U$-loop.

To summarize: the diagrammatic representation of a linear functional
$f$ is called a ``pre-contraction'' diagram. When the dots on the
pre-contraction are connected through thin solid lines according to
the diagrammatic representation of $T_{P,\,P^{\prime}}$, it is called a
``fully contracted'' diagram. Diagrams that are of the
$t$-th power in both $U$ and $U^\dagger$
are called diagrams of degree $t$.
\begin{itemize}
\item Upon removing the dashed lines and star-dashed lined representing
$U$ and $U^{\dagger}$ in the fully contracted diagrams, $T_{P,\,P^{\prime}}$
can be obtained as the product of the $T$-loop, 
\item Upon removing the directed solid lines representing the operators
$A$'s and $B$'s in the fully contracted diagrams, $u_{1},\,u_{2},\,\cdots,\,u_{s}$
are half of the number of the thin solid lines in the $U$-loops.
\end{itemize}
\begin{figure}
\centering
\includegraphics[scale=0.7]{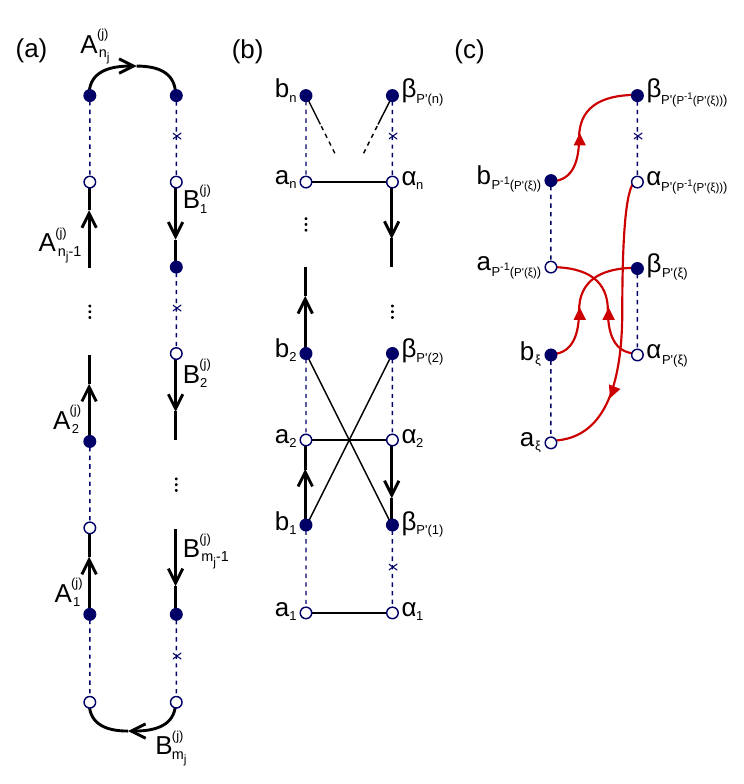}

\caption{\label{fig:Diagram-Review}(a) A generic diagrammatic representation
of $f(UA_{1},\,\cdots,\,UA_{t},\,U^{\dagger}B_{1},\,\cdots,\,U^{\dagger}B_{t})$
can consist of $g$ groups of pre-contraction diagrams, where each part
represents $\mathrm{Tr}(UA_{1}^{(j)}\cdots UA_{n_j}^{(j)}U^{\dagger}B_{1}^{(j)}\cdots U^{\dagger}B_{m_j}^{(j)})$
with $j=1,2,\cdots g$. (b) The schematic representation of a fully-contracted
diagrams with $P=\begin{pmatrix}1 & 2 & \cdots & t\\
1 & 2 & \cdots & t
\end{pmatrix}$ i.e., the identity in the permutation group. As such, the thin solid that connects
the empty dots denote $\delta_{a_{\xi}\alpha_{\xi}}$ with $\xi=1,2,\cdots,t$.
The the thin solid lines that connects the filled dots denotes $\delta_{b_{\xi}\beta_{P^{\prime}(\xi)}}$ with $\xi=1,2,\cdots,t$. (c) An example $U$-loop, which represents a cycle in the permutation $P^{-1}P^{\prime}$: $b_\xi \to \beta_{P^{\prime}(\xi)}\to a_{P^{\prime}(\xi)}\to \cdots \to \alpha_{P^{\prime}(P^{-1}(P^{\prime}(\xi)))}\to a_\xi\to b_\xi$. This loop implies that $P^{-1}(P^{\prime}(P^{-1}(P^{\prime}(\xi))))=(P^{-1}P^{\prime})^2(\xi)=\xi$.}  
\end{figure}

\subsection{Useful observations for extracting the leading-order fully contracted
diagrams for RMM}

\begin{obs} The total number of dashed lines in a pre-contraction diagram
is $t$, the same as the number of star-dash lines; the total number
of thick solid lines is $2t$, the same as the number of thin solid lines
in each fully-contracted diagram.

\end{obs}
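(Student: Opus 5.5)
The plan is to count everything directly from the canonical form in Eq.~(\ref{eq:f-rep}) and the Weingarten trace formula Eq.~(\ref{eq:Wg-ProdTr}). The statement concerns a diagram of degree $t$, so we fix a product of $g$ traces $\prod_{j=1}^{g}\mathrm{Tr}(UA_{1}^{(j)}\cdots UA_{n_j}^{(j)}U^{\dagger}B_{1}^{(j)}\cdots U^{\dagger}B_{m_j}^{(j)})$ with $\sum_j n_j=\sum_j m_j=t$.

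\textbf{Dashed and star-dashed lines.} In the pre-contraction diagram each factor $U$ is drawn as one dashed line and each factor $U^{\dagger}$ as one star-dashed line. The $j$-th trace contributes $n_j$ dashed and $m_j$ star-dashed lines. Summing over $j$ therefore gives $t$ lines of each kind. Contraction only adds thin lines, so these counts carry over unchanged to every fully contracted diagram.

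\textbf{Thick lines.} Each operator $A_k^{(j)}$ or $B_k^{(j)}$ is drawn as exactly one directed thick line. There are $\sum_j n_j=t$ operators of type $A$ and $\sum_j m_j=t$ of type $B$, which gives $2t$ thick lines.

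\textbf{Thin lines.} For a fixed pair $(P,P^{\prime})$, Eq.~(\ref{eq:Wg-mix}) (equivalently Eq.~(\ref{eq:Wg-jk})) contains one Kronecker delta $\delta_{a_\xi\alpha_{P(\xi)}}$ for each $\xi\in[1,t]$ and one delta $\delta_{b_\xi\beta_{P^{\prime}(\xi)}}$ for each $\xi\in[1,t]$. Each delta is drawn as one thin line, so every fully contracted diagram has exactly $2t$ thin lines. Since $P$ and $P^{\prime}$ are bijections, these thin lines match the $t$ empty dots on dashed lines bijectively with the $t$ empty dots on star-dashed lines, and likewise for the filled dots. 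This confirms that the count is independent of the chosen pairing.

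As a cross-check, we can use the preceding observations. Every $T$-loop alternates thick and thin lines in equal number, and the $T$-loops partition all thick lines. Summing over $T$-loops then again gives $2t$ thin lines. Similarly, every $U$-loop has equally many dashed and star-dashed lines, and summing over $U$-loops reproduces the $t$--$t$ balance.

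The only delicate point is bookkeeping rather than analysis. One must ensure that each Kronecker delta is counted exactly once, with no double counting between the $(jk)$ and $\xi,\eta$ labelings. The mapping~(\ref{eq:mapping}) being a bijection handles this.
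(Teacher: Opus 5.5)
Your direct count is correct, and it is the same immediate bookkeeping the paper relies on (the paper states this observation without a separate proof). Each $U$ or $U^{\dagger}$ gives one dashed or star-dashed line, and each $A_k^{(j)}$ or $B_k^{(j)}$ gives one thick line, with $\sum_j n_j=\sum_j m_j=t$; each Kronecker delta in Eq.~(\ref{eq:Wg}) gives one thin line, $t$ from $P$ and $t$ from $P^{\prime}$.
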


\begin{obs}Assuming each $T$-loop scales linearly in $q$ in
a fully-contracted diagram, then the fully-contracted diagram scales
as $q$ to the power of a positive or negative integer and the
fractional power is forbidden. The maximum scaling for a fully-contracted
diagram is linear in $q$, i.e., $\mathscr{O}(q)$.

\end{obs}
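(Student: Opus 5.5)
We interpret ``scaling'' as the leading power of $q$ of the full contribution $V_{P^{-1}P^{\prime}}T_{P,P^{\prime}}$ of a fully contracted diagram of degree $t$. Here $q$ plays the role of the dimension in the Weingarten function. The plan is to reduce the claim to a purely combinatorial inequality among the loop counts of the diagram, and then prove that inequality using a metric on perfect matchings.

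\emph{Integrality of the exponent.} If the cycles of $P^{-1}P^{\prime}$ have lengths $u_1,\dots,u_s$, then $V_{P^{-1}P^{\prime}}=q^{-2t+s}\,(c+\mathscr{O}(q^{-2}))$ with $c\neq 0$; this was recalled before Principle~\ref{prp:RM}. By assumption each $T$-loop scales as $q$, so $T_{P,P^{\prime}}\sim q^{n_T}$, where $n_T$ is the number of $T$-loops. The diagram therefore scales as $q^{\,n_T+s-2t}$. Since $n_T$, $s$ and $t$ are integers, no fractional powers can occur. The whole content of the statement is thus the bound
\begin{equation}
n_T+s-2t\le 1 .
\end{equation}

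\emph{Three matchings on the dots.} View the fully contracted diagram as a graph on its $4t$ dots. Every dot has exactly one incident line of each of three kinds:
\begin{itemize}
\item a thick line (an $A$ or $B$ operator);
\item a thin line (a $\delta$ produced by $P$ or $P^{\prime}$);
\item a dashed or star-dashed line (a $U$ or $U^{\dagger}$).
\end{itemize}
This was established in the observations above. Each kind of line is therefore a perfect matching of the $2n$ dots, with $n=2t$; call these matchings $M_A$, $M_\theta$ and $M_U$. For two perfect matchings $M,M^{\prime}$ let $c(M,M^{\prime})$ be the number of cycles of $M\cup M^{\prime}$. Then:
\begin{itemize}
\item $c(M_A,M_\theta)=n_T$, because the $T$-loops alternate thick and thin lines;
\item $c(M_\theta,M_U)=s$, because the $U$-loops alternate thin and dashed lines and each $U$-loop is one cycle of $P^{-1}P^{\prime}$;
\item $c(M_A,M_U)=g$, the number of trace groups of the pre-contraction diagram.
\end{itemize}

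\emph{Triangle inequality.} The key step is to show that $d(M,M^{\prime})\equiv n-c(M,M^{\prime})$ is a metric on perfect matchings; only the triangle inequality is needed. I would prove it as follows. Call a \emph{move} the operation of replacing two pairs $\{x,y\},\{z,w\}$ of a matching by $\{x,z\},\{y,w\}$.
\begin{enumerate}
\item A move changes $c(\cdot,M^{\prime})$ by at most $1$: the move rewires at most two cycles, so it splits one cycle into two, merges two into one, or leaves the count unchanged.
\item From any $M$ one can reach $M^{\prime}$ in exactly $d(M,M^{\prime})$ moves. On a cycle of $M\cup M^{\prime}$ of length $2\ell$, a suitable move splits off a $2$-cycle, i.e.\ a pair already common to both matchings. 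Repeating this, the cycle is resolved in $\ell-1$ moves, and summing over cycles gives $n-c(M,M^{\prime})$ moves.
\item By item 1, at least $d(M,M^{\prime})$ moves are needed, so $d$ is exactly the minimal number of moves.
\end{enumerate}
A minimal-move distance is automatically subadditive. Applying this to $M_A$, $M_\theta$ and $M_U$ gives $n-g\le (n-n_T)+(n-s)$, that is,
\begin{equation}
n_T+s-2t\le g .
\end{equation}

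\emph{Conclusion.} For a single connected trace, $g=1$, this is exactly the claimed $\mathscr{O}(q)$ bound. For a product of $g$ traces the same argument shows that the diagram scales at most as $q^{g}$, i.e.\ at most linearly in $q$ per trace factor; I read the statement as referring to a single connected pre-contraction diagram (or, equivalently, per trace factor). The step I expect to need the most care is item 2, making precise that every cycle of length $2\ell$ can be resolved in $\ell-1$ moves, each of which raises the cycle count by one. The rest is bookkeeping of the loop counts established in the earlier observations.
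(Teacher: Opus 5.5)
Your proof is correct. The paper gives no argument to compare it against: this is one of the ``useful observations'' in Appendix~\ref{app:RMT-Diagrammatics} and is stated without proof. The following steps all check out:
\begin{itemize}
\item the reduction to the exponent $n_T+s-2t$, using $V_{P^{-1}P^{\prime}}\sim q^{-2t+s}$ with a nonzero leading coefficient;
\item the identification of $T$-loops, $U$-loops and trace groups with the cycles of $M_A\cup M_\theta$, $M_\theta\cup M_U$ and $M_A\cup M_U$;
\item the move argument, which shows that $n-c(M,M^{\prime})$ is a graph distance and hence satisfies the triangle inequality.
\end{itemize}

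The more common route to the same bound is topological. The three matchings form a cubic graph with a proper $3$-edge-colouring. Gluing a disc into every bicoloured cycle gives a closed surface with $4t$ vertices, $6t$ edges and $n_T+s+g$ faces. Hence $\chi=n_T+s+g-2t\le 2k$, where $k$ is the number of connected components of the fully contracted diagram. This gives $n_T+s-2t\le 2k-g$.

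That bound agrees with your $\le g$ when each component carries exactly one trace group, and is sharper otherwise. It gives $\mathscr{O}(q)$ for every connected diagram, whatever $g$ is. It also gives $\mathscr{O}(1)$ for the diagrams of $K_{\mathrm{ctr}}(t)$ in Appendix~\ref{app:K-illustration}, which are necessarily connected with $g=2$. Your metric-on-matchings argument is more elementary and needs no surface theory. The price is that it cannot tell connected from disconnected diagrams once $g\ge 2$.

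Your restriction to $g=1$ is necessary, not a weakness. For arbitrary $g$ the statement as written is false. For example, for $[\mathrm{Tr}(UU^{\dagger})]^{2}$ the diagram with $P=P^{\prime}=\mathrm{id}$ contributes $q^{4}/(q^{2}-1)\sim q^{2}$, which exactly saturates your bound $q^{g}$. The observation therefore needs either $g=1$ or connectedness of the fully contracted diagram as an explicit hypothesis.
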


\begin{obs}The top(bottom) thick lines in the pre-contraction diagrams
always connect two filled(empty) dots. Therefore, in any fully-contracted
diagram, the top and bottom thick lines cannot form a $T$-loop
without involving the lateral thick lines. They can form a $T$-loop
in only two ways: (a) the top(bottom) thick line forms a
$T$-loop on its own by joining its start and end filled(empty)
dots or (b) the top(bottom) thick line form a $T$-loop together with
the lateral thick lines, which involves \textit{at least two} lateral
thick lines on the opposite sides.

\end{obs}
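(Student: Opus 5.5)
The plan is to argue directly from the structure of the pre-contraction diagram for the canonical form $\prod_{j}\mathrm{Tr}(UA_{1}^{(j)}\cdots UA_{n_j}^{(j)}U^{\dagger}B_{1}^{(j)}\cdots U^{\dagger}B_{m_j}^{(j)})$. I will classify the thick lines by which dots they join. In the $(jk)$-labelling of Eq.~(\ref{eq:f-rep}), the operator $[A_k^{(j)}]_{b_k^{(j)}a_{k+1}^{(j)}}$ joins a filled dot (column index of $U$) to an empty dot (row index of the next $U$). I call these the lateral thick lines. The one exceptional $A$, namely $[A_{n_j}^{(j)}]_{b_{n_j}^{(j)}\beta_1^{(j)}}$, joins two filled dots (column of $U$, row of $U^\dagger$), and this is the top line. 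Similarly, the $B$'s $[B_k^{(j)}]_{\alpha_k^{(j)}\beta_{k+1}^{(j)}}$ join an empty dot to a filled dot and are lateral. The closing line $[B_{m_j}^{(j)}]_{\alpha_{m_j}^{(j)}a_1^{(j)}}$ joins two empty dots and is the bottom line. This settles the first sentence of the observation by direct reading of the index structure.

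Next I use the fact, established in the previous appendix observations, that thin lines always join dots of the same type: empty to empty via $\delta_{a\alpha}$, and filled to filled via $\delta_{b\beta}$. I then assign a parity to each position while traversing a $T$-loop. Thin lines preserve dot type. Lateral thick lines flip dot type. Top and bottom thick lines preserve dot type, but the top line lives on filled dots and the bottom line on empty dots.

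Case (a) is the situation where the top line's two endpoints are joined by a single thin line. Such a thin line must connect a filled dot attached to $U$ with a filled dot attached to $U^\dagger$, which is allowed. The loop then closes immediately as $\mathrm{Tr}(A_{n_j}^{(j)}\mathbb{I})$, and the same holds for the bottom line.

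Otherwise the loop leaving the top line on the filled side must eventually return to it, and it can only do so through filled dots. Consider the loop as it travels away from the top line. Thin lines keep it on filled dots. To reach any other operator it must traverse a thick line starting at a filled dot, which is either another top line or a lateral line that sends it to the empty side. If it never uses lateral lines, it consists only of top and bottom lines connected by thin lines. A top line and a bottom line cannot be adjacent through a thin line, since they sit on different dot types. So such a loop contains only top lines, which proves the claim that top and bottom lines cannot form a $T$-loop together without lateral lines. If the loop does use a lateral line, the number of type flips along a closed loop must be even, so it contains at least two lateral lines: one taking it filled$\to$empty and one taking it empty$\to$filled. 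Since each lateral line has one end of each type, these are traversed in opposite orientations relative to the diagram. This is what ``opposite sides'' means (an $A$-side lateral and a $B$-side lateral, or the same family traversed in opposite directions). This gives (b).

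The main subtlety is pinning down precisely what ``opposite sides'' means and checking that the parity argument delivers it, rather than merely giving two laterals. I would first try to show that a lateral $A$ line is always traversed filled$\to$empty in the direction of the arrow, and a lateral $B$ line empty$\to$filled. A closed loop that leaves the filled region must then contain both an $A$-type and a $B$-type crossing in matched orientation, which I would identify with lateral lines on opposite sides of the pre-contraction diagram.
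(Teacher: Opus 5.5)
Your classification of the four kinds of thick lines, the remark that thin lines preserve dot type, the conclusion that a lateral-free loop through a top line contains only top lines, and case (a) are all correct. This dot-type reasoning is also the only justification the paper gives, since it states the observation without a separate proof.

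The gap is in (b). Counting type flips only shows that a loop containing a lateral line contains an even number of them, hence at least two. An $A$-lateral and a $B$-lateral both flip the dot type, so parity cannot tell them apart and does not give ``opposite sides''. Your fallback reading, ``the same family traversed in opposite directions'', is not what the statement means. The two sides are the $U$ column carrying the lateral $A$'s and the $U^\dagger$ column carrying the lateral $B$'s, and that configuration in fact never occurs. The orientation fact you defer to your last paragraph is exactly the missing content, and it needs to be proved.

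It follows directly from Eq.~(\ref{eq:f-rep}). Call an operator's row index its tail and its column index its head. Then:
\begin{itemize}
\item every $a$-dot is a head (of $A_k$ with $k<n_j$, or of $B_{m_j}$);
\item every $\alpha$-dot is a tail (of some $B$);
\item every $b$-dot is a tail (of some $A$);
\item every $\beta$-dot is a head (of $B_{k-1}$ with $k\ge 2$, or of $A_{n_j}$).
\end{itemize}
The thin lines are $\delta_{a\alpha}$ and $\delta_{b\beta}$, so each one joins a head to a tail. Hence every $T$-loop is read consistently along the arrows. Along that traversal a lateral $A$ is always a filled$\to$empty step and a lateral $B$ is always an empty$\to$filled step. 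So in every closed $T$-loop the numbers of lateral $A$'s and lateral $B$'s are equal, which gives (b) and also the next observation.

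More concretely, the thin line leaving the head $\beta_1$ of the top line lands on the tail of some $A_k$. The thin line entering its tail $b_{n_j}$ comes from the head of a lateral $B$ or of a top line. So unless the loop is case (a), it contains an $A$-lateral immediately after the top line and a $B$-lateral immediately before it. The same argument, with roles reversed, applies to the bottom line.

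Finally, ``only top lines'' yields case (a) only because a single trace has a single top line. With $g>1$ groups your argument correctly allows loops made of several top lines, e.g.\ $\mathrm{Tr}(\rho_0^2)$ from the term $(**)$ in Eq.~(\ref{eq:QFIt1}), which is neither (a) nor (b). You should state explicitly that the (a)/(b) dichotomy is meant for one trace.
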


\begin{obs}The lateral thick lines cannot form a $T$-loop on their
own. The loop must involve the thick lines on opposite lateral sides.
Therefore, except for the $T$-cycles formed by the top or bottom
thick lines, any $T$-loop will contain at least two thick lines
from the two opposite sides. 
\end{obs}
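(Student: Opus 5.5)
The statement to prove is the last observation: lateral thick lines cannot form a $T$-loop on their own. Any $T$-loop not made solely of a top or bottom thick line must then contain at least two thick lines, taken from opposite lateral sides.

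The approach is to track dot types (empty versus filled) and sides (the $U$ side versus the $U^\dagger$ side) around a $T$-loop, using only two facts already established. Thin lines join empty to empty and filled to filled, and always join a dot on a dashed line to a dot on a star-dashed line. A $T$-loop alternates thick and thin lines.

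\emph{Classifying the thick lines.} In the canonical form $\mathrm{Tr}(UA_1\cdots UA_{n}U^{\dagger}B_1\cdots U^{\dagger}B_m)$, there are three kinds of thick line:
\begin{itemize}
\item lateral $A$-lines, connecting the filled (column) dot of one $U$ to the empty (row) dot of the next $U$, all on the $U$ side;
\item lateral $B$-lines, connecting the empty (column of $U^\dagger$) dot of one $U^\dagger$ to the filled dot of the next $U^\dagger$, all on the $U^\dagger$ side;
\item the top and bottom thick lines, namely $A_{n}$ and $B_{m}$, which cross between the two sides, filled-to-filled and empty-to-empty respectively, as in the preceding observation.
\end{itemize}
So every lateral thick line connects a filled dot to an empty dot on a single side.

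\emph{Side-switching argument.} Traverse a $T$-loop and record which side the current dot is on.
\begin{itemize}
\item Each thin line switches the side, since it pairs a $U$-dot with a $U^\dagger$-dot.
\item Each lateral thick line keeps the side.
\item Top and bottom lines switch the side.
\end{itemize}
Suppose a $T$-loop used only lateral thick lines of one side, say the $U$ side. Then after its first thin line the traversal sits on the $U^\dagger$ side. The next thick line would have to be a $B$-line or a top/bottom line, contradicting the assumption.

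More precisely, a $T$-loop with only lateral lines has $k$ thick and $k$ thin lines and makes $k$ side switches. Its thick lines must alternate between the $U$ side and the $U^\dagger$ side, so it contains lateral lines from both sides. In particular it contains at least one from each side, hence at least two.

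\emph{Mixed loops.} If the loop also contains top or bottom lines, the same parity bookkeeping shows that any lateral line present is flanked by thin lines leading to the opposite side. Those must eventually return via a thick line of the opposite side or via a top/bottom line. The only exception is a top or bottom line closed on itself by one thin line, which is case (a) of the previous observation.

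The main obstacle is making this last step airtight: ruling out a loop such as top line plus lateral lines of only one side. I would handle it by also tracking the dot type (filled or empty), which lateral lines flip. The dot type around a closed loop must return to its starting value, and combining this parity count with the side parity would force lateral lines from both sides.
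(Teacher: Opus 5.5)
The paper states this observation without proof, so your argument has to stand on its own. Your lateral-only case is correct. The mixed case, a loop containing a top or bottom line together with lateral lines, is the substantive part of the statement, and it is not established; the fix you propose does not close it. Let $a$ and $b$ be the numbers of $U$-side and $U^\dagger$-side lateral lines in a loop, and $N_{\mathrm{tb}}$ the number of top and bottom lines. Side parity gives $a+b+2N_{\mathrm{tb}}\equiv 0 \pmod 2$. Dot-type parity (only lateral lines flip the type) gives $a+b\equiv 0 \pmod 2$. The two counts carry the same information, and neither excludes, e.g., $a=2$, $b=0$. Your sentence that the excursion to the opposite side may ``return via a top/bottom line'' is exactly where the argument fails: a top or bottom line never substitutes for an opposite-side lateral line.

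The missing step is to track the side sequentially, as you already did for lateral-only loops, rather than count switches. Going around the loop, record the side on which you enter each thick line.
\begin{itemize}
\item A lateral line keeps the side and the next thin line flips it, so the next thick line is entered on the opposite side.
\item A top or bottom line flips the side and the next thin line flips it back, so the next thick line is entered on the same side.
\end{itemize}
A lateral line entered on the $U$ side is an $A$-line, and one entered on the $U^\dagger$ side is a $B$-line. Hence the lateral lines met around any $T$-loop alternate between the two sides, whatever top or bottom lines sit between them. So $a=b$ in every loop, and any loop containing a lateral line contains lateral lines from both sides.

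If $a=b=0$, your type-tracking finishes the job: thin, top and bottom lines all preserve the dot type. Such a loop therefore consists only of top lines (filled dots) or only of bottom lines (empty dots).

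Finally, your claim that the only exception is a single top or bottom line closed on itself holds for one trace. With several trace groups, loops built from several top lines or several bottom lines also occur, e.g.\ $\mathrm{Tr}(\rho_0^2)$ and $\mathrm{Tr}(H_0^2)$ in the contractions of the $(**)$ term of Eq.~\eqref{eq:QFIt1}. The statement's ``$T$-cycles formed by the top or bottom thick lines'' should be read to include these.
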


\subsection{Pedagogical illustration of the diagrammatic method for large $t$ \label{app:K-illustration}}
As a demonstration of the effectiveness of the diagrammatic method introduced above, we define the following quantities for both protocols:
\begin{subequations}
\begin{equation}
K_{\mathrm{ctr}}(t) =\langle\mathrm{Tr}[(WU)^{t}]\mathrm{Tr}[(U^{\dagger}W^{\dagger})^{t}],
\end{equation}
\begin{equation}
K_{\mathrm{sp}}(t) =\langle\mathrm{Tr}[W^{t}U^{t}]\mathrm{Tr}[W^{\dagger t}U^{\dagger t}]\rangle.
\end{equation}
\label{eq:K-def}
\end{subequations}
Note that $K_\mathrm{ctr}$ is the spectral form factor (SFF), a measure of spectral statistics~\cite{haake2001quantum}, of the control protocol, i.e., RQC with Floquet operator $WU$, while $K_\mathrm{sp}$ is a generalization of  $K_\mathrm{ctr}$ to the state preparation protocol but can not be interpreted directly as a SFF at time t. Nevertheless, we would like to illustrate how Principle~\ref{prp:RM} in the main text can be applied to extract the leading order scaling of these quantities.
\begin{figure}
\centering
\includegraphics[scale=0.69]{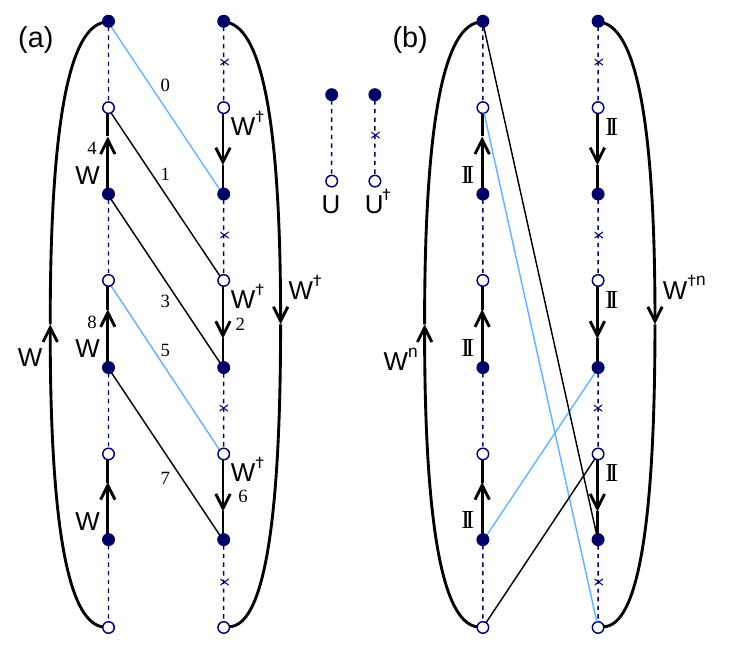}

\caption{\label{fig:SFF}The contraction of the quantity defined in Eq.~\eqref{eq:K-def}  for (a)
the control protocol (b) the state-preparation based protocol.}
\end{figure}
To this end, we consider Fig.~\ref{fig:SFF}(a)
in the limit $N\to\infty$ for the calculation of $K_{\mathrm{ctr}}(t)$:
\begin{itemize}

\item We start with the filled dot on the left top corner. There are $t$ possible ways to contract it with the filled dots on the right. In Fig.~\ref{fig:SFF}(a), we choose the blue line $0$. The Gaussianity principle then immediate leads to the contraction given by the black line $1$. 

\item Then, we consider the max-$T$ principle and conclude that if
each thick line on the left is paired with one thick line on
the right to form a $T$-loop, this is the only way that gives rise to the maximum number
of $T$-loops, which is $t$. This leads to the contraction $1\rightarrow2\rightarrow3\rightarrow4\rightarrow1$
in Fig~\ref{fig:SFF} (a). 

\item Next, we consider the Gaussianity principle again.
This immediately leads to the contraction 5.

\item Finally, we consider the max-$T$-principle again, which leads to
the $T$-loop $5\rightarrow6\rightarrow7\rightarrow8\rightarrow5$.
\end{itemize}
By considering the two principles alternately, one should be able
to contract all the empty and filled dots in Fig~\ref{fig:SFF}
(a), which in the end leads to 
\begin{equation}
K_{\mathrm{ctr}}(t)=t,\label{eq:RMT-Ktr}
\end{equation}
which agrees with the SFF of CUE before the plateau regime~\cite{haake2001quantum}, demonstrating RMM statistics - a defining signature of quantum chaos. For the plateau, a nonperturbative treatment is required~\cite{chan2018solution,liao2022effective}.
The leading order contractions in Fig.~\ref{fig:SFF}(b) can be also found analogously, which gives
\begin{equation}
K_{\mathrm{sp}}(t)=(t-1)\frac{|\mathrm{Tr}W^{t}|^{2}}{N^{2}}+1.\label{eq:RMT-Ksp}
\end{equation}

\section{Details of the many-body Weingarten formula and its diagrammatic methods for
Floquet RQC}
\label{app:RQC-Weingarten}

\subsection{Proof of the many-body Weingarten formula~\eqref{eq:Wg-Manybody} in the main text}
We note that if the gates $\mathsf{U}_{\bm{a}_{1}^{[\mu,\,\mu+1]}\bm{b}_{1}^{[\mu,\,\mu+1]}}$ for different $\mu$'s are independent of each other, the global average of the Floquet operator completely factorizes into the product of local Weingarten averages over the each two-site unitary gate. Thus, we apply the Weingarten formula separately for each two-site unitary in the Floquet
operator and obtain
\begin{widetext}
\begin{align}
 & \langle U_{\{a_{1,\,\sim\mu+\sigma(\mu)}^{[\mu]},\,b_{1,\,\sim\mu-\sigma(\mu)}^{[\mu]}\}}\cdots U_{\{a_{t,\,\sim\mu+\sigma(\mu)}^{[\mu]},\,b_{t,\,\sim\mu-\sigma(\mu)}^{[\mu]}\}}U_{\{\alpha_{1,\,\sim\mu+\sigma(\mu)}^{[\mu]},\,\beta_{1,\,\sim\mu-\sigma(\mu)}^{[\mu]}\}}^{*}\cdots U_{\{\alpha_{t,\,\sim\mu+\sigma(\mu)}^{[\mu]},\,\beta_{t,\,\sim\mu-\sigma(\mu)}^{[\mu]}\}}^{*}\rangle\nonumber \\
= & \prod_{\mu=1}^{L}\langle\mathsf{U}_{\bm{a}_{1}^{[\mu,\,\mu+1]}\bm{b}_{1}^{[\mu,\,\mu+1]}}\cdots\mathsf{U}_{\bm{a}_{t}^{[\mu,\,\mu+1]}\bm{b}_{t}^{[\mu,\,\mu+1]}}\mathsf{U}_{\bm{\alpha}_{1}^{[\mu,\,\mu+1]}\bm{\beta}_{1}^{[\mu,\,\mu+1]}}^{*}\cdots\mathsf{U}_{\bm{\alpha}_{t}^{[\mu,\,\mu+1]}\bm{\beta}_{t}^{[\mu,\,\mu+1]}}^{*}\rangle\nonumber \\
\times & \prod_{\mu=1}^{L}\delta_{a_{1,\,\sim\mu-\sigma(\mu)}^{[\mu]}b_{1,\,\sim\mu+\sigma(\mu)}^{[\mu]}}\cdots\prod_{\mu=1}^{L}\delta_{a_{t,\,\sim\mu-\sigma(\mu)}^{[\mu]}b_{t,\,\sim\mu+\sigma(\mu)}^{[\mu]}}\prod_{\mu=1}^{L}\delta_{\alpha_{1,\,\sim\mu-\sigma(\mu)}^{[\mu]}\beta_{1,\,\sim\mu+\sigma(\mu)}^{[\mu]}}\cdots\prod_{\mu=1}^{L}\delta_{\alpha_{t,\,\sim\mu-\sigma(\mu)}^{[\mu]}\beta_{t,\,\sim\mu+\sigma(\mu)}^{[\mu]}}\nonumber \\
= & \prod_{\mu=1}^{L}\left(\sum_{P^{[\mu,\, \mu+1]},\,P^{\prime[\mu,\,\mu+1]}}V_{P^{-1[\mu,\,\mu+1]}P^{\prime[\mu,\,\mu+1]}}\prod_{\xi=1}^{t}\delta_{a_{\xi,\,\sim\mu+1}^{[\mu]}\alpha_{P^{[\mu,\,\mu+1]}(\xi),\,\sim\mu+1}^{[\mu]}}\delta_{a_{\xi,\,\sim\mu}^{[\mu+1]}\alpha_{P^{[\mu,\,\mu+1]}(\xi),\,\sim\mu}^{[\mu+1]}}\delta_{b_{\xi,\,\sim\mu+1}^{[\mu]}\beta_{P^{\prime[\mu,\,\mu+1]}(\xi),\,\sim\mu+1}^{[\mu]}}\delta_{b_{\xi,\,\sim\mu}^{[\mu+1]}\beta_{P^{\prime[\mu,\,\mu+1]}(\xi),\,\sim\mu}^{[\mu+1]}}\right)\nonumber \\
\times & \prod_{\mu=1}^{L}\delta_{a_{1,\,\sim\mu-\sigma(\mu)}^{[\mu]}b_{1,\,\sim\mu+\sigma(\mu)}^{[\mu]}}\cdots\delta_{a_{t,\,\sim\mu-\sigma(\mu)}^{[\mu]}b_{t,\,\sim\mu+\sigma(\mu)}^{[\mu]}}\delta_{\alpha_{1,\,\sim\mu-\sigma(\mu)}^{[\mu]}\beta_{1,\,\sim\mu+\sigma(\mu)}^{[\mu]}}\cdots\delta_{\alpha_{t,\,\sim\mu-\sigma(\mu)}^{[\mu]}\beta_{t,\,\sim\mu+\sigma(\mu)}^{[\mu]}}.\label{eq:Wg-Manybody-Expand}
\end{align}
Note that the second equation implies that the Weingarten formula has been applied to the two-site unitary gates, regardless of whether their indices are contracted or free.
Using the identity
\begin{equation}
\prod_{\mu=1}^{L}\left\{ \sum_{P^{[\mu,\,\mu+1]},\,P^{\prime[\mu,\,\mu+1]}}F\left(P^{[\mu,\,\mu+1]},\,P^{\prime[\mu,\,\mu+1]}\right)\right\} =\sum_{P^{[12]},\,P^{\prime[12]}}\cdots\sum_{P^{[L1]},\,P^{\prime[L1]}}F\left(P^{[12]},\,P^{\prime[12]}\right)\cdots F\left(P^{[L1]},\,P^{\prime[L1]}\right),\label{eq:Prod-Sum-Id}
\end{equation}
where $F(P^{[\mu\nu]},\,P^{\prime[\mu\nu]})$ denotes some function
of the permutation $P^{[\mu\nu]}$ and $P^{\prime[\mu\nu]}$, the
last equation of Eq.~(\ref{eq:Wg-Manybody-Expand}) can be recast
into 
\begin{align}
 & \langle U_{\{a_{1,\,\sim\mu+\sigma(\mu)}^{[\mu]},\,b_{1,\,\sim\mu-\sigma(\mu)}^{[\mu]}\}}\cdots U_{\{a_{t,\,\sim\mu+\sigma(\mu)}^{[\mu]},\,b_{t,\,\sim\mu-\sigma(\mu)}^{[\mu]}\}}U_{\{\alpha_{1,\,\sim\mu+\sigma(\mu)}^{[\mu]},\,\beta_{1,\,\sim\mu-\sigma(\mu)}^{[\mu]}\}}^{*}\cdots U_{\{\alpha_{t,\,\sim\mu+\sigma(\mu)}^{[\mu]},\,\beta_{t,\,\sim\mu-\sigma(\mu)}^{[\mu]}\}}^{*}\rangle\nonumber \\
= & \sum_{P^{[12]},\,P^{\prime[12]}}\sum_{P^{[23]},\,P^{\prime[23]}}\cdots\sum_{P^{[L1]},\,P^{\prime[L1]}}\prod_{\mu=1}^{L}\left[V_{P^{-1[\mu,\,\mu+1]}P^{\prime[\mu,\,\mu+1]}}\prod_{\xi=1}^{t}\left(\delta_{a_{\xi,\,\sim\mu+1}^{[\mu]}\alpha_{P^{[\mu,\,\mu+1]}(\xi),\,\sim\mu+1}^{[\mu]}}\delta_{a_{\xi,\,\sim\mu}^{[\mu+1]}\alpha_{P^{[\mu,\,\mu+1]}(\xi),\,\sim\mu}^{[\mu+1]}}\delta_{b_{\xi,\,\sim\mu+1}^{[\mu]}\beta_{P^{\prime[\mu,\,\mu+1]}(\xi),\,\sim\mu+1}^{[\mu]}}\delta_{b_{\xi,\,\sim\mu}^{[\mu+1]}\beta_{P^{\prime[\mu,\,\mu+1]}(\xi),\,\sim\mu}^{[\mu+1]}}\right)\right]\nonumber \\
\times & \prod_{\mu=1}^{L}\delta_{a_{1,\,\sim\mu-\sigma(\mu)}^{[\mu]}b_{1,\,\sim\mu+\sigma(\mu)}^{[\mu]}}\cdots\delta_{a_{t,\,\sim\mu-\sigma(\mu)}^{[\mu]}b_{t,\,\sim\mu+\sigma(\mu)}^{[\mu]}}\delta_{\alpha_{1,\,\sim\mu-\sigma(\mu)}^{[\mu]}\beta_{1,\,\sim\mu+\sigma(\mu)}^{[\mu]}}\cdots\delta_{\alpha_{t,\,\sim\mu-\sigma(\mu)}^{[\mu]}\beta_{t,\,\sim\mu-\sigma(\mu)}^{[\mu]}}\nonumber \\
= & \sum_{P^{[12]},\,P^{\prime[12]}}\sum_{P^{[23]},\,P^{\prime[23]}}\cdots\sum_{P^{[L1]},\,P^{\prime[L1]}}\left[\prod_{\mu=1}^{L}V_{P^{-1[\mu,\,\mu+1]}P^{\prime[\mu,\,\mu+1]}}\right]\nonumber \\
\times & \prod_{\xi=1}^{t}\prod_{\mu=1}^{L}\left[\underbrace{\left(\delta_{a_{\xi,\,\sim\mu+1}^{[\mu]}\alpha_{P^{[\mu,\,\mu+1]}(\xi),\,\sim\mu+1}^{[\mu]}}\delta_{a_{\xi,\,\sim\mu}^{[\mu+1]}\alpha_{P^{[\mu,\,\mu+1]}(\xi),\,\sim\mu}^{[\mu+1]}}\delta_{b_{\xi,\,\sim\mu+1}^{[\mu]}\beta_{P^{\prime[\mu,\,\mu+1]}(\xi),\,\sim\mu+1}^{[\mu]}}\delta_{b_{\xi,\,\sim\mu}^{[\mu+1]}\beta_{P^{\prime[\mu,\,\mu+1]}(\xi),\,\sim\mu}^{[\mu+1]}}\right)}_{\text{(I)}}\underbrace{\left(\delta_{a_{\xi,\,\sim\mu-\sigma(\mu)}^{[\mu]}b_{\xi,\,\sim\mu+\sigma(\mu)}^{[\mu]}}\delta_{\alpha_{\xi,\,\sim\mu-\sigma(\mu)}^{[\mu]}\beta_{\xi,\,\sim\mu+\sigma(\mu)}^{[\mu]}}\right)}_{\text{(II)}}\right]\label{eq:Wg-Manybody-Middle1},
\end{align}
where we have rearranged the products $\prod_{\mu=1}^L \prod_{\xi=1}^t\cdots$ in the second last equation to the obtain the last equation.

Thanks to the periodic boundary conditions, we note the following
identity:

\begin{equation}
\prod_{\mu=1}^{L}\left(\delta_{a_{\xi,\,\sim\mu}^{[\mu+1]}\alpha_{P^{[\mu,\,\mu+1]}(\xi),\,\sim\mu}^{[\mu+1]}}\delta_{b_{\xi,\,\sim\mu}^{[\mu+1]}\beta_{P^{\prime[\mu,\,\mu+1]}(\xi),\,\sim\mu}^{[\mu+1]}}\right)=\prod_{\mu=1}^{L}\left(\delta_{a_{\xi,\,\sim\mu-1}^{[\mu]}\alpha_{P^{[\mu,\,\mu-1]}(\xi),\,\sim\mu-1}^{[\mu]}}\delta_{b_{\xi,\,\sim\mu-1}^{[\mu]}\beta_{P^{\prime[\mu,\,\mu-1]}(\xi),\,\sim\mu-1}^{[\mu]}}\right).
\end{equation}
Substituting above identity into the term $\prod_{\mu=1}^L\underbrace{\cdots}_\text{(I)}$ in Eq.~\eqref{eq:Wg-Manybody-Middle1}, we obtain
\begin{align}
&\prod_{\mu=1}^{L}\left(\underbrace{\delta_{a_{\xi,\,\sim\mu+1}^{[\mu]}\alpha_{P^{[\mu,\mu+1]}(\xi),\,\sim\mu+1}^{[\mu]}}\delta_{a_{\xi,\,\sim\mu}^{[\mu+1]}\alpha_{P^{[\mu,\mu+1]}(\xi),\,\sim\mu}^{[\mu+1]}}\delta_{b_{\xi,\,\sim\mu+1}^{[\mu]}\beta_{P^{\prime[\mu,\mu+1]}(\xi),\,\sim\mu+1}^{[\mu]}}\delta_{b_{\xi,\,\sim\mu}^{[\mu+1]}\beta_{P^{\prime[\mu,\mu+1]}(\xi),\,\sim\mu}^{[\mu+1]}}}_{\text{(I)}}\right) \nonumber \\
=&\prod_{\mu=1}^{L}\underbrace{\left(\delta_{a_{\xi,\,\sim\mu+1}^{[\mu]}\alpha_{P^{[\mu,\mu+1]}(\xi),\,\sim\mu+1}^{[\mu]}}\delta_{a_{\xi,\,\sim\mu-1}^{[\mu]}\alpha_{P^{[\mu-1,\,\mu]}(\xi),\,\sim\mu-1}^{[\mu]}}\delta_{b_{\xi,\,\sim\mu+1}^{[\mu]}\beta_{P^{\prime[\mu,\mu+1]}(\xi),\,\sim\mu+1}^{[\mu]}}\delta_{b_{\xi,\,\sim\mu-1}^{[\mu]}\beta_{P^{\prime[\mu-1,\mu]}(\xi),\,\sim\mu-1}^{[\mu]}}\right)}_\text{(III)}.
\end{align}
On the other hand, upon observing the following obvious identities
\begin{align}
\delta_{a_{\xi,\,\sim\mu+1}^{[\mu]}\alpha_{P^{[\mu,\,\mu+1]}(\xi),\,\sim\mu+1}^{[\mu]}}\delta_{a_{\xi,\,\sim\mu-1}^{[\mu]}\alpha_{P^{[\mu,\,\mu-1]}(\xi),\,\sim\mu-1}^{[\mu]}} & =\delta_{a_{\xi,\,\sim\mu+\sigma(\mu)}^{[\mu]}\alpha_{P^{[\mu,\,\mu+\sigma(\mu)]}(\xi),\,\sim\mu+\sigma(\mu)}^{[\mu]}}\delta_{a_{\xi,\,\sim\mu-\sigma(\mu)}^{[\mu]}\alpha_{P^{[\mu,\,\mu-\sigma(\mu)]}(\xi),\,\sim\mu-\sigma(\mu)}^{[\mu]}},\\
\delta_{b_{\xi,\,\sim\mu+1}^{[\mu]}\beta_{P^{\prime[\mu,\,\mu+1]}(\xi),\,\sim\mu+1}^{[\mu]}}\delta_{b_{\xi,\,\sim\mu-1}^{[\mu]}\beta_{P^{\prime[\mu,\,\mu-1]}(\xi),\,\sim\mu-1}^{[\mu]}} & =\delta_{b_{\xi,\,\sim\mu+\sigma(\mu)}^{[\mu]}\beta_{P^{\prime[\mu,\,\mu+\sigma(\mu)]}(\xi),\,\sim\mu+\sigma(\mu)}^{[\mu]}}\delta_{b_{\xi,\,\sim\mu-\sigma(\mu)}^{[\mu]}\beta_{P^{\prime[\mu,\,\mu-\sigma(\mu)]}(\xi),\,\sim\mu-\sigma(\mu)}^{[\mu]}},
\end{align}
we obtain
\begin{align}
 & \underbrace{\left(\delta_{a_{\xi,\,\sim\mu+1}^{[\mu]}\alpha_{P^{[\mu,\,\mu+1]}(\xi),\,\sim\mu+1}^{[\mu]}}\delta_{a_{\xi,\,\sim\mu-1}^{[\mu]}\alpha_{P^{[\mu,\,\mu-1]}(\xi),\,\sim\mu-1}^{[\mu]}}\delta_{b_{\xi,\,\sim\mu+1}^{[\mu]}\beta_{P^{\prime[\mu,\,\mu+1]}(\xi),\,\sim\mu+1}^{[\mu]}}\delta_{b_{\xi,\,\sim\mu-1}^{[\mu]}\beta_{P^{\prime[\mu,\,\mu-1]}(\xi),\,\sim\mu-1}^{[\mu]}}\right)}_\text{(III)}\underbrace{\left(\delta_{a_{\xi,\,\sim\mu-\sigma(\mu)}^{[\mu]}b_{\xi,\,\sim\mu+\sigma(\mu)}^{[\mu]}}\delta_{\alpha_{\xi,\,\sim\mu-\sigma(\mu)}^{[\mu]}\beta_{\xi,\,\sim\mu+\sigma(\mu)}^{[\mu]}}\right)}_\text{(II)}\nonumber \\
= & \left(\delta_{a_{\xi,\,\sim\mu+\sigma(\mu)}^{[\mu]}\alpha_{P^{[\mu,\,\mu+\sigma(\mu)]}(\xi),\,\sim\mu+\sigma(\mu)}^{[\mu]}}\underbrace{\delta_{a_{\xi,\,\sim\mu-\sigma(\mu)}^{[\mu]}\alpha_{P^{[\mu,\,\mu-\sigma(\mu)]}(\xi),\,\sim\mu-\sigma(\mu)}^{[\mu]}}\delta_{b_{\xi,\,\sim\mu+\sigma(\mu)}^{[\mu]}\beta_{P^{\prime[\mu,\,\mu+\sigma(\mu)]}(\xi),\,\sim\mu+\sigma(\mu)}^{[\mu]}}}_\text{(IV.1)}\delta_{b_{\xi,\,\sim\mu-\sigma(\mu)}^{[\mu]}\beta_{P^{\prime[\mu,\,\mu-\sigma(\mu)]}(\xi),\,\sim\mu-\sigma(\mu)}^{[\mu]}}\right)\left(\underbrace{\delta_{a_{\xi,\,\sim\mu-\sigma(\mu)}^{[\mu]}b_{\xi,\,\sim\mu+\sigma(\mu)}^{[\mu]}}}_{\text{(IV.2)}}\delta_{\alpha_{\xi,\,\sim\mu-\sigma(\mu)}^{[\mu]}\beta_{\xi,\,\sim\mu+\sigma(\mu)}^{[\mu]}}\right).\label{eq:Wg-Manybody-Middle2}
\end{align}
Furthermore, using the following identity 
\begin{equation}
\underbrace{\delta_{a_{\xi,\,\sim\mu-\sigma(\mu)}^{[\mu]}\alpha_{P^{[\mu,\,\mu-\sigma(\mu)]}(\xi),\,\sim\mu-\sigma(\mu)}^{[\mu]}}\delta_{b_{\xi,\,\sim\mu+\sigma(\mu)}^{[\mu]}\beta_{P^{\prime[\mu,\,\mu+\sigma(\mu)]}(\xi),\,\sim\mu+\sigma(\mu)}^{[\mu]}}}_\text{(IV.1)}\underbrace{\delta_{a_{\xi,\,\sim\mu-\sigma(\mu)}^{[\mu]}b_{\xi,\,\sim\mu+\sigma(\mu)}^{[\mu]}}}_\text{(IV.2)}=\underbrace{\delta_{\alpha_{P^{[\mu,\,\mu-\sigma(\mu)]}(\xi),\,\sim\mu-\sigma(\mu)}^{[\mu]}\beta_{P^{\prime[\mu,\,\mu+\sigma(\mu)]}(\xi),\,\sim\mu+\sigma(\mu)}^{[\mu]}}}_\text{(V)},\label{eq:delta-id-3}
\end{equation}
Eq.~\eqref{eq:Wg-Manybody-Middle2} can be further simplified as follows:
\begin{align}
 & \left(\delta_{a_{\xi,\,\sim\mu+\sigma(\mu)}^{[\mu]}\alpha_{P^{[\mu,\,\mu+\sigma(\mu)]}(\xi),\,\sim\mu+\sigma(\mu)}^{[\mu]}}\underbrace{\delta_{a_{\xi,\,\sim\mu-\sigma(\mu)}^{[\mu]}\alpha_{P^{[\mu,\,\mu-\sigma(\mu)]}(\xi),\,\sim\mu-\sigma(\mu)}^{[\mu]}}\delta_{b_{\xi,\,\sim\mu+\sigma(\mu)}^{[\mu]}\beta_{P^{\prime[\mu,\,\mu+\sigma(\mu)]}(\xi),\,\sim\mu+\sigma(\mu)}^{[\mu]}}}_\text{(IV.1)}\delta_{b_{\xi,\,\sim\mu-\sigma(\mu)}^{[\mu]}\beta_{P^{\prime[\mu,\,\mu-\sigma(\mu)]}(\xi),\,\sim\mu-\sigma(\mu)}^{[\mu]}}\right)\left(\underbrace{\delta_{a_{\xi,\,\sim\mu-\sigma(\mu)}^{[\mu]}b_{\xi,\,\sim\mu+\sigma(\mu)}^{[\mu]}}}_{\text{(IV.2)}}\delta_{\alpha_{\xi,\,\sim\mu-\sigma(\mu)}^{[\mu]}\beta_{\xi,\,\sim\mu+\sigma(\mu)}^{[\mu]}}\right)\nonumber \\
= & \delta_{a_{\xi,\,\sim\mu+\sigma(\mu)}^{[\mu]}\alpha_{P^{[\mu,\,\mu+\sigma(\mu)]}(\xi),\,\sim\mu+\sigma(\mu)}^{[\mu]}}\underbrace{\delta_{\alpha_{P^{[\mu,\,\mu-\sigma(\mu)]}(\xi),\,\sim\mu-\sigma(\mu)}^{[\mu]}\beta_{P^{\prime[\mu,\,\mu+\sigma(\mu)]}(\xi),\,\sim\mu+\sigma(\mu)}^{[\mu]}}}_{\text{(V)}}\delta_{b_{\xi,\,\sim\mu-\sigma(\mu)}^{[\mu]}\beta_{P^{\prime[\mu,\,\mu-\sigma(\mu)]}(\xi),\,\sim\mu-\sigma(\mu)}^{[\mu]}}\delta_{\alpha_{\xi,\,\sim\mu-\sigma(\mu)}^{[\mu]}\beta_{\xi,\,\sim\mu+\sigma(\mu)}^{[\mu]}}\nonumber \\
= & \delta_{a_{\xi,\,\sim\mu+\sigma(\mu)}^{[\mu]}\alpha_{P^{[\mu,\,\mu+\sigma(\mu)]}(\xi),\,\sim\mu+\sigma(\mu)}^{[\mu]}}\delta_{b_{\xi,\,\sim\mu-\sigma(\mu)}^{[\mu]}\beta_{P^{\prime[\mu,\,\mu-\sigma(\mu)]}(\xi),\,\sim\mu-\sigma(\mu)}^{[\mu]}}\underbrace{\delta_{\alpha_{\xi,\,\sim\mu-\sigma(\mu)}^{[\mu]}\beta_{\xi,\,\sim\mu+\sigma(\mu)}^{[\mu]}}\delta_{\alpha_{P^{[\mu,\,\mu-\sigma(\mu)]}(\xi),\,\sim\mu-\sigma(\mu)}^{[\mu]}\beta_{P^{\prime[\mu,\,\mu+\sigma(\mu)]}(\xi),\,\sim\mu+\sigma(\mu)}^{[\mu]}}}_{\text{(VI)}},
\end{align}
where  we have used Eq.~\eqref{eq:delta-id-3} in the first equation and rearrange the product in the last equation.
To summarize,  the average of the homogeneous product of the Floquet operators can be expressed as follows: 
\begin{align}
 & \langle U_{\{a_{1,\,\sim\mu+\sigma(\mu)}^{[\mu]},\,b_{1,\,\sim\mu-\sigma(\mu)}^{[\mu]}\}}\cdots U_{\{a_{t,\,\sim\mu+\sigma(\mu)}^{[\mu]},\,b_{t,\,\sim\mu-\sigma(\mu)}^{[\mu]}\}}U_{\{\alpha_{1,\,\sim\mu+\sigma(\mu)}^{[\mu]},\,\beta_{1,\,\sim\mu-\sigma(\mu)}^{[\mu]}\}}^{*}\cdots U_{\{\alpha_{t,\,\sim\mu+\sigma(\mu)}^{[\mu]},\,\beta_{t,\,\sim\mu-\sigma(\mu)}^{[\mu]}\}}^{*}\rangle\nonumber \\
= & \sum_{P^{[12]},\,P^{\prime[12]}}\sum_{P^{[23]},\,P^{\prime[23]}}\cdots\sum_{P^{[L1]},\,P^{\prime[L1]}}\left[\prod_{\mu=1}^{L}V_{P^{-1[\mu,\,\mu+1]}P^{\prime[\mu,\,\mu+1]}}\right]\nonumber \\
\times & \prod_{\xi=1}^{t}\prod_{\mu=1}^{L}\left(\delta_{a_{\xi,\,\sim\mu+\sigma(\mu)}^{[\mu]}\alpha_{P^{[\mu,\,\mu+\sigma(\mu)]}(\xi),\,\sim\mu+\sigma(\mu)}^{[\mu]}}\delta_{b_{\xi,\,\sim\mu-\sigma(\mu)}^{[\mu]}\beta_{P^{\prime[\mu,\,\mu-\sigma(\mu)]}(\xi),\,\sim\mu-\sigma(\mu)}^{[\mu]}}\underbrace{\delta_{\alpha_{\xi,\,\sim\mu-\sigma(\mu)}^{[\mu]}\beta_{\xi,\,\sim\mu+\sigma(\mu)}^{[\mu]}}\delta_{\alpha_{P^{[\mu,\,\mu-\sigma(\mu)]}(\xi),\,\sim\mu-\sigma(\mu)}^{[\mu]}\beta_{P^{\prime[\mu,\,\mu+\sigma(\mu)]}(\xi),\,\sim\mu+\sigma(\mu)}^{[\mu]}}}_\text{(VI)}\right).\label{eq:Weingarten-Manybody-del-sum}
\end{align}
We observe that, as shown in the Fig.~\ref{fig:dummy indices}, the indices in the expression $\prod_{\xi=1}^t\underbrace{\cdots}_{\text{(VI)}}$
are dummy because they appear twice. Therefore, the free indices left in Eq.~\eqref{eq:Weingarten-Manybody-del-sum}
are the $\{a_{\xi,\,\sim\mu+\sigma(\mu)}^{[\mu]},\,b_{\xi,\,\sim\mu-\sigma(\mu)}^{[\mu]}\}$. 
Furthermore, upon noting that 
\begin{equation}
\delta_{\alpha_{P(\xi)},\,\beta_{P^{\prime}(\xi)}}=\delta_{\alpha_{\xi},\,\beta_{P^{-1}P^{\prime}(\xi)}}, \quad \prod_{\xi=1}^{t} \delta_{\beta_{\xi},\,\beta_{P^{-1}P^{\prime}(\xi)}}=q^{\text{\# of cycles in } P^{-1}P^{\prime} },
\end{equation}
we observe
\begin{equation}
\prod_{\xi=1}^{t}\underbrace{\delta_{\alpha_{\xi,\,\sim\mu-\sigma(\mu)}^{[\mu]}\beta_{\xi,\,\sim\mu+\sigma(\mu)}^{[\mu]}}\delta_{\alpha_{P^{[\mu,\,\mu-\sigma(\mu)]}(\xi),\,\sim\mu-\sigma(\mu)}^{[\mu]}\beta_{P^{\prime[\mu,\,\mu+\sigma(\mu)]}(\xi),\,\sim\mu+\sigma(\mu)}^{[\mu]}}}_\text{(VI)}=\prod_{\xi=1}^{t}\delta_{\beta_{\xi,\,\sim\mu+\sigma(\mu)}^{[\mu]}\beta_{P^{-1[\mu,\,\mu-\sigma(\mu)]}P^{\prime[\mu,\,\mu+\sigma(\mu)]}(\xi),\,\sim\mu+\sigma(\mu)}^{[\mu]}}=q^{\bar{s}^{[\mu,\,\mu\pm\sigma(\mu)]}}\label{eq:delta-sum},
\end{equation}
where $\bar{s}^{[\mu,\,\mu\pm\sigma(\mu)]}$ is the cycles in the permutation $P^{-1[\mu,\,\mu-\sigma(\mu)]}P^{\prime[\mu,\,\mu+\sigma(\mu)]}$. 
As a consequence,  we obtain Eq.~(\ref{eq:Wg-Manybody})
in the main text. 
\begin{figure}
\includegraphics[scale=1]{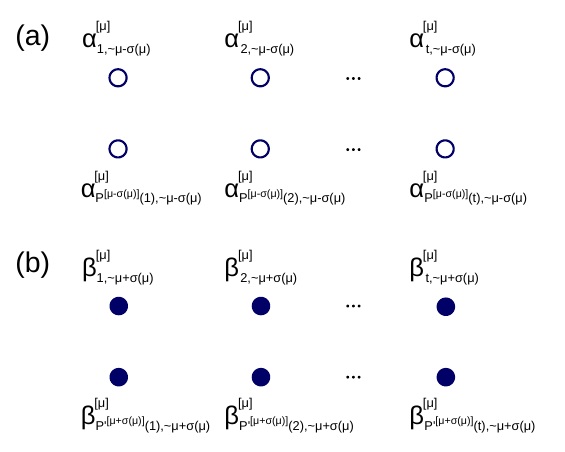}
\caption{A list of the subscripts in Eq.~\eqref{eq:delta-sum}. \label{fig:dummy indices} }
\end{figure}

\subsection{Proof of the factorization principle~\ref{prp:many-body} }

Let us first consider the case of two sites. In the large $q$ limit, the leading
order of $V_{P^{-1[12]}P^{\prime[12]}}\sim(q^{2})^{-2t+s}$, where
we recall that $s$ is the number of disjoint cycles in the permutation
$P^{-1[12]}P^{\prime[12]}$. In this limit, the global Weingarten
function factorizes into the local contributions from site $1$ and
$2$ respectively. More precisely,
\begin{equation}
V_{P^{-1[12]}P^{\prime[12]}}=V_{P^{-1[1]}P^{\prime[1]}}V_{P^{-1[2]}P^{\prime[2]}}+\mathscr{O}(q^{2(-2t+s-2)}),\label{eq:Wg-factorization}
\end{equation}
where $V_{P^{-1[\mu]}P^{\prime[\mu]}}$
is the Weingarten function for the Hilbert space at site $\mu$, which is
of dimension $q$. It is worth noting that Eq.~(\ref{eq:Wg-factorization})
generally does not hold in the non-asymptotic regime where $q$ is finite.

On the other hand, since $\delta_{\bm{a}_{\xi}\bm{\alpha}_{P(\xi)}}=\delta_{a_{\xi}^{(1)}\alpha_{\xi}^{(1)}}\delta_{a_{\xi}^{(2)}\alpha_{\xi}^{(2)}}$,
Eq.~\eqref{eq:two-site-Wg} in the main text can be alternatively written as 
\begin{align}
 & \langle\mathsf{U}_{\bm{a}_{1}^{[12]}\bm{b}_{1}^{[12]}}\cdots\mathsf{U}_{\bm{a}_{t}^{[12]}\bm{b}_{t}^{[12]}}\mathsf{U}_{\bm{\alpha}_{1}^{[12]}\bm{\beta}_{1}}^{*}\cdots\mathsf{U}_{\bm{\alpha}_{t}^{[12]}\bm{\beta}_{t}^{[12]}}^{*}\rangle\nonumber \\
= & \sum_{P^{[1]},\,P^{\prime[1]}}\sum_{Q^{[2]},\,Q^{\prime[2]}}\left(V_{P^{-1[1]}P^{\prime[1]}}\prod_{\xi=1}^{t}\delta_{a_{\xi,\sim2}^{[1]}\alpha_{P^{[1]}(\xi),\sim2}^{[1]}}\delta_{b_{\xi,\sim2}^{[1]}\beta_{P^{\prime[1]}(\xi),\sim2}^{[1]}}\right)\left(V_{Q^{-1[2]}Q^{\prime[2]}}\prod_{\xi=1}^{t}\delta_{a_{\xi,\sim1}^{[2]}\alpha_{Q^{[2]}(\xi),\sim1}^{[2]}}\delta_{b_{\xi,\sim1}^{[2]}\beta_{Q^{\prime[2]}(\xi),\sim1}^{[2]}}\right)\delta_{P^{[1]},\,Q^{[2]}}\delta_{P^{\prime[1]},\,Q^{\prime[2]}}.\label{eq:Two-site-effective-Weingarten}
\end{align}
Similarly,
\begin{align}
 & \langle\mathsf{U}_{\bm{a}_{1}^{[\mu\nu]}\bm{b}_{1}^{[\mu\nu]}}\cdots\mathsf{U}_{\bm{a}_{t}^{[\mu\nu]}\bm{b}_{t}^{[\mu\nu]}}\mathsf{U}_{\bm{\alpha}_{1}^{[\mu\nu]}\bm{\beta}_{1}^{[\mu\nu]}}^{*}\cdots\mathsf{U}_{\bm{\alpha}_{t}^{[\mu\nu]}\bm{\beta}_{t}^{[\mu\nu]}}^{*}\rangle\nonumber \\
= & \sum_{P^{[\mu]},\,P^{\prime[\mu]}}\sum_{Q^{[\nu]},\,Q^{\prime[\nu]}}\left(V_{P^{-1[\mu]}P^{\prime[\mu]}}\prod_{\xi=1}^{t}\delta_{a_{\xi,\,\sim\nu}^{[\mu]}\alpha_{P^{[\mu]}(\xi),\,\sim\nu}^{[\mu]}}\delta_{b_{\xi,\,\sim\nu}^{[\mu]}\beta_{P^{\prime[\mu]}(\xi),\,\sim\nu}^{[\mu]}}\right)\left(V_{Q^{-1[\nu]}Q^{\prime[\nu]}}\prod_{\xi=1}^{t}\delta_{a_{\xi,\,\sim\mu}^{[\nu]}\alpha_{Q^{[\nu]}(\xi),\,\sim\mu}^{[\nu]}}\delta_{b_{\xi,\,\sim\mu}^{[\nu]}\beta_{Q^{\prime[\nu]}(\xi),\,\sim\mu}^{[\nu]}}\right)\delta_{P^{[\mu]},\,Q^{[\nu]}}\delta_{P^{\prime[\mu]},\,Q^{\prime[\nu]}}.
\end{align}

We consider operators that are supported on a single-site, i.e., $A_{\xi}^{(12)}=A_{\xi}^{(1)}\otimes A_{\xi}^{(2)}$, $B_{\xi}^{(12)}=B_{\xi}^{(1)}\otimes B_{\xi}^{(2)}$.
Intuitively, Eq.~(\ref{eq:Two-site-effective-Weingarten}) simply
implies that one can first perform single-site contractions at sites
$1$ and $2$, respectively. $\delta_{P^{[1]},\,Q^{[2]}}\delta_{P^{\prime[1]},\,Q^{\prime(2)}}$
imposes a restriction that the contractions at site $1$ and $2$
must be the same, which can be diagrammatically interpreted as the
thin solids lines at both sides must be joined in the same pattern.
We note that this restriction is nothing but the bond constraint put
forward in Ref.~\citep{chan2018solution}. To see above intuition
more clear, let us consider $f=\mathrm{Tr}(UA_{1}\cdots UA_{t}U^{\dagger}B_{1}\cdots U^{\dagger}B_{t})$.
Then 
\begin{align}
 & \mathrm{Tr}(UA_{1}\cdots UA_{t}U^{\dagger}B_{1}\cdots U^{\dagger}B_{t})\nonumber \\
= & \mathsf{U}_{\bm{a}_{1}^{[12]}\bm{b}_{1}^{[12]}}\cdots\mathsf{U}_{\bm{a}_{t}^{[12]}\bm{b}_{t}^{[12]}}\mathsf{U}_{\bm{\alpha}_{1}^{[12]}\bm{\beta}_{1}^{[12]}}^{*}\cdots\mathsf{U}_{\bm{\alpha}_{t}^{[12]}\bm{\beta}_{t}^{[12]}}^{*}\nonumber \\
\times & \prod_{\mu=1}^{2}[A_{1}^{[\mu]}]_{b_{1,\sim\mu+1}^{[\mu]}a_{2,\sim\mu+1}^{[\mu]}}[A_{2}^{[\mu]}]_{b_{2,\sim\mu+1}^{[\mu]}a_{3,\sim\mu+1}^{[\mu]}}\cdots[A_{t}^{[\mu]}]_{b_{t,\sim\mu+1}^{[\mu]}\beta_{1,\sim\mu+1}^{[\mu]}}[B_{1}^{[\mu]}]_{\alpha_{1,\sim\mu+1}^{[\mu]}\beta_{2,\sim\mu+1}^{[\mu]}}[B_{2}^{[\mu]}]_{\alpha_{2,\sim\mu+1}^{[\mu]}\beta_{3,\sim\mu+1}^{[\mu]}}\cdots[B_{t}^{[\mu]}]_{\alpha_{t,\sim\mu+1}^{[\mu]}a_{1,\sim\mu+1}^{[\mu]}},
\end{align}
where the periodic boundary condition is assumed. Now applying Eq.~(\ref{eq:Two-site-effective-Weingarten}),
we obtain 
\begin{align}
 & \langle\mathrm{Tr}(UA_{1}\cdots UA_{t}U^{\dagger}B_{1}\cdots U^{\dagger}B_{t})\rangle\nonumber \\
= & \prod_{\mu=1}^{L}\langle\mathsf{U}_{\bm{a}_{1}^{[12]}\bm{b}_{1}^{[12]}}\cdots\mathsf{U}_{\bm{a}_{t}^{[12]}\bm{b}_{t}^{[12]}}\mathsf{U}_{\bm{\alpha}_{1}^{[12]}\bm{\beta}_{1}^{[12]}}^{*}\cdots\mathsf{U}_{\bm{\alpha}_{t}^{[12]}\bm{\beta}_{t}^{[12]}}^{*}\rangle\nonumber \\
\times & \prod_{\mu=1}^{2}[A_{1}^{[\mu]}]_{b_{1,\sim\mu+1}^{[\mu]}a_{2,\sim\mu+1}^{[\mu]}}[A_{2}^{[\mu]}]_{b_{2,\sim\mu+1}^{[\mu]}a_{3,\sim\mu+1}^{[\mu]}}\cdots[A_{t}^{[\mu]}]_{b_{t,\sim\mu+1}^{[\mu]}\beta_{1,\sim\mu+1}^{[\mu]}}[B_{1}^{[\mu]}]_{\alpha_{1,\sim\mu+1}^{[\mu]}\beta_{2,\sim\mu+1}^{[\mu]}}[B_{2}^{[\mu]}]_{\alpha_{2,\sim\mu+1}^{[\mu]}\beta_{3,\sim\mu+1}^{[\mu]}}\cdots[B_{t}^{[\mu]}]_{\alpha_{t,\sim\mu+1}^{[\mu]}a_{1,\sim\mu+1}^{[\mu]}}\nonumber \\
= & \sum_{P^{[1]},\,P^{\prime[1]}}\sum_{Q^{[2]},\,Q^{\prime[2]}}\left(V_{P^{-1[1]}P^{\prime[1]}}\prod_{\xi=1}^{t}\delta_{a_{\xi,\,\sim2}^{[1]}\alpha_{P^{[1]}(\xi),\,\sim2}^{[1]}}\delta_{b_{\xi,\,\sim2}^{[1]}\beta_{P^{\prime[1]}(\xi),\,\sim2}^{[1]}}[A_{1}^{[1]}]_{b_{1,\sim2}^{[1]}a_{2,\sim2}^{[1]}}[A_{2}^{[1]}]_{b_{2,\sim2}^{[1]}a_{3,\sim2}^{[1]}}\cdots[A_{t}^{[1]}]_{b_{t,\sim2}^{[1]}\beta_{1,\sim2}^{[1]}}[B_{1}^{[1]}]_{\alpha_{1,\sim2}^{[1]}\beta_{2,\sim2}^{[1]}}[B_{2}^{[1]}]_{\alpha_{2,\sim2}^{[1]}\beta_{3,\sim2}^{[1]}}\cdots[B_{t}^{[1]}]_{\alpha_{t,\sim2}^{[1]}a_{1,\sim2}^{[1]}}\right)\nonumber \\
\times & \left(V_{Q^{-1[2]}Q^{\prime[2]}}\prod_{\xi=1}^{t}\delta_{a_{\xi,\,\sim1}^{[2]}\alpha_{Q^{[2]}(\xi),\,\sim1}^{[2]}}\delta_{b_{\xi}^{[2]}\beta_{Q^{\prime[2]}(\xi),\sim1}^{[2]}}[A_{1}^{[2]}]_{b_{1,\sim1}^{[2]}a_{2,\sim1}^{[2]}}[A_{2}^{[2]}]_{b_{2,\sim1}^{[2]}a_{3,\sim1}^{[2]}}\cdots[A_{t}^{[2]}]_{b_{t,\sim1}^{[2]}\beta_{1,\sim1}^{[2]}}[B_{1}^{[2]}]_{\alpha_{1,\sim1}^{[2]}\beta_{2,\sim1}^{[2]}}[B_{2}^{[2]}]_{\alpha_{2,\sim1}^{[2]}\beta_{3,\sim1}^{[2]}}\cdots[B_{t}^{[2]}]_{\alpha_{t,\sim1}^{[2]}a_{1,\sim1}^{[2]}}\right)\nonumber \\
\times & \delta_{P^{[1]},\,Q^{[2]}}\delta_{P^{\prime[1]},\,Q^{\prime[2]}}.
\end{align}
Upon defining 
\begin{align}
T_{P^{[\mu]},\,P^{\prime[\mu]}}=&\prod_{\xi=1}^{t}\delta_{a_{\xi,\,\sim\mu+1}^{[\mu]}\alpha_{P^{[\mu]}(\xi),\,\sim\mu+1}^{[\mu]}}\delta_{b_{\xi,\,\sim\mu+1}^{[\mu]}\beta_{P^{\prime[\mu]}(\xi),\,\sim\mu+1}^{[\mu]}} \nonumber \\
&\times[A_{1}^{[\mu]}]_{b_{1,\sim\mu+1}^{[\mu]}a_{2,\sim\mu+1}^{[\mu]}}[A_{2}^{[\mu]}]_{b_{2,\sim\mu+1}^{[\mu]}a_{3,\sim\mu+1}^{[\mu]}}\cdots[A_{t}^{[\mu]}]_{b_{t,\sim\mu+1}^{[\mu]}\beta_{1,\sim\mu+1}^{[\mu]}}[B_{1}^{[\mu]}]_{\alpha_{1,\sim\mu+1}^{[\mu]}\beta_{2,\sim\mu+1}^{[\mu]}}[B_{2}^{[\mu]}]_{\alpha_{2,\sim\mu+1}^{[\mu]}\beta_{3,\sim\mu+1}^{[\mu]}}\cdots[B_{t}^{[\mu]}]_{\alpha_{t,\sim\mu+1}^{[\mu]}a_{1,\sim\mu+1}^{[\mu]}},
\end{align}
where $\mu=1,\,2$, we can rewrite 
\begin{equation}
\mathrm{Tr}(UA_{1}\cdots UA_{t}U^{\dagger}B_{1}\cdots U^{\dagger}B_{t})=\sum_{P^{[1]},\,P^{\prime[1]}}\sum_{Q^{[2]},\,Q^{\prime[2]}}V_{P^{-1[1]}P^{\prime[1]}}T_{P^{[1]},\,P^{\prime[1]}}V_{Q^{-1[2]}Q^{\prime[2]}}T_{Q^{[2]},\,Q^{\prime[2]}}\delta_{P^{[1]},\,Q^{[2]}}\delta_{P^{\prime[1]},\,Q^{\prime[2]}}.
\label{eq:Wg-ProdTr-Two-site}
\end{equation}
Eq.~(\ref{eq:Wg-ProdTr-Two-site}) is the generalization of the Weingarten
trace formula to the case of two-site CUE and can
be further generalized to generic $L>2$. 
\begin{figure}
\centering
\includegraphics[scale=0.8]{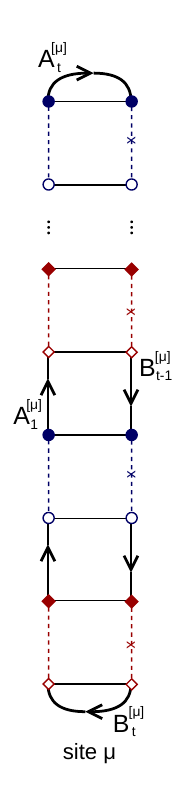}
\caption{\label{fig:many-body-diagram}The many-body precontraction diagram
of $\mathrm{Tr}(UA_{1}\cdots UA_{t}U^{\dagger}B_{1}\cdots U^{\dagger}B_{t})$ is a product of the single-site diagrams shown in the figure,
where $A_{\xi}=\prod_{\mu}A_{\xi}^{[\mu]}$ and $B_{\xi}=\prod_{\mu}B_{\xi}^{[\mu]}$ are made of  product of operators
acting on a single site. }
\end{figure}

Unitary gates acting on different pair of sites must be considered
as separate and averaged over independently. Therefore, in the most generic
case, one needs to apply the Weingarten formula a number of
times that equals the number of different unitary gates. For simplicity,
we consider the a linear functional of degree $t$ with only one group
of trace products, i.e., $f=\mathrm{Tr}(UA_{1}\cdots UA_{t}U^{\dagger}B_{1}\cdots U^{\dagger}B_{t})$.
The argument and intuition generalizes to an arbitrary number
of groups of trace products, as long as the assumption that all the
operators are supported on single sites holds. Similarly to the RMM
case, due to the fact that $A_{\xi}$ and $B_{\xi}$ are only supported
on a single site, we can write
\begin{align}
\mathrm{Tr}(UA_{1}\cdots UA_{t}U^{\dagger}B_{1}\cdots U^{\dagger}B_{t}) & =\prod_{\mu=1}^{L}\mathsf{U}_{\bm{a}_{1}^{[\mu,\,\mu+1]}\bm{b}_{1}^{[\mu,\,\mu+1]}}\cdots\prod_{\mu=1}^{L}\mathsf{U}_{\bm{a}_{t}^{[\mu,\,\mu+1]}\bm{b}_{t}^{[\mu,\,\mu+1]}}\prod_{\mu=1}^{L}\mathsf{U}_{\bm{\alpha}_{1}^{[\mu,\,\mu+1]}\bm{\beta}_{1}^{[\mu,\,\mu+1]}}^{*}\cdots\prod_{\mu=1}^{L}\mathsf{U}_{\bm{\alpha}_{t}^{[\mu,\,\mu+1]}\bm{\beta}_{t}^{[\mu,\,\mu+1]}}^{*}\nonumber \\
 & \times\prod_{\mu=1}^{L}\delta_{a_{1,\,\sim\mu-\sigma(\mu)}^{[\mu]}b_{1,\,\sim\mu+\sigma(\mu)}^{[\mu]}}\cdots\prod_{\mu=1}^{L}\delta_{a_{t,\,\sim\mu-\sigma(\mu)}^{[\mu]}b_{t,\,\sim\mu+\sigma(\mu)}^{[\mu]}}\prod_{\mu=1}^{L}\delta_{\alpha_{1,\,\sim\mu-\sigma(\mu)}^{[\mu]}\beta_{1,\,\sim\mu+\sigma(\mu)}^{[\mu]}}\cdots\prod_{\mu=1}^{L}\delta_{\alpha_{t,\,\sim\mu-\sigma(\mu)}^{[\mu]}\beta_{t,\,\sim\mu+\sigma(\mu)}^{[\mu]}}\nonumber \\
 & \times\prod_{\mu}[A_{1}^{[\mu]}]_{b_{1,\sim\mu-\sigma(\mu)}^{[\mu]}a_{2,\sim\mu+\sigma(\mu)}^{[\mu]}}\prod_{\mu}[A_{2}^{[\mu]}]_{b_{2,\sim\mu-\sigma(\mu)}^{[\mu]}a_{3,\sim\mu+\sigma(\mu)}^{[\mu]}}\cdots\prod_{\mu}[A_{t}^{[\mu]}]_{b_{t,\sim\mu-\sigma(\mu)}^{[\mu]}\beta_{1,\sim\mu-\sigma(\mu)}^{[\mu]}}\nonumber \\
 & \times\prod_{\mu}[B_{1}^{[\mu]}]_{\alpha_{1,\sim\mu+\sigma(\mu)}^{[\mu]}\beta_{2,\sim\mu-\sigma(\mu)}^{[\mu]}}\prod_{\mu}[B_{2}^{[\mu]}]_{\alpha_{2,\sim\mu+\sigma(\mu)}^{[\mu]}\beta_{3,\sim\mu-\sigma(\mu)}^{[\mu]}}\cdots\prod_{\mu}[B_{t}^{[\mu]}]_{\alpha_{t,\sim\mu+\sigma(\mu)}^{[\mu]}a_{1,\sim\mu+\sigma(\mu)}^{[\mu]}},
\end{align}
where we note that the indices of the matrix elements of the operators
are:
\begin{equation}
\prod_{\mu}[A_{1}^{[\mu]}]_{b_{1,\sim\mu-\sigma(\mu)}^{[\mu]}a_{2,\sim\mu+\sigma(\mu)}^{[\mu]}},\,\cdots,\,\prod_{\mu}[A_{t-1}^{[\mu]}]_{b_{t-1,\sim\mu-\sigma(\mu)}^{[\mu]}a_{t,\sim\mu+\sigma(\mu)}^{[\mu]}},\,\prod_{\mu}[A_{t}^{[\mu]}]_{b_{t,\sim\mu-\sigma(\mu)}^{[\mu]}\beta_{1,\sim\mu-\sigma(\mu)}^{[\mu]}} ,
\end{equation}
\begin{equation}
\prod_{\mu}[B_{1}^{[\mu]}]_{\alpha_{1,\sim\mu+\sigma(\mu)}^{[\mu]}\beta_{2,\sim\mu-\sigma(\mu)}^{[\mu]}},\,\cdots,\,\prod_{\mu}[B_{t-1}^{[\mu]}]_{\alpha_{t-1,\sim\mu+\sigma(\mu)}^{[\mu]}\beta_{2,\sim\mu-\sigma(\mu)}^{[\mu]}},\,\prod_{\mu}[B_{t}^{[\mu]}]_{\alpha_{t,\sim\mu+\sigma(\mu)}^{[\mu]}a_{1,\sim\mu+\sigma(\mu)}^{[\mu]}} .
\end{equation}
Note that
\begin{align}
 & \prod_{\mu=1}^{L}\langle\mathsf{U}_{\bm{a}_{1}^{[\mu,\,\mu+1]}\bm{b}_{1}^{[\mu,\,\mu+1]}}\cdots\mathsf{U}_{\bm{a}_{t}^{[\mu,\,\mu+1]}\bm{b}_{t}^{[\mu,\,\mu+1]}}\mathsf{U}_{\bm{\alpha}_{1}^{[\mu,\,\mu+1]}\bm{\beta}_{1}^{[\mu,\,\mu+1]}}^{*}\cdots\mathsf{U}_{\bm{\alpha}_{t}^{[\mu,\,\mu+1]}\bm{\beta}_{t}^{[\mu,\,\mu+1]}}^{*}\rangle\nonumber \\
= & \prod_{\mu=1}^{L}\left\{ \sum_{P^{[\mu]},\,P^{\prime[\mu]}}\sum_{Q^{[\mu+1]},\,Q^{\prime[\mu+1]}}\left(V_{P_{\sim\mu+1}^{-1[\mu]}P_{\sim\mu+1}^{\prime[\mu]}}\prod_{\xi=1}^{t}\delta_{a_{\xi,\,\sim\mu+1}^{[\mu]}\alpha_{P^{[\mu]}(\xi),\,\sim\mu+1}^{[\mu]}}\delta_{b_{\xi,\,\sim\mu+1}^{[\mu]}\beta_{P^{\prime[\mu]}(\xi),\,\sim\mu+1}^{[\mu]}}\right)\right.\nonumber \\
 & \left.\times\left(V_{Q_{\sim\mu}^{-1[\mu+1]}Q_{\sim\mu}^{\prime[\mu+1]}}\prod_{\xi=1}^{t}\delta_{a_{\xi,\,\sim\mu}^{[\mu+1]}\alpha_{Q^{[\mu+1]}(\xi),\,\sim\mu}^{[\mu+1]}}\delta_{b_{\xi,\,\sim\mu}^{[\mu+1]}\beta_{Q^{\prime[\mu+1]}(\xi),\,\sim\mu}^{[\mu+1]}}\right)\delta_{P^{[\mu]},\,Q^{[\mu+1]}}\delta_{P^{\prime[\mu]},\,Q^{\prime[\mu+1]}}\right\} .
\end{align}
Using the identity
\begin{align}
 & \prod_{\mu=1}^{L}\left\{ \sum_{P^{[\mu]},\,P^{\prime[\mu]}}\sum_{Q^{[\mu+1]},\,Q^{\prime[\mu+1]}}F\left(P^{[\mu]},\,P^{\prime[\mu]},\,Q^{[\mu+1]},\,Q^{\prime[\mu+1]}\right)\right\} \nonumber \\
= & \sum_{P^{[1]},\,P^{\prime[1]}}\sum_{Q^{[1]},\,Q^{\prime[1]}}\cdots\sum_{P^{[L]},\,P^{\prime[L]}}\sum_{Q^{[L]},\,Q^{\prime[L]}}F\left(P^{[1]},\,P^{\prime[1]},\,Q^{[2]},\,Q^{\prime[2]}\right)\cdots F\left(P^{[L]},\,P^{\prime[L]},\,Q^{[L+1]},\,Q^{\prime[L+1]}\right), \label{eq:Prod-Sum-Id-PQ}
\end{align}
we find 
\begin{align}
 & \prod_{\mu=1}^{L}\langle\mathsf{U}_{\bm{a}_{1}^{[\mu,\,\mu+1]}\bm{b}_{1}^{[\mu,\,\mu+1]}}\cdots\mathsf{U}_{\bm{a}_{t}^{[\mu,\,\mu+1]}\bm{b}_{t}^{[\mu,\,\mu+1]}}\mathsf{U}_{\bm{\alpha}_{1}^{[\mu,\,\mu+1]}\bm{\beta}_{1}^{[\mu,\,\mu+1]}}^{*}\cdots\mathsf{U}_{\bm{\alpha}_{t}^{[\mu,\,\mu+1]}\bm{\beta}_{t}^{[\mu,\,\mu+1]}}^{*}\rangle\nonumber \\
= & \sum_{P^{[1]},\,P^{\prime[1]}}\sum_{Q^{[1]},\,Q^{\prime[1]}}\cdots\sum_{P^{[L]},\,P^{\prime[L]}}\sum_{Q^{[L]},\,Q^{\prime[L]}}\prod_{\mu=1}^{L}\left(V_{P^{-1[\mu]}P^{\prime[\mu]}}\prod_{\xi=1}^{t}\delta_{a_{\xi,\,\sim\mu+1}^{[\mu]}\alpha_{P^{[\mu]}(\xi),\,\sim\mu+1}^{[\mu]}}\delta_{b_{\xi,\,\sim\mu+1}^{[\mu]}\beta_{P^{\prime[\mu]}(\xi),\,\sim\mu+1}^{[\mu]}}\right)\nonumber \\
\times & \left(V_{Q^{-1[\mu]}Q^{\prime[\mu]}}\prod_{\xi=1}^{t}\delta_{a_{\xi,\,\sim\mu-1}^{[\mu]}\alpha_{Q^{[\mu]}(\xi),\,\sim\mu-1}^{[\mu]}}\delta_{b_{\xi,\,\sim\mu-1}^{[\mu]}\beta_{Q^{\prime[\mu]}(\xi),\,\sim\mu-1}^{[\mu]}}\right)\left(\prod_{\mu=1}^{L}\delta_{P^{[\mu]},\,Q^{[\mu+1]}}\delta_{P^{\prime[\mu]},\,Q^{\prime[\mu+1]}}\right).
\end{align}
For the sake of convenience, we introduce the short hand notation
\begin{equation}
\mathcal{B}[\{P^{[\mu]},\,Q^{[\mu]}\}]=\prod_{\mu=1}^{L}\delta_{P^{[\mu]},\,Q^{[\mu+1]}}\delta_{P^{\prime[\mu]},\,Q^{\prime[\mu+1]}}.
\end{equation}
When the multi-site Weingarten formula is applied to the linear map
$f$, we get
\begin{align}
 & \langle\mathrm{Tr}(UA_{1}\cdots UA_{t}U^{\dagger}B_{1}\cdots U^{\dagger}B_{t})\rangle\nonumber \\
= & \prod_{\mu=1}^{L}\langle U_{\bm{a}_{1}^{[\mu,\,\mu+1]}\bm{b}_{1}^{[\mu,\,\mu+1]}}\cdots U_{\bm{a}_{t}^{[\mu,\,\mu+1]}\bm{b}_{t}^{[\mu,\,\mu+1]}}U_{\bm{\alpha}_{1}^{[\mu,\,\mu+1]}\bm{\beta}_{1}^{[\mu,\,\mu+1]}}^{*}\cdots U_{\bm{\alpha}_{t}^{[\mu,\,\mu+1]}\bm{\beta}_{t}^{[\mu,\,\mu+1]}}^{*}\rangle\nonumber \\
\times & \prod_{\mu=1}^{L}\delta_{a_{1,\,\sim\mu-\sigma(\mu)}^{[\mu]}b_{1,\,\sim\mu+\sigma(\mu)}^{[\mu]}}\cdots\prod_{\mu=1}^{L}\delta_{a_{t,\,\sim\mu-\sigma(\mu)}^{[\mu]}b_{t,\,\sim\mu+\sigma(\mu)}^{[\mu]}}\prod_{\mu=1}^{L}\delta_{\alpha_{1,\,\sim\mu-\sigma(\mu)}^{[\mu]}\beta_{1,\,\sim\mu+\sigma(\mu)}^{[\mu]}}\cdots\prod_{\mu=1}^{L}\delta_{\alpha_{t,\,\sim\mu-\sigma(\mu)}^{[\mu]}\beta_{t,\,\sim\mu+\sigma(\mu)}^{[\mu]}}\nonumber \\
\times & \prod_{\mu=1}^{L}[A_{1}^{[\mu]}]_{b_{1,\sim\mu-\sigma(\mu)}^{[\mu]}a_{2,\sim\mu+\sigma(\mu)}^{[\mu]}}\prod_{\mu=1}^{L}[A_{2}^{[\mu]}]_{b_{2,\sim\mu-\sigma(\mu)}^{[\mu]}a_{3,\sim\mu+\sigma(\mu)}^{[\mu]}}\cdots\prod_{\mu=1}^{L}[A_{t}^{[\mu]}]_{b_{t,\sim\mu-\sigma(\mu)}^{[\mu]}\beta_{1,\sim\mu-\sigma(\mu)}^{[\mu]}}\nonumber \\
\times & \prod_{\mu=1}^{L}[B_{1}^{[\mu]}]_{\alpha_{1,\sim\mu+\sigma(\mu)}^{[\mu]}\beta_{2,\sim\mu-\sigma(\mu)}^{[\mu]}}\prod_{\mu=1}^{L}[B_{2}^{[\mu]}]_{\alpha_{2,\sim\mu+\sigma(\mu)}^{[\mu]}\beta_{3,\sim\mu-\sigma(\mu)}^{[\mu]}}\cdots\prod_{\mu=1}^{L}[B_{t}^{[\mu]}]_{\alpha_{t,\sim\mu+\sigma(\mu)}^{[\mu]}a_{1,\sim\mu+\sigma(\mu)}^{[\mu]}}.\nonumber \\
= & \sum_{P^{[1]},\,P^{\prime[1]}}\sum_{Q^{[1]},\,Q^{\prime[1]}}\cdots\sum_{P^{[L]},\,P^{\prime[L]}}\sum_{Q^{[L]},\,Q^{\prime[L]}}\mathcal{B}[\{P^{[\mu]},\,Q^{[\mu]}\}]\nonumber \\
\times & \prod_{\mu=1}^{L}\left\{ V_{P^{-1[\mu]}P^{\prime[\mu]}}V_{Q^{-1[\mu]}Q^{\prime[\mu]}}\prod_{\xi=1}^{t}\delta_{a_{\xi,\,\sim\mu+1}^{[\mu]}\alpha_{P^{[\mu]}(\xi),\,\sim\mu+1}^{[\mu]}}\delta_{b_{\xi,\,\sim\mu+1}^{[\mu]}\beta_{P^{\prime[\mu]}(\xi),\,\sim\mu+1}^{[\mu]}}\prod_{\xi=1}^{t}\delta_{a_{\xi,\,\sim\mu-1}^{[\mu]}\alpha_{Q^{[\mu]}(\xi),\,\sim\mu-1}^{[\mu]}}\delta_{b_{\xi,\,\sim\mu-1}^{[\mu]}\beta_{Q^{\prime[\mu]}(\xi),\,\sim\mu-1}^{[\mu]}}\right.\nonumber \\
\times & \delta_{a_{1,\,\sim\mu-\sigma(\mu)}^{[\mu]}b_{1,\,\sim\mu+\sigma(\mu)}^{[\mu]}}\cdots\delta_{a_{t,\,\sim\mu-\sigma(\mu)}^{[\mu]}b_{t,\,\sim\mu+\sigma(\mu)}^{[\mu]}}\delta_{\alpha_{1,\,\sim\mu-\sigma(\mu)}^{[\mu]}\beta_{1,\,\sim\mu+\sigma(\mu)}^{[\mu]}}\cdots\delta_{\alpha_{t,\,\sim\mu-\sigma(\mu)}^{[\mu]}\beta_{t,\,\sim\mu+\sigma(\mu)}^{[\mu]}}\nonumber\\
\times & [A_{1}^{[\mu]}]_{b_{1,\sim\mu-\sigma(\mu)}^{[\mu]}a_{2,\sim\mu+\sigma(\mu)}^{[\mu]}}[A_{2}^{[\mu]}]_{b_{2,\sim\mu-\sigma(\mu)}^{[\mu]}a_{3,\sim\mu+\sigma(\mu)}^{[\mu]}}\cdots[A_{t}^{[\mu]}]_{b_{t,\sim\mu-\sigma(\mu)}^{[\mu]}\beta_{1,\sim\mu-\sigma(\mu)}^{[\mu]}}[B_{1}^{[\mu]}]_{\alpha_{1,\sim\mu+\sigma(\mu)}^{[\mu]}\beta_{2,\sim\mu-\sigma(\mu)}^{[\mu]}}[B_{2}^{[\mu]}]_{\alpha_{2,\sim\mu+\sigma(\mu)}^{[\mu]}\beta_{3,\sim\mu-\sigma(\mu)}^{[\mu]}}[B_{t}^{[\mu]}]_{\alpha_{t,\sim\mu+\sigma(\mu)}^{[\mu]}a_{1,\sim\mu+\sigma(\mu)}^{[\mu]}}.
\end{align}
Upon introducing the compact notation 
\begin{align}
T_{P^{[\mu]}P^{\prime[\mu]}Q^{[\mu]}Q^{\prime[\mu]}} & \equiv\prod_{\xi=1}^{t}\delta_{a_{\xi,\,\sim\mu}^{[\mu]}\alpha_{P^{[\mu]}(\xi),\,\sim\mu+1}^{[\mu]}}\delta_{b_{\xi,\,\sim\mu+1}^{[\mu]}\beta_{P^{\prime[\mu]}(\xi),\,\sim\mu+1}^{[\mu]}}\prod_{\xi=1}^{t}\delta_{a_{\xi,\,\sim\mu-1}^{[\mu]}\alpha_{Q^{[\mu]}(\xi),\,\sim\mu-1}^{[\mu]}}\delta_{b_{\xi,\,\sim\mu-1}^{[\mu]}\beta_{Q^{\prime[\mu]}(\xi),\,\sim\mu-1}^{[\mu]}}\nonumber \\
 & \times\prod_{\xi=1}^{t}\delta_{a_{\xi,\,\sim\mu-\sigma(\mu)}^{[\mu]}b_{\xi\,\sim\mu+\sigma(\mu)}^{[\mu]}}\prod_{\xi=1}^{t}\delta_{\alpha_{\xi,\,\sim\mu-\sigma(\mu)}^{[\mu]}\beta_{\xi,\,\sim\mu+\sigma(\mu)}^{[\mu]}}\nonumber \\
 & \times[A_{1}^{[\mu]}]_{b_{1,\sim\mu-\sigma(\mu)}^{[\mu]}a_{2,\sim\mu+\sigma(\mu)}^{[\mu]}}[A_{2}^{[\mu]}]_{b_{2,\sim\mu-\sigma(\mu)}^{[\mu]}a_{3,\sim\mu+\sigma(\mu)}^{[\mu]}}\cdots[A_{t}^{[\mu]}]_{b_{t,\sim\mu-\sigma(\mu)}^{[\mu]}\beta_{1,\sim\mu-\sigma(\mu)}^{[\mu]}}[B_{1}^{[\mu]}]_{\alpha_{1,\sim\mu+\sigma(\mu)}^{[\mu]}\beta_{2,\sim\mu-\sigma(\mu)}^{[\mu]}}[B_{2}^{[\mu]}]_{\alpha_{2,\sim\mu+\sigma(\mu)}^{[\mu]}\beta_{3,\sim\mu-\sigma(\mu)}^{[\mu]}}[B_{t}^{[\mu]}]_{\alpha_{t,\sim\mu+\sigma(\mu)}^{[\mu]}a_{1,\sim\mu+\sigma(\mu)}^{[\mu]}},\label{eq:T-PP-Many-body}
\end{align}
we find 
\begin{equation}
\langle\mathrm{Tr}(UA_{1}\cdots UA_{t}U^{\dagger}B_{1}\cdots U^{\dagger}B_{t})\rangle=\sum_{P^{[1]},\,P^{\prime[1]}}\sum_{Q^{[1]},\,Q^{\prime[1]}}\cdots\sum_{P^{[L]},\,P^{\prime[L]}}\sum_{Q^{[L]},\,Q^{\prime[L]}}\mathcal{B}[\{P^{[\mu]},\,Q^{[\mu]}\}]\prod_{\mu=1}^{L}V_{P^{-1[\mu]}P^{\prime[\mu]}}V_{Q^{-1[\mu]}Q^{\prime[\mu]}}T_{P^{[\mu]}P^{\prime[\mu]}Q^{[\mu]}Q^{\prime[\mu[}}.\label{eq:Wg-ProdTr-Manybody-g1}
\end{equation}
\end{widetext}
This observation also generalize to the case $f$
is product of $g$ group trace-products, i.e., Eq.~(\ref{eq:Wg-ProdTr-Manybody}).

\section{Derivation of the upper bounds of the Lipschitz constant for the RMM and RQC cases\label{sec:upper-bounds}}
\subsection{The RMM case}

For the control protocol, we need to consider the QFI as a function
of the Haar random unitary $U$ and apply Observation~\ref{obs:bi-invariance}
to absorb the sensing gate $W$ into $U$ to get the following
function
\begin{equation}
\tilde{F}^{Q}[U]=4\mathrm{Var}\left(\tilde{G}[U]\right)_{\rho_{0}},
\end{equation}
where 
\begin{equation}
\tilde{G}[U]\equiv\sum_{s=1}^{t}U^{\dagger s}H_{0}U^{s}=\sum_{s=1}^{t}H_{\mathrm{sp}}(s)\label{eq:tilde-G-def},
\end{equation}
and $H_{\mathrm{sp}}(s)\equiv U^{\dagger s}H_{0}U^{s}$.

Note that if there was only one summation in Eq.~\eqref{eq:tilde-G-def},
$\tilde{G}[U]$ and $\tilde{F}^{Q}[U]$ would become the generator and the
QFI for the state-preparation protocol, respectively.
Upon introducing
\begin{equation}
Y[U]\equiv\tilde{G}[U]-\mathrm{Tr}\left(\rho_{0}\tilde{G}[U]\right),\label{eq:Y-def}
\end{equation}
 we can express 
\begin{equation}
\tilde{F}^{Q}[U]=4\mathrm{Tr}\left(\rho_{0}Y^{2}[U]\right).
\end{equation}
It is straightforward to calculate that
\begin{align}
\frac{d\tilde{F}^{Q}[e^{-\mathrm{i}X\epsilon}U]}{d\epsilon}\bigg|_{\epsilon=0} & =4\mathrm{Tr}\left[\{\rho_{0},\,Y[U]\}Y^{(1)}[U,X]\right]\nonumber \\
 & =4\mathrm{Tr}\left[\{\rho_{0},\,Y[U]\}\tilde{G}^{(1)}[U,\,X]\right],
\end{align}
where we have used the fact that $\mathrm{Tr}\left[\{\rho_{0},\,Y[U]\}\right]=0$,
\begin{align}
Y^{(1)}[U,X] & \equiv\frac{dY[e^{-\mathrm{i}X\epsilon}U)}{d\epsilon}\big|_{\epsilon=0}\nonumber \\
 & =\tilde{G}^{(1)}[U,X]-\mathrm{Tr}\left(\rho_{0}\tilde{G}^{(1)}[U,X]\right),\label{eq:Y1-def}
\end{align}
and
\begin{equation}
\tilde{G}^{(1)}[U,X]\equiv\frac{d\tilde{G}[e^{-\mathrm{i}X\epsilon}U]}{d\epsilon}\big|_{\epsilon=0}.\label{eq:G1-def}
\end{equation}
Now instead of computing $\tilde{G}^{(1)}[U,X]$ by brute force, let
us employ a trick that is generalizable to the RQC case. Let
us define 
\begin{equation}
U(\epsilon)\equiv e^{-\mathrm{i}X\epsilon}U\approx\left(\mathbb{I}-\mathrm{i}\epsilon X\right)U.
\end{equation}
Then it can be readily seen that 
\begin{align}
U^{s}(\epsilon) & =(\mathbb{I}-\mathrm{i}\epsilon X)U(\mathbb{I}-\mathrm{i}\epsilon X)U\cdots(\mathbb{I}-\mathrm{i}\epsilon X)U\nonumber \\
 & =U^{s}-\mathrm{i}\epsilon\sum_{\tau=0}^{s-1}U^{\tau}XU^{s-\tau}\nonumber \\
 & =U^{s}(\mathbb{I}-\mathrm{i}\epsilon\sum_{\tau=0}^{s-1}X(s-\tau)),
\end{align}
where 
\begin{equation}
X(\tau)\equiv U^{\dagger\tau}XU^{\tau}.
\end{equation}
Therefore 
\begin{equation}
\tilde{G}[e^{-\mathrm{i}X\epsilon}U]\approx G[U]+\mathrm{i}\epsilon\sum_{s=1}^{t}\sum_{\tau=0}^{s-1}X(s-\tau)H_{\mathrm{sp}}(s),
\end{equation}
and 
\begin{equation}
\tilde{G}^{(1)}[U,X]=\mathrm{i}\epsilon\sum_{s=1}^{t}\sum_{\tau=0}^{s-1}[X(s-\tau),\,H_{\mathrm{sp}}(s)].
\end{equation}
Using the H\"older inequalities, $|\mathrm{Tr}(A^{\dagger}B)|\le\|AB\|_{1}\leq\|A\|_{1}\|B\|_{\infty}$,
we obtain
\begin{align}
\bigg|\frac{d\tilde{F}^{Q}[e^{-\mathrm{i}X\epsilon}U]}{d\epsilon}\bigg|_{\epsilon=0} & =4\big|\mathrm{Tr}\left[\{\rho_{0},\,Y[U]\}\tilde{G}^{(1)}[U,\,X]\right]\big|\nonumber \\
 & \leq4\|\{\rho_{0},\,Y[U]\}\|_{1}\|\tilde{G}^{(1)}[U,\,X]\|_{\infty}\nonumber \\
 & \leq8\|\rho_{0}\|_{1}\|Y[U]\|_{\infty}\|\tilde{G}^{(1)}[U,\,X]\|_{\infty}.
\end{align}
Furthermore, 
\begin{align}
 & \|Y[U]\|_{\infty}\nonumber \\
= & \|\tilde{G}[U]-\mathrm{Tr}\left(\rho_{0}\tilde{G}[U]\right)\|_{\infty}\leq\|\tilde{G}[U]\|_{\infty}+\big|\mathrm{Tr}\left(\rho_{0}\tilde{G}[U]\right)\big|\nonumber \\
\leq & \|\tilde{G}[U]\|_{\infty}+\|\rho_{0}\|_{1}\|\tilde{G}[U]\|_{\infty}=2\|\sum_{s=1}^{t}U^{\dagger s}H_{0}U^{s}\|_{\infty}\leq2t\|H_{0}\|_{\infty},
\end{align}
where we have used the triangle inequality, the H\"older inequality
and the isometric invariance of the Schatten norms. Similarly, from Eq.~\eqref{eq:G1-def}
we can obtain
\begin{align}
 & \|\tilde{G}^{(1)}[U,\,X]\|_{\infty}\nonumber \\
\leq & \sum_{s=1}^{t}\sum_{\tau=0}^{s-1}\left(\|X(s-\tau)H_{\mathrm{sp}}(s)\|_{\infty}+\|H_{\mathrm{sp}}(s)X(s-\tau)\|_{\infty}\right)\nonumber \\
= & t(t+1)\|X\|_{\infty}\|\|H_{0}\|_{\infty}.
\end{align}
Therefore, upon noting the monotone property of the Schatten norm,
we conclude that
\begin{equation}
\bigg|\frac{d\tilde{F}^{Q}[e^{-\mathrm{i}X\epsilon}U]}{d\epsilon}\bigg|_{\epsilon=0}\leq16t^{2}(t+1)\|H_{0}\|_{\infty}\|X\|_{\mathrm{HS}}.
\end{equation}
On the the other hand, since $\tilde{F}^{Q}$ is invariant under the
shifting of $H_{0}$ by a constant, for the control protocol we know that
\begin{equation}
\mathcal{L}_{\mathrm{ctr-RMM}}\leq16t^{2}(t+1)\inf_{c}\|H_{0}-c\|_{\infty}=4t^{2}(t+1)\Delta_{H_{0}},
\end{equation}
where $\Delta_{H_{0}}$ is the spectrum gap. For the state-preparation protocol, one just need to replace
$\tilde{G}[U]$ with $U^{\dagger t}H_{0}U^{t}$ and perform the same
analysis to obtain
\begin{equation}
\|Y[U]\|_{\infty}\leq2\|H_{0}\|_{\infty},\:\|\tilde{G}^{(1)}[U,\,X]\|_{\infty}\leq2t\|X\|_{\infty}\|H_{0}\|_{\infty},
\end{equation}
and finally arrive at
\begin{equation}
\mathcal{L}_{\mathrm{sp-RMM}}\leq32t\inf_{c}\|H_{0}-c\|_{\infty}=8t\Delta_{H_{0}}.
\end{equation}

\subsection{The RQC case}

Let us now consider the control protocol in the RQC case. Similarly
to the control protocol of the RMT case, according to Observation~\ref{obs:bi-invariance}, to compute the Lipschitz constant,
we can absorb the local sensing gate into the two-site unitary and
consider 
\begin{equation}
\tilde{F}^{Q}[\{\mathsf{U}^{[\mu,\mu+1]}\}_{\mu=1}^{L}]=4\mathrm{Var}\left(\tilde{G}[\{\mathsf{U}^{[\mu,\mu+1]}\}_{\mu=1}^{L})]\right)_{\rho_{0}},
\end{equation}
where 
\begin{align}
\tilde{G}[\{\mathsf{U}^{[\mu,\mu+1]}\}_{\mu=1}^{L}] & \equiv\sum_{s=1}^{t}\sum_{\mu=1}^{L}\left(\prod_{\ell=1}^{L/2}\mathsf{U}^{[2\ell-1,2\ell]\dagger}\prod_{k=1}^{L/2}\mathsf{U}^{[2k,2k+1]\dagger}\right)^{s}h_{0}^{[\mu]}\nonumber \\
 & \times\left(\prod_{k=1}^{L/2}\mathsf{U}^{[2k,2k+1]}\prod_{\ell=1}^{L/2}\mathsf{U}^{[2\ell-1,2\ell]}\right)^{s}.
\end{align}
For the sake of simplicity, we introduce 
\begin{equation}
U=U_{\mathrm{e}}U_{\mathrm{o}},\,U(\epsilon)=e^{\mathrm{i}X_{\mathrm{e}}\epsilon}U_{\mathrm{e}}e^{\mathrm{i}X_{\mathrm{o}}\epsilon}U_{\mathrm{o}},\,H_{0}=\sum_{\mu=1}^L h_{0}^{[\mu]},\label{eq:U-H0-RQC}
\end{equation}
where 
\begin{align}
U_{\mathrm{e}}\equiv\prod_{k=1}^{L/2}\mathsf{U}^{[2k,2k+1]},\, & U_{\mathrm{o}}\equiv\prod_{\ell=1}^{L/2}\mathsf{U}^{[2\ell-1,2\ell]},\\
X_{\mathrm{e}}\equiv\sum_{k=1}^{L/2}X_{[2k,2k+1]},\, & X_{\mathrm{o}}\equiv\sum_{\ell=1}^{L/2}X_{[2\ell-1,2\ell]}.
\end{align}
It is straightforward to calculate 
\begin{align}
U(\epsilon) & =e^{-\mathrm{i}X_{\mathrm{e}}\epsilon}U_{\mathrm{e}}e^{-\mathrm{i}X_{\mathrm{o}}\epsilon}U_{\mathrm{o}}\approx(\mathbb{I}-\mathrm{i}\epsilon X_{\mathrm{e}})U_{\mathrm{e}}(\mathbb{I}-\mathrm{i}\epsilon X_{\mathrm{o}})U_{\mathrm{o}}\nonumber \\
 & \approx(\mathbb{I}-\mathrm{i}\epsilon X_{\mathrm{eo}})U\approx e^{-\mathrm{i}\epsilon X_{\mathrm{eo}}U},
\end{align}
where 
\begin{equation}
X_{\mathrm{eo}}\equiv X_{\mathrm{e}}+U_{\mathrm{e}}X_{\mathrm{o}}U_{\mathrm{e}}^{\dagger}\label{eq:Xeo-def}.
\end{equation}

With above notations, one can identity $\tilde{G}[\{\mathsf{U}^{[\mu,\mu+1]}\}_{\mu=1}^{L}]$,
$\tilde{G}[\{e^{-\mathrm{i}X_{[\mu,\mu+1]}\epsilon}\mathsf{U}^{[\mu,\mu+1]}\}]$
with $\tilde{G}[U]$ and $\tilde{G}[e^{-\mathrm{i}\epsilon X_{\mathrm{eo}}U}]$ respectively, where
$\tilde{G}[U]$ is defined in Eq.~\eqref{eq:tilde-G-def}, but with
$U$, $H_{0}$, and $X_{\mathrm{eo}}$ defined in Eq.~\eqref{eq:U-H0-RQC}
and Eq.~\eqref{eq:Xeo-def}, respectively. Then we follow similar
calculations as in the RMM case and find
\begin{align}
\tilde{G}^{(1)}[U,X_{\mathrm{eo}}] & \equiv\frac{d\tilde{G}[e^{-\mathrm{i}\epsilon X_{\mathrm{eo}}U}]}{d\epsilon}\big|_{\epsilon=0}\nonumber \\
 & =\mathrm{i}\epsilon\sum_{s=1}^{t}\sum_{\tau=0}^{s-1}[X_{\mathrm{eo}}(s-\tau),\,H_{\mathrm{sp}}(s)],\label{eq:G1-def-RQC}
\end{align}
and
\begin{equation}
\bigg|\frac{d\tilde{F}^{Q}[e^{-\mathrm{i}X_{\mathrm{eo}}\epsilon}U]}{d\epsilon}\bigg|_{\epsilon=0}\leq16t^{2}(t+1)\|H_{0}\|_{\infty}\|X_{\mathrm{eo}}\|_{\infty}.
\end{equation}
On the other hand, we know that
\begin{align}
\|X_{\mathrm{eo}}\|_{\infty} & =\|X_{\mathrm{e}}+U_{\mathrm{e}}X_{\mathrm{o}}U_{\mathrm{e}}^{\dagger}\|_{\infty}\nonumber \\
 & \leq\|X_{\mathrm{e}}\|_{\infty}+\|X_{\mathrm{o}}\|_{\infty}\leq\sum_{\mu=1}^{L}\|X_{[\mu,\mu+1]}\|_{\infty},
\end{align}
\begin{equation}
\|\{X_{[\mu,\mu+1]}\}_{\mu=1}^{L}\|_{\mathrm{HS}}=\sum_{\mu=1}^{L}\|X_{[\mu,\mu+1]}\|_{\mathrm{HS}}\ge\sum_{\mu=1}^{L}\|X_{[\mu,\mu+1]}\|_{\infty},
\end{equation}
and $\|H_{0}\|_{\infty}=L\|h_{0}\|_{\infty}$, where $h_{0}\equiv h_{0}^{[\mu]}$. Therefore we find that for
the control protocol for the RQC case, 
\begin{equation}
\mathcal{L}_{\mathrm{ctr-RQC}}\leq4t^{2}(t+1)L\Delta_{h_{0}},
\end{equation}
and for the state-preparation protocol for the RQC case,
\begin{equation}
\mathcal{L}_{\mathrm{sp-RQC}}\leq8tL\Delta_{h_{0}}.
\end{equation}

\bibliography{RQC-Metrology}

\end{document}